\documentclass[reqno,12pt]{preprint}
\usepackage[full]{textcomp}

\usepackage[margin=0.5cm,font=small]{caption}

\pdfoutput=1 %for ArXiV and microtype 
\usepackage[expansion=true]{microtype}  
\usepackage{times}
\usepackage{tikz}

\usepackage{comment}

\usepackage{amsmath}
\usepackage{amssymb}
\usepackage{amsfonts}
\usepackage{hyperref}
\hypersetup{
    colorlinks=true,
    linkcolor=black,
    citecolor=black,
    filecolor=black,
    urlcolor=black,
}

\usepackage{breakurl}
\usepackage{enumitem}

\usepackage{mhequ} 
\usepackage{mhsymb} 
\usepackage{mhenvs}

\parindent1em
 
\setlength{\headheight}{15pt}

\colorlet{darkblue}{blue!90!black}
\colorlet{darkgreen}{green!40!black}

\def\Label#1{}

\tikzset{
	lbl/.style={
		label={[shift={(0.15,-0.1)}]\scriptsize #1}               
		},
	>=stealth,
	lblt/.style={
		label={[shift={(0,-0.08)}]\scriptsize #1}               
		},
	lblb/.style={
		label={[shift={(0,-0.8)}]\scriptsize #1}               
		},
	lbll/.style={
		label={[shift={(-0.35,-0.45)}]\scriptsize #1}               
		},
	lblr/.style={
		label={[shift={(+0.35,-0.44)}]\scriptsize #1}               
		},
	lbltr/.style={
		label={[shift={(+0.25,-0.15)}]\scriptsize #1}               
		}
}

\usetikzlibrary{patterns,arrows,decorations.pathreplacing,decorations.pathmorphing,snakes,shapes.misc}

\setlength{\marginparwidth}{3cm}

\setlength{\headheight}{15pt}

\pagenumbering{arabic}

\def\supp{\mathop{\mathrm{supp}}}

\newtheorem{assumption}[lemma]{Assumption}

\providecommand{\customgenericname}{}
\newcommand{\newcustomtheorem}[2]{%
  \newenvironment{#1}[1]
  {%
   \renewcommand\customgenericname{#2}%
   \renewcommand\theinnercustomgeneric{##1}%
   \innercustomgeneric
  }
  {\endinnercustomgeneric}
}
\newcustomtheorem{custcondition}{Condition}

\def\CB{{\mathcal B}}
\def\E{\mathbf{E}}
\newcommand{\ind}{\mathbf{1}}

\let\ssref=\ref
\def\fref#1{Figure~\ssref{#1}}

\def\cref#1{Condition~\ssref{#1}}
\def\Cref#1{Corollary~\ssref{#1}}
\def\eref#1{(\ssref{#1})}
\def\sref#1{\textsection\ssref{#1}}
\def\lref#1{Lemma~\ssref{#1}}
\def\rref#1{Remark~\ssref{#1}}
\def\tref#1{Theorem~\ssref{#1}}
\def\dref#1{Definition~\ssref{#1}}
\def\pref#1{Proposition~\ssref{#1}}
\def\aref#1{Assumption~\ssref{#1}}

\def\real{{\R}}
\def\integers{{\mathbf N}}

\def\argcdot{{\,\cdot\,}}
\newcommand{\cP}{{\ensuremath{\mathcal P}} }
\newcommand{\cT}{{\ensuremath{\mathcal T}} }

{\theorembodyfont{\rmfamily}\newtheorem{example}[lemma]{Example}}
\let\kappa=\varkappa
\let\theta=\vartheta

\date{}

\begin{document}
\title{Non-Equilibrium Steady States for Networks of Oscillators}
\author{No\'e Cuneo\inst{1}, Jean-Pierre Eckmann\inst{2}, Martin Hairer\inst{3}, Luc Rey-Bellet\inst{4}}
\institute{D\'epartement de math\'ematiques, Universit\'e de Cergy-Pontoise, CNRS UMR 8088
\and
D\'epartement de Physique Th\'eorique and Section de Math\'ematiques, University of Geneva
\and Department of Mathematics, Imperial College London
\and Department of Mathematics and Statistics, University of Massachusetts Amherst
}
%\titleindent=0.55cm

\maketitle

\begin{abstract}
Non-equilibrium steady states for chains of oscillators (masses) connected by harmonic and anharmonic springs and interacting with heat baths at different temperatures have been the subject of several studies.  In this paper, we show how some of the
results extend to more complicated networks.
We establish the existence and uniqueness of the non-equilibrium steady state, and show that the system converges to it at an exponential rate. The arguments are based on controllability and conditions on the potentials at infinity.
\end{abstract}

\section{Introduction} \Label{sec:intro}
 
The aim of this paper is to state and prove an extension of the results
of \cite{EPR,ReyTho02ECT,carmona_2007} to the multidimensional case. 
We consider a network of masses connected with springs (interaction
potentials), where some of
the masses interact with stochastic heat baths which can have different
temperatures. We also let each mass interact with a substrate
through some pinning potential.
 We will show that under conditions spelled out in
this paper, any such system has a unique \emph{non-equilibrium stationary state} (invariant measure). We show, moreover,
that the convergence to the steady state is exponential.
The proof follows in principle the ideas of \cite{EPR,EckHai}, but 
the controllability argument uses the more general conditions of \cite{Control},
and the compactness part relies on a Lyapunov function argument similar to \cite{ReyTho02ECT,carmona_2007}.

The new aspects of this paper are twofold: First, we deal with networks of 
springs connecting the masses, and not just with 
1-dimensional chains.
Second, we correct an oversight of \cite{ReyTho02ECT,carmona_2007} (see \rref{rem:liglpnoncompact})
by a careful analysis of the interplay between the coupling potentials, which hold the 
system together, and the pinning potentials, which prevent it from ``flying 
away''. This will require decomposing the phase space into two 
regions, depending on whether the pinning forces or the interaction forces dominate.
In the process, we also obtain sharper estimates on the rate of energy dissipation (see \rref{rem:sharper}).

The conditions on the system come in the following flavors:
\begin{enumerate}
  \setlength{\itemsep}{1pt}
  \setlength{\parskip}{0pt}
  \setlength{\parsep}{0pt}
  \item [C1]: The masses are sufficiently connected to the heat baths.
     \item [C2]: The interaction potentials are non-degenerate.
       \item [C3]: The potentials are homogeneous at infinity and coercive.
       \item [C4]: The limiting interaction forces are locally injective.
	    \item [C5]: The interaction potentials grow at least as fast as the pinning potentials.

\end{enumerate}

We will make C1--C5 precise in the next section. 
C1--C2 will be required to show the uniqueness of the steady state, and actually C1 will have to be more specific
than what common sense would seem to dictate.
C3--C5 will be further required for existence and exponential convergence.

As was shown in \cite{EPR, EPR2}, it is useful to assume that all
potentials are quadratic (at least at infinity). These
results have been extended 
in \cite{EckHai,RBT,carmona_2007} to potentials of polynomial growth subject to C5.

Without C5, decoupling phenomena (related to ``breathers'') may lead to subexponential convergence
to the invariant measure (and much more difficult proofs). In fact, the existence of the invariant measure
when the pinning potentials grow faster than the interaction potentials
has only been obtained for a chain of 3 masses so far (see the extensive
discussion in \cite{HaiMat}), and for some closely related chains of 
rotors, which correspond to the ``infinite pinning'' limit in a sense (see \cite{CEP_3rotors,CE_4rotors,CP_lowerbound}).

The paper is organized as follows. In \sref{sec:setup} we give the precise definitions of the conditions C1--C5 above and state the main result about existence and uniqueness of the invariant measure, and exponential convergence. The proof relies on two main ingredients: (1) H\"ormander's bracket condition, and  (2) a Lyapunov condition on the energy. 
In \sref{sec:genHam} we prove that these two ingredients lead to the desired result, and there we consider more general thermalized Hamiltonian systems (of which networks of oscillators are a special case). Finally, we check that under  C1--C5, networks of oscillators indeed satisfy H\"ormander's condition (\sref{s:hypo}) and the Lyapunov property (\sref{sec:Lyapunov}).

While the discussion in \sref{sec:genHam} is rather standard, the proofs in \sref{s:hypo} and \sref{sec:Lyapunov} are quite specific to our setup; the main technical difficulty there is that the heat baths do not act on all the oscillators directly, so that propagation within the network has to be carefully studied. This difficulty was already present, to a lesser extent, in the works on chains of oscillators mentioned above.

Finally, although we restrict ourselves to smooth potentials here, we mention that systems of particles with singular interactions (but with heat baths acting on all particles) have attracted significant attention lately (see for example \cite{Conrad2010,MR3717917,HerzogMattinglyErg17,Grothaus2015}).

\section{Setup and results}\label{sec:setup}

We consider a finite set $\CG$ of masses.
We denote by $q_v \in \real^n$
and  $p_v \in \real^n$ the position and momentum of each mass $v\in \CG$ (we assume $n\geq 1$).
The phase space is then $\Omega \equiv \real^{2|\CG| n}$, and we write $z = (p,q) = ((p_v)_{v\in \CG}, (q_v)_{v\in \CG})$.

We then introduce a set $\CE \subset \CG \times \CG$
of edges representing the springs, and consider Hamiltonians of the form
\begin{equ}\label{e:hamil}
 H(p,q) = \sum_{v \in \CG} \Bigl(\frac{p_v^2} 2 + U_v(q_v)\Bigr) +
 \sum_{e \in \CE} V_e(\delta q_e)~,
\end{equ}
where the functions $U_v$ are {\em pinning} potentials, the functions $V_e$
are {\em interaction} potentials, and where for $e=(v,v') \in \CE$
we write $\delta q_e = q_{v'}-q_v \in \real^n$.

We view $(\CG, \CE)$ as an undirected graph with
no loop (\ie, no edge of the kind $(v,v)$).
Since the edges $e = (v,v')$ and $\bar e = (v',v)$ are identified, we also adopt the convention that $V_{e}(q_{v'}-q_{v})$ and $V_{\bar e}(q_{v}-q_{v'})$ are equal and both express just one interaction, which appears only once in \eref{e:hamil}.

We now choose a subset $\CB\subset\CG$ of vertices where thermal baths act,
and for every $b\in \CB$ we assume that some temperature $T_b>0$ and
some coupling constant $\gamma_b>0$ are given. For $v\notin \CB$ we
set, for convenience, $\gamma_v = T_v =0$.
With this notation, our model is described by the system of stochastic differential
equations (one equation per $v\in\CG$):
\begin{equ}[e:SDE]
dq_v = p_v\,dt\;,\qquad dp_v = -\nabla_{q_v} H(p,q)\,dt - \gamma_v p_v\,dt + \sqrt{2T_v \gamma_v}\,dW_v(t)\;,
\end{equ}
where the $W_v$ are mutually independent standard $n$-dimensional Wiener processes. 
Note that for $v\notin \CB$, the last two terms in \eref{e:SDE} are absent. We denote by $z_t = (p_t, q_t)$ the solution of \eref{e:SDE}.
For each fixed initial condition $z\in \Omega$, we denote by $\P_z$ the probability distribution of the solutions to \eqref{e:SDE}, and by $\E_z$ the corresponding expectation. We also introduce the transition kernels $P_t(z, \argcdot)$ defined for all $z\in \Omega$, $t\geq 0$, and all Borel sets $A\subset \Omega$ by
\begin{equ}\label{eq:transitionkernel}
P_t(z, A) = \P_z\{z_t \in A\}~.	
\end{equ}

The Langevin heat baths used in  \eqref{e:SDE} are slightly simpler than those in \cite{EPR,EckHai,ReyTho02ECT}. There, the oscillators interact with some classical field theories which are initially Gibbs-distributed, and the (linear) coupling between the oscillators and the fields is chosen so that the latter can be integrated out. The resulting dynamics is similar to \eqref{e:SDE}, but instead of directly acting on the momenta as in \eqref{e:SDE}, the noise and dissipation act on some auxiliary variables which in turn interact with the momenta. The choice of Langevin heat baths (also made in \cite{carmona_2007}) is only for convenience, and the present analysis is easily transposed to the setup of \cite{EPR,EckHai,ReyTho02ECT}. 

We now make C1--C5 precise. We start with C1 in \sref{s:control}, which is in particular satisfied if the network is a chain with heat baths at both ends. In \sref{s:potential} and \sref{s:nearly}, we discuss C2--C5. An example of potentials satisfying C2--C5 that the reader might want to have in mind is\footnote{Throughout the paper, $\|\argcdot\|$ denotes the Euclidean norm.} $V_e=(1+\|\argcdot\|^2)^{\ell_i/2}$ and $U_v = (1+\|\argcdot\|^2)^{\ell_p/2}$, where $\ell_i, \ell_p \in \real$ satisfy $\ell_i \geq \ell_p \geq 2$. (If $\ell_i$ and $\ell_p$ are even numbers subject to the same condition, then one may also take $V_e=\|\argcdot\|^{\ell_i}$ and $U_v = \|\argcdot\|^{\ell_p}$.)

\subsection{Controllability through the springs}\label{s:control}

The following definition is useful: Let $B$ be a subset of $\CG$. 
We say that $B$ is \emph{nicely connected} to $v\in\CG\setminus B$ if 
there exists a vertex $b \in B$ and an edge of the form $(b,v)\in \CE$,
and there is \emph{no other edge} from $b$ to
  $\CG\setminus B$. We define $\CT B$ as the union of $B$ with its
nicely connected vertices in $\CG\setminus B$ (see \fref{fig:nicelyconnected}).
We denote by $\cT^2 \CB, \cT^3 \CB, \dots$ the iterates of this construction.

\begin{figure}[ht]
\begin{center}
\begin{tikzpicture}[line width=0.6pt, scale=1.1] 
 \tikzstyle{every node}=[circle,draw=black, fill=gray!60,inner sep=3pt,line width=0.8pt]

\node[black,lblr=$\mathrm a$] (a1) at (0.0,0) {};
\node[black,lblr=$\mathrm b$] (a2) at (0.8,0.4) {};
\node[black,lblr=$\mathrm c$] (a3) at (0.0,0.8) {};

\node[] (b1) at (1.8,-0.5) {};
\node[] (b2) at (2,1) {};
\node[] (b3) at (1,1.7) {};

\node[lbll=$\mathrm d$] (c1) at (-1,-0.5) {};
\node[lbll=$\mathrm e$] (c2) at (-.7,1.7) {};

\node[] (d2) at (-1.2,0.8) {};

\draw[] (0.3,0.4) ellipse (30pt and 30pt);

\draw (a1) -- (a2);
\draw (a1) -- (a3);

\draw (a2) -- (b1);
\draw (a2) -- (b2);
\draw (a2) -- (b3);
\draw (b2) -- (b3);

\draw (a3) -- (c2);
\draw (a1) -- (c1);
\draw (d2) -- (c2);
\draw (d2) -- (c1);

\draw (0.6,-0.35) node[fill=none, draw=none] {{\footnotesize$B$}};

\end{tikzpicture}
  \caption{In this network, if $B=\{\mathrm a,\mathrm b, \mathrm c\}$, then $\CT B = \{\mathrm a,\mathrm b,\mathrm c,\mathrm d,\mathrm e\}$.}
  \label{fig:nicelyconnected}
  \end{center}
\end{figure}
\vspace{-1em}

\begin{definition}
Let $(\CG,\CE)$ be as above. We say that $\CB \subset \CG$ \emph{controls} $(\CG, \CE)$
if there exists $k \ge 1$ such that
$\CT^k \CB = \CG$.
\end{definition}

This allows us to make C1 precise as\footnote{It was brought to our attention that the same condition appears in \cite[Section 2.2]{DymovAsym2017}.}
\begin{custcondition}{C1}\label{cond:spanTTk} The graph is connected and $\CB$ controls $(\CG, \CE)$.
\end{custcondition}

\begin{remark}
Note that connectedness is a trivial restriction, 
for if the graph is not connected the results apply to each connected component separately. Chains with heat baths at both ends (or even at just one end) obviously satisfy C1. So do some finite pieces of regular lattices, see \fref{fig:slabs}.
As some examples in \fref{fig:slabs} illustrate, controllability is,  unfortunately, not a monotone property in $\CE$: Adding edges, \ie,
  more springs, will sometimes improve controllability, and sometimes destroy it.
On the other hand, given $(\CG,\CE)$, controllability is a monotone property in the set 
$\CB$ of ``initially controlled'' nodes.
\end{remark}

\begin{figure}[ht]
\begin{center}

\begin{tikzpicture}[line width=0.6pt, scale=1.1] 
 \tikzstyle{every node}=[circle,draw=black, fill=gray!60,inner sep=3pt,line width=0.8pt]

\begin{scope}[shift={(0,5.2)}]

\node[black,lblt={\bf B}] (1) at (0,0) {};
\node[lblt=1] (2) at (1,0) {};
\node[lblt=2] (3) at (2,0) {};
\node[lblt=3] (4) at (3,0) {};
\node[lblt=4] (5) at (4,0) {};
\node[lblt=5] (6) at (5,0) {};
\node[lblt=4] (7) at (6,0) {};
\node[lblt=3] (8) at (7,0) {};
\node[lblt=2] (9) at (8,0) {};
\node[lblt=1] (10) at (9,0) {};
\node[black,lblt={\bf B}] (11) at (10,0) {};

\draw[->] (1) -- (2);
\draw[->] (2) -- (3);
\draw[->] (3) -- (4);
\draw[->] (4) -- (5);
\draw[->] (5) -- (6);
\draw[->] (7) -- (6);
\draw[->] (8) -- (7);
\draw[->] (9) -- (8);
\draw[->] (10) -- (9);
\draw[->] (11) -- (10);

\end{scope}

\begin{scope}[shift={(0,2.1)}]

\node[black,lbll={\bf B}] (1) at (0,0) {};
\node[lblb=1] (2) at (1,0) {};
\node[lblb=2] (3) at (2,0) {};
\node[lblb=1] (4) at (3,0) {};
\node[black,lblr={\bf B}] (5) at (4,0) {};
\node[black,lbll={\bf B}] (6) at (0,1) {};
\node[lbltr=1] (7) at (1,1) {};
\node[lbltr=2] (8) at (2,1) {};
\node[lbltr=1] (9) at (3,1) {};
\node[black,lblr={\bf B}] (10) at (4,1) {};
\node[black,lbll={\bf B}] (11) at (0,2) {};
\node[lblt=1] (12) at (1,2) {};
\node[lblt=2] (13) at (2,2) {};
\node[lblt=1] (14) at (3,2) {};
\node[black,lblr={\bf B}] (15) at (4,2) {};

\draw[->] (1) -- (2);
\draw[->] (2) -- (3);
\draw[->] (4) -- (3);
\draw[->] (5) -- (4);
\draw[->] (6) -- (7);
\draw[->] (7) -- (8);
\draw[->] (9) -- (8);
\draw[->] (10) -- (9);
\draw[->] (11) -- (12);
\draw[->] (12) -- (13);
\draw[->] (14) -- (13);
\draw[->] (15) -- (14);
\draw (1) -- (6);
\draw (2) -- (7);
\draw (3) -- (8);
\draw (4) -- (9);
\draw (5) -- (10);
\draw (6) -- (11);
\draw (7) -- (12);
\draw (8) -- (13);
\draw (9) -- (14);
\draw (10) -- (15);

\end{scope}

\begin{scope}[shift={(6,2.1)}]

\node[black,lbll={\bf B}] (1) at (0,0) {};
\node[lblb=2] (2) at (1,0) {};
\node[lblb=4] (3) at (2,0) {};
\node[lblb=5] (4) at (3,0) {};
\node[lblb=6] (5) at (4,0) {};
\node[black,lbll={\bf B}] (6) at (0,1) {};
\node[lbltr=1] (7) at (1,1) {};
\node[lbltr=3] (8) at (2,1) {};
\node[lbltr=5] (9) at (3,1) {};
\node[lbltr=6] (10) at (4,1) {};
\node[black,lbll={\bf B}] (11) at (0,2) {};
\node[lblt=2] (12) at (1,2) {};
\node[lblt=4] (13) at (2,2) {};
\node[lblt=5] (14) at (3,2) {};
\node[lblt=6] (15) at (4,2) {};
\draw[->] (1) -- (2);
\draw[->] (2) -- (3);
\draw[->] (3) -- (4);
\draw[->] (4) -- (5);
\draw[->] (6) -- (7);
\draw[->] (7) -- (8);
\draw[->] (8) -- (9);
\draw[->] (9) -- (10);
\draw[->] (11) -- (12);
\draw[->] (12) -- (13);
\draw[->] (13) -- (14);
\draw[->] (14) -- (15);

\draw (11) -- (7);
\draw (12) -- (8);

\draw (1) -- (6);
\draw (2) -- (7);
\draw (3) -- (8);
\draw (4) -- (9);
\draw (5) -- (10);
\draw (1) -- (7);
\draw (2) -- (8);
\draw (6) -- (11);
\draw (7) -- (12);
\draw (8) -- (13);
\draw (9) -- (14);
\draw (10) -- (15);

\end{scope}

\begin{scope}[shift={(6,-3.5)}]

\node[black,lblb={\bf B}] (1) at (0,0) {};
\node[lblb=?] (2) at (1,0) {};
\node[lblb=?] (3) at (2,0) {};
\node[lblb=1] (4) at (3,0) {};
\node[black,lblb={\bf B}] (5) at (4,0) {};
\node[black,lblt={\bf B}] (6) at (0,1) {};
\node[lblt=?] (7) at (1,1) {};
\node[lblt=?] (8) at (2,1) {};
\node[lblt=1] (9) at (3,1) {};
\node[black,lblt={\bf B}] (10) at (4,1) {};

\draw (1) -- (2);
\draw (2) -- (3);
\draw (4) -- (3);
\draw[->] (5) -- (4);
\draw (6) -- (7);
\draw (7) -- (8);
\draw (9) -- (8);
\draw[->] (10) -- (9);
\draw (1) -- (6);
\draw (2) -- (7);
\draw (3) -- (8);
\draw (4) -- (9);
\draw (5) -- (10);

\draw (1) -- (7);
\draw (3) -- (9);
\draw (2) -- (6);
\draw (4) -- (8);

\end{scope}

\begin{scope}[shift={(0,-3.5)}]
\node[black,lblt={\bf B}] (1) at (1,0.5) {};
\node[lblt=?] (2) at (2,1) {};
\node[black,lblt={\bf B}] (3) at (3,0.5) {};
\node[lblb=?] (4) at (2,0) {};
\end{scope}
\draw (1) -- (2);
\draw (2) -- (3);
\draw (1) -- (4);
\draw (4) -- (3);

\begin{scope}[shift={(0.27,-1.5)}]
\node[black,lbll={\bf B}] (1) at (0,0) {};
\node[black,lbll={\bf B}] (2) at (0,1) {};
\node[black,lbll={\bf B}] (3) at (0,2) {};
\node[lblr=1] (4) at (0.866,0.5) {};
\node[lblr=2] (5) at (0.866,1.5) {};
\node[lblr=3] (6) at (0.866,2.5) {};
\node[lblr=6] (7) at (1.732,0) {};
\node[lblr=5] (8) at (1.732,1) {};
\node[lblr=4] (9) at (1.732,2) {};

\node[lblr=7] (10) at (2.598,0.5) {};
\node[lblr=8] (11) at (2.598,1.5) {};
\node[lblr=9] (12) at (2.598,2.5) {};

\node[lblr=12] (13) at (3.464,0) {};
\node[lblr=11] (14) at (3.464,1) {};
\node[lblr=10] (15) at (3.464,2) {};

\end{scope}
\draw (1) -- (2);
\draw (2) -- (3);

\draw[->] (1) -- (4);
\draw (2) -- (4);
\draw[->] (2) -- (5);
\draw (3) -- (5);
\draw[->] (3) -- (6);

\draw (4) -- (5);
\draw (5) -- (6);

\draw[->] (4) -- (7);
\draw[->] (5) -- (8);
\draw[->] (6) -- (9);
\draw (5) -- (9);
\draw (4) -- (8);

\draw (7) -- (8);
\draw (8) -- (9);

\draw[->] (7) -- (10);
\draw (8) -- (10);
\draw[->] (8) -- (11);
\draw (9) -- (11);
\draw[->] (9) -- (12);

\draw (10) -- (11);
\draw (11) -- (12);

\draw[->] (10) -- (13);
\draw[->] (11) -- (14);
\draw[->] (12) -- (15);
\draw (11) -- (15);
\draw (10) -- (14);

\draw[->] (14) -- (13);
\draw[->] (15) -- (14);

\begin{scope}[shift={(5.9,-1.5)}]

\node[black,lbll={\bf B}] (a1) at (0.0,0) {};
\node[lblr=1] (a2) at (0.606217782649,0.35) {};
\node[lblr=1] (a3) at (0.606217782649,1.05) {};
\node[black,lbll={\bf B}] (a4) at (0.0,1.4) {};
\node[black,lbll={\bf B}] (a5) at (0.0,2.1) {};
\node[lblr=1] (a6) at (0.606217782649,2.45) {};
\node[lblr=2] (b1) at (1.2124355653,0) {};
\node[lblr=3] (b2) at (1.81865334795,0.35) {};
\node[lblr=3] (b3) at (1.81865334795,1.05) {};
\node[lblr=2] (b4) at (1.2124355653,1.4) {};
\node[lblr=2] (b5) at (1.2124355653,2.1) {};
\node[lblr=3] (b6) at (1.81865334795,2.45) {};
\node[lblr=4] (c1) at (2.4248711306,0) {};
\node[lblr=5] (c2) at (3.03108891325,0.35) {};
\node[lblr=5] (c3) at (3.03108891325,1.05) {};
\node[lblr=4] (c4) at (2.4248711306,1.4) {};
\node[lblr=4] (c5) at (2.4248711306,2.1) {};
\node[lblr=5] (c6) at (3.03108891325,2.45) {};
\node[lblr=6] (d1) at (3.63730669589,0) {};
\node[lblr=7] (d2) at (4.24352447854,0.35) {};
\node[lblr=7] (d3) at (4.24352447854,1.05) {};
\node[lblr=6] (d4) at (3.63730669589,1.4) {};
\node[lblr=6] (d5) at (3.63730669589,2.1) {};
\node[lblr=7] (d6) at (4.24352447854,2.45) {};
\draw[->] (a1) -- (a2);
\draw (a2) -- (a3);
\draw[->] (a2) -- (b1);
\draw[->] (a3) -- (b4);
\draw[->] (a4) -- (a3);
\draw (a4) -- (a5);
\draw[->] (a5) -- (a6);
\draw[->] (a6) -- (b5);
\draw[->] (b1) -- (b2);
\draw (b2) -- (b3);
\draw[->] (b2) -- (c1);
\draw[->] (b3) -- (c4);
\draw[->] (b4) -- (b3);
\draw (b4) -- (b5);
\draw[->] (b5) -- (b6);
\draw[->] (b6) -- (c5);
\draw[->] (c1) -- (c2);
\draw (c2) -- (c3);
\draw[->] (c2) -- (d1);
\draw[->] (c3) -- (d4);
\draw[->] (c4) -- (c3);
\draw (c4) -- (c5);
\draw[->] (c5) -- (c6);
\draw[->] (c6) -- (d5);
\draw[->] (d1) -- (d2);
\draw (d2) -- (d3);
\draw[->] (d4) -- (d3);
\draw (d4) -- (d5);
\draw[->] (d5) -- (d6);
\end{scope}

\end{tikzpicture}
  \caption{The elements of $\CB$ are labeled by ``B''. The numerical
  label $k$ indicates that the vertex is in $\CT^k \CB$ but not in
  $\CT^{k-1}\CB$ (with $\CT^0\CB=\CB$), and the
    uncontrollable elements are labeled by ``?''. The arrows indicate the growth of
  $\CT^k\CB$ as a function of $k$. The top five networks are controlled by $\CB$, while the bottom two are not.  The example in the lower left corner was used in \cite{ez}.}
  \label{fig:slabs}
  \end{center}
\end{figure}

\vspace{-1em}

\begin{remark}\Label{r:1d}
One always has the inequality $|\CT^{k+1} \CB| \le |\CT^{k}\CB| + |\CB|$.
Indeed, let $B_k$ be the set of vertices in $\CT^k\CB$ that are connected to
at least one other vertex in $\CG \setminus \CT^k \CB$.
It is then clear from the definition of $\CT$ that $|\CT^{k+1} \CB| \le |\CT^{k}\CB| + |B_k|$.
On the other hand, it follows from the definition of $\CT$ that,
for every ``newly added'' vertex $v$ in $\CT^{k+1} \CB \setminus \CT^{k}\CB$,
there must be at least one vertex $w$ in $B_k$ such that $v$ is the only element in $\CG\setminus\CT^k \CB$
that is connected to $w$. As a consequence, $|B_{k+1}| \le |B_k| \le |\CB|$ for every $k$, from which the claim
follows at once.

In a way, this remark says that the system is
effectively almost 1-dimensional with respect to the propagation of
information.
No point in $\CB$ and no point in $\CT^k\CB$ will ever
control more than one new point 
as one iterates from $k$ to $k+1$ above.
\end{remark}

Another criterion for the controllability of networks of interacting
oscillators was introduced in \cite{CE_controlling_2014}. While the results
in \cite{CE_controlling_2014} allow in some cases to control networks with more
general topologies, in particular some which do not satisfy \cref{cond:spanTTk},
they only apply to strictly anharmonic
polynomial potentials in 1D ($n=1$).

\subsection{Non-degenerate potentials}\label{s:potential}

We now discuss the conditions on the potentials $V_e$. The attentive reader
will note that, in fact, the non-degeneracy conditions below
are not necessary on all the
links, but only on those which are needed for \cref{cond:spanTTk} to hold.
This means, for example, that in
\fref{fig:slabs},
the potentials associated with the ``vertical'' springs may be degenerate.
We will not deal with this any further, and make the assumptions on
all $V_e$.

Given a multi-index $\alpha =(\alpha _1,\dots,\alpha_n)$  of
non-negative integers, we set $|\alpha |=\sum_{i=1}^n \alpha _i$,
and define $D^\alpha $ as the differential operator with $\alpha _i$
derivatives in the $i^{\rm th}$ direction of $\R^n$.
Given a potential $V\colon \R^n \to \R$, (\ie, any of the $V_e$) we
introduce the following notion of non-degeneracy \cite{ReyTho02ECT}. The idea is that the $V_e$ do not have ``infinitely flat'' pieces.

\begin{definition}\label{d:nondegenerate}
  A smooth potential $V:\R^n\to \R$ is \emph{non-degenerate} if there
  exists an $\ell<\infty$ such that the set of derivatives
$$
\{ D^\alpha  \nabla V (x)~:~ 1 \leq |\alpha| \le \ell \}
$$
spans $\R^n$ for \emph{every} $x\in\R^n$.
\end{definition}

We now have the following precise version of C2:
\begin{custcondition}{C2}\label{cond:nondegVe}
The interaction potentials $V_e$ are non-degenerate.
\end{custcondition}

\begin{example}
Any potential of the form $V(x)=\|x\|^{r}$ with $r=2, 4, 6, \dots$ is non-degenerate. The same is true of $V(x) = (1+\|x\|^2)^{r/2}$  with any real number $r > 0$. On the contrary, if $\|x\|$ is replaced by $|x_1|$ here, then the resulting potential is degenerate (unless $n=1$). 
\end{example}

\begin{remark}
	The condition in \dref{d:nondegenerate} allows for controllability in
the following sense:
Consider a \emph{given} continuous trajectory
 $\bar q \colon [0,1] \to \R^n$ and the problem
\begin{equ}[e:control]
\dot p_f(t) = -\nabla V\bigl(\bar q(t) + f(t)\bigr)\;
\end{equ}
with $p_f(0) = p_*$.
If $V$ is non-degenerate, then the set of solutions $p_f(1)$ of
\eref{e:control} at time $1$,  as $f$ is varied over all smooth
functions with $\sup_{t\le1} |f(t)|\le 1$,
contains an open (and in particular ``full-dimensional'') set. 	
\end{remark}

\subsection{Nearly homogeneous potentials}\label{s:nearly}

One of the difficulties with models of the type \eref{e:hamil}, \eref{e:SDE} is to show the existence
of a non-equilibrium steady state.
As was demonstrated in \cite{HaiMat,Hai}, this can be highly non-trivial, and even with ``nice'' potentials, there are
situations where the convergence to the steady state can be arbitrarily slow.

For the purpose of proving the existence of the steady state, a convenient class of interactions is given by potentials that behave at infinity like homogeneous functions.
We say that a function $\Psi \colon \R^n \to \R$ is homogeneous of degree\footnote{The degree $r$ is not assumed to be an integer. The restriction $r\geq 2$ is required for some of the results below.} $r \geq 2$ if
$\Psi(\lambda x) = \lambda^r \Psi(x)$ for every $\lambda > 0$ and every $x \in \R^n \setminus \{0\}$.
With this notion at hand, we give the following definition, which is slightly weaker than the one
in \cite{RBT}:

\begin{definition}
A smooth function $V\colon \R^n \to \R$ is said to be \emph{nearly homogeneous} of degree $r$ if
there exists a homogeneous (of degree $r$), differentiable function $V_\infty :\real^n \to \real$ such that $\nabla V_\infty$ is locally Lipschitz, and such that for all $0\leq |\alpha| \leq 1$,
\begin{equs}
\lim_{\lambda \to \infty} \sup_{\|x\| = 1}	\left|\frac{(D^\alpha V)(\lambda x)}{\lambda^{r-|\alpha|} } - D^\alpha V_\infty (x)\right| = 0~.
\end{equs}
\end{definition}

\begin{example}
If $V(x) = \|x\|^r$ with $r = 2, 4, 6, \dots$, then $V$ is nearly homogeneous. Moreover, for any real number $r\geq 2$, the potential $V(x) = (1+\|x\|^2)^{r/2}$ is nearly homogeneous. In both cases, $V_\infty(x) = \|x\|^r$.
\end{example}

\begin{remark}\label{replace510}
It is easy to see that nearly homogeneous functions (of degree
$r\ge2$) also satisfy some derived properties, for $0\le|\alpha |\le1$:
\begin{enumerate}
\item[(i)]
$
\lim_{\|x\|\to\infty} \|x\|^{|\alpha|-r}\left({D^\alpha V(x)}-D^\alpha V_\infty(x)\right)= 0$.
\item [(ii)]
$
|D^\alpha V(x)|	\leq C_V (1+\|x\|^{r-|\alpha|})$,
for some $C_V>0$.

\item [(iii)]
$
\lim_{\lambda \to \infty}\sup_{x\in K}	 \left( \lambda^{|\alpha|-r}(D^\alpha
  V)(\lambda x) -D^\alpha V_\infty(x)\right)=
0$, for every compact set $K$.

\item [(iv)]If $\inf_{\|x\| = 1} V_\infty(x)>0$, then 
$
V(x)	\geq C'_V \|x\|^{r}
$,
for some $C'_V>0$ when $\|x\|$ is large enough.

\end{enumerate}
\end{remark}

We can now define C3--C5 properly as follows.

\begin{custcondition}{C3}\label{cond:nearhom}The potentials $U_v$ are nearly homogeneous of degree $\ell_p \geq 2$ with limiting functions $U_{v, \infty}$, and the potentials $V_e$ are nearly homogeneous of degree $\ell_i \geq 2$ with limiting functions $V_{e, \infty}$. Moreover, the limiting potentials are coercive, \ie,  $\inf_{\|x\|=1} V_{e, \infty}(x) > 0$ and $\inf_{\|x\|=1} U_{v,\infty}(x) > 0$. 
\end{custcondition}

\begin{custcondition}{C4}\label{cond:limitingpotentials} The limiting
  interaction forces $-\nabla V_{e, \infty}$ are locally injective in
  the sense that for each  $e\in \CE$ and each $x\in \real^n$, we have
  $\nabla  V_{e, \infty}(x')\neq \nabla  V_{e, \infty}(x)$ for all $x'$ in
  a neighborhood of $x$.
\end{custcondition}

\begin{custcondition}{C5}\label{cond:lilp}The interaction and pinning powers satisfy $\ell_i \geq \ell_p$.
\end{custcondition}

Note that Conditions \ref{cond:nondegVe} and \ref{cond:limitingpotentials} are not comparable: the former guarantees that the forces $-\nabla V_e$ are locally {\em surjective} in a sense, and the latter guarantees that the limiting forces $-\nabla V_{e, \infty}$ are locally {\em injective}.

For example, consider a smooth homogeneous function $V: \real^3 \to \real$ given by $x^4/4 + y^2z^2/2$ on the set $M = \{(x, y, z)\in \real^3~:~z^2+y^2 \leq  x^2/10\}$. Then $\nabla V(x,y,z) = (x^3, yz^2, y^2z)$, and obviously $\{ D^\alpha  \nabla V (x)~:~ 1 \leq |\alpha| \le 3 \}$ spans $\real^3$ for all $(x,y,z)\in M$. However, $\nabla V(x,y,0) = (x^3, 0, 0)$, and thus $\nabla V$ is not locally injective.

There are specific systems for which \cref{cond:limitingpotentials} is not actually required, and
others for which it is, as we
illustrate in Remarks~\ref{rem:waiveCondC4} and \ref{rem:CondC4matters}.

\begin{remark} \cref{cond:limitingpotentials} holds for example if the $V_{e, \infty}$ are strictly convex. In particular if $n=1$, then the $V_{e, \infty}$ are automatically strictly convex, since they are homogeneous of degree $\ell_i \geq 2$ and coercive.
\end{remark}

\begin{remark}The requirement that all interaction potentials have the {\em same} degree $\ell_i$ is crucial. Indeed, if one of the interactions in the bulk of the network (\ie, involving two oscillators in $\CG \setminus \CB$) has a higher degree than the others, the system may find itself in a regime where the two corresponding oscillators oscillate in phase opposition and with a frequency much higher than the other natural frequencies of the system, leading to a decoupling phenomenon comparable to the situation in \cite{HaiMat}. This is again expected to lead to subgeometric convergence to the invariant measure and much more involved proofs.
\end{remark}

\begin{remark}As will be clear from the proofs in \sref{sec:Lyapunov}, it is actually not necessary for all the limiting pinning potentials $U_{v,\infty}$ to be coercive (or even to be non-zero). In fact, we only need the quantity defined in \eqref{eq:Uinfty} to be coercive.
\end{remark}

Without loss of generality, we also assume that the potentials $U_v$ and $V_e$ are non-negative (by the coercivity condition above,
this is always achievable by adding a constant).

\subsection{Main result}

Given the definitions of \sref{s:control}--\ref{s:nearly},
we can now state the main result.  In order to emphasize the role of each assumption,
we introduce the following (very weak) auxiliary condition.
\begin{custcondition}{CA}\label{cond:auxiliary}The Hamiltonian $H$ has compact level sets (\ie, the set $\{z: H(z) \leq K\}$ is compact for each $K>0$), and there exists some $\beta> 0 $ such that the function $\exp\bigl(-\beta H\bigr)$ is integrable on $\Omega$.
\end{custcondition}

\cref{cond:auxiliary} follows immediately from Condition \ref{cond:nearhom} (one can choose any $\beta > 0$).

\begin{theorem}\label{theo:main}The following holds.
\begin{enumerate}
\item 	Under Conditions \ref{cond:spanTTk}, \ref{cond:nondegVe} and  \ref{cond:auxiliary}, the system \eref{e:SDE} admits at most one invariant measure, and if it exists, it has a smooth density with respect to Lebesgue measure.
\item Under Conditions \ref{cond:spanTTk}, \ref{cond:nearhom}, \ref{cond:limitingpotentials} and \ref{cond:lilp}, the system  \eref{e:SDE} admits a least one invariant measure, and $e^{\theta H}$ is integrable with respect to it for all $0 < \theta < 1/T_{\rm max}$, with $T_{\rm max} = \max\{T_b: b\in \CB\}$.
\item Finally, assuming Conditions \ref{cond:spanTTk}--\ref{cond:lilp}, the system \eref{e:SDE} admits a unique invariant measure $\mu_\star$. Moreover, for all $0 < \theta < T_{\rm max}$, there are constants $C,c>0$ such that for every initial condition $z = (p,q) \in \Omega$ and all $t\geq 0$,
\begin{equ}[eq:convergencemu]
\sup_{f\in C(\Omega)\,:\,|f| \leq e^{\theta H}}\left|\E_z f(z_t) - \int f d \mu_\star\right| \leq C e^{\theta H(z)-c t}~.
\end{equ}
\end{enumerate}

\end{theorem}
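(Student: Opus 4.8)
The plan is to establish Theorem~\ref{theo:main} by combining two structural inputs—H\"ormander's bracket condition (hypoellipticity plus controllability) and a Lyapunov function for the energy—within a by-now standard Harris/Meyn–Tweedie framework, as announced in the introduction. Concretely, I would organize the proof around a general statement about thermalized Hamiltonian systems (this is the content of \sref{sec:genHam}): if (a) the transition kernels $P_t$ are strong Feller and topologically irreducible, and (b) there is a function $W \colon \Omega \to [1,\infty)$ with compact sublevel sets satisfying a drift inequality $(\mathcal L W)(z) \le -c\,W(z) + C\,\ind_{\mathcal K}(z)$ for some compact $\mathcal K$ (where $\mathcal L$ is the generator of \eref{e:SDE}), then there is a unique invariant measure $\mu_\star$ and geometric convergence in the weighted norm $\|\cdot\|_W$. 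The three parts of the theorem then follow by checking which hypotheses hold under which subsets of C1--C5.

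\textbf{Part 1 (uniqueness and smoothness).} Here I would show that under C1, C2, CA the process is strong Feller and irreducible. Strong Feller follows from hypoellipticity: the generator $\mathcal L = X_0 + \sum_{b\in\CB}\gamma_b T_b \partial_{p_b}^2$ (with $X_0$ the Hamiltonian drift vector field plus friction) satisfies H\"ormander's condition because the noise enters on the momenta $p_b$, $b\in\CB$, and iterated brackets with $X_0$ propagate non-degeneracy through the springs—this is exactly where C1 (controllability via $\mathcal T$) and C2 (non-degeneracy of the $V_e$, which via the controllability remark after \dref{d:nondegenerate} lets brackets span all directions at every point) are used; the detailed bracket computation is deferred to \sref{s:hypo}. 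Irreducibility (support of $P_t(z,\cdot)$ is all of $\Omega$) follows from an approximate-controllability argument for the control system obtained by replacing the noises with smooth controls, again powered by C1--C2; CA guarantees non-explosion and that the process is well defined for all time. Two distinct invariant measures would have to be mutually singular (by irreducibility + strong Feller, standard), yet equivalent, a contradiction; smoothness of the density is immediate from hypoellipticity of $\mathcal L^*$.

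\textbf{Parts 2 and 3 (existence, integrability, convergence).} The crux is constructing the Lyapunov function. I would try $W = \exp(\theta H_\epsilon)$ for a suitable modification $H_\epsilon$ of the Hamiltonian—following \cite{EckHai,ReyTho02ECT,carmona_2007}, one adds a small cross term of the form $\epsilon \sum_v a_v\,p_v\cdot q_v$ (or edgewise analogues) to $H$ so that, after applying $\mathcal L$, the friction terms $-\gamma_b p_b^2$ combine with the generated cross terms to produce a genuinely negative-definite contribution in \emph{all} directions, not just the baths' directions. Computing $\mathcal L e^{\theta H_\epsilon}$, the dangerous terms are $\theta^2 \cdot 2T_b\gamma_b\,(p_b)^2\,e^{\theta H_\epsilon}$ from the diffusion, which is why $\theta < 1/T_{\max}$ is needed; the dissipation $-\theta\gamma_b p_b^2$ must dominate it directly on the baths and be transported to the bulk through the $\epsilon$-cross terms. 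Existence of an invariant measure with $\int e^{\theta H}d\mu_\star < \infty$ then follows from Krylov–Bogolyubov applied to $W$-tightness, and \eref{eq:convergencemu} from the geometric ergodic theorem.

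\textbf{The main obstacle}—and the genuinely new part of the paper—is verifying the drift condition when the pinning potentials are weaker than the interaction potentials ($\ell_i \ge \ell_p$, i.e.\ C5), while controlling the error terms coming from ``nearly homogeneous'' rather than exactly homogeneous potentials (C3) and from the fact that baths act only on $\CB$, not all of $\CG$. The subtlety flagged in the introduction (the ``oversight'' of \cite{ReyTho02ECT,carmona_2007}) is that when $\|q_v - q_{v'}\|$ is large but individual $\|q_v\|$ are comparable, the interaction force $\nabla V_e(\delta q_e)$ can be large while doing no net work on the center of mass, so one cannot naively bound the transported dissipation. The fix is to split $\Omega$ into a region where pinning forces dominate and one where interaction forces dominate, use C4 (local injectivity of $-\nabla V_{e,\infty}$) to rule out the pathological ``breather-like'' configurations in the second region where the net interaction force nearly vanishes, and patch the two regional Lyapunov estimates. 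Making this dichotomy quantitative, uniformly in the scaling parameter $\lambda$ hidden in the near-homogeneity limits, and checking that the resulting $W$ has compact sublevel sets, is where essentially all the work of \sref{sec:Lyapunov} goes; I would do the exactly-homogeneous case first and then absorb the near-homogeneous corrections using Remark~\ref{replace510}.
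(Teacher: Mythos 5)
Your high-level architecture matches the paper: a general Harris-type theorem for thermalized Hamiltonian systems driven by two inputs, H\"ormander's bracket condition (\sref{s:hypo}) and a Lyapunov drift (\sref{sec:Lyapunov}), and the three parts of \tref{theo:main} specialize this (Propositions~\ref{prop:uniq}, \ref{prop:existence}, \ref{prop:expoconv}). But the route you propose for the drift condition is genuinely different from the paper's and, as stated, does not go through.

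For Part 1, your irreducibility argument is a valid alternative but not the one used. You invoke a Stroock--Varadhan controllability argument to show that $P_t(z,\cdot)$ has full support; the paper instead compares \eref{e:SDE} with an auxiliary equation in which all temperatures equal $1/\beta$. The auxiliary equation has the explicit invariant Gibbs measure $\mu_\beta \propto e^{-\beta H}$, which has full support, and the strong Feller property alone then forces $\mu_\beta$ to be the unique ergodic invariant measure; equivalence of transition kernels (the two SDEs differ only in the diffusion scaling) transfers accessibility back to the original system. This is ``softer'' --- no support theorem is needed (it is only stated as an unused corollary).

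The main gap is in Parts 2 and 3. You propose a cross-term modified Hamiltonian $H_\epsilon = H + \epsilon \sum_v a_v\, p_v\cdot q_v$ and a \emph{generator-level} drift $\CL W \le -cW + C\ind_K$ for $W = e^{\theta H_\epsilon}$. The paper does neither. It uses the \emph{unmodified} $V = e^{\theta H}$, and the generator-level computation gives only $\CL V \le C_* V$ with $C_* > 0$ --- i.e.\ no contraction at all at the infinitesimal level. The contraction is established only for the semigroup at a \emph{fixed positive time} $t_*$, and this is the content of \tref{pr:lyap}. The mechanism is probabilistic: one applies It\^o to $H(z_t)$, isolates the dissipation integral $\Gamma(t) = \sum_b \gamma_b \int_0^t p_b^2(s)\,ds$ (\lref{lem:bornerhoUt}), and proves (\pref{p:tauxU}, \Cref{cor:multitaun}) that with probability $\ge 1 - e^{-BH(z_0)}$ the dissipation on $[0,t_*]$ is bounded below by a multiple of $H(z_0)$. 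This lower bound itself requires the energy-dependent time step $\tau(z)$ in \eref{eq:defstoptx}, rescaling to a deterministic limiting flow, and compactness. The cross-term device you propose is precisely the mechanism adapted to quadratic (or close to quadratic) potentials with baths on all sites; when baths act only on $\CB \subsetneq \CG$ and the potentials have degree $\ge 2$, it does not yield a negative-definite $\CL W$, which is why \cite{ReyTho02ECT,carmona_2007} (and this paper) moved to the finite-time semigroup drift. Your citation of those two references as using cross terms is also off: they already use the time-$t_*$ approach with plain $e^{\theta H}$. Finally, the role you assign to C4 is not quite right either: it is not used to patch a pathological region, but to prove \eref{eq:int0tauhatavant} in the limiting deterministic dynamics --- local injectivity of $\nabla V_{e,\infty}$ lets one conclude that if all momenta in $\CB$ vanish identically, then the neighboring positions are frozen as well, which propagates across the network via C1 and forces the limiting energy to be zero.
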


This theorem is a special case of \tref{theo:maingeneral} below, as we will show.

\section{A general result about thermalized Hamiltonian systems} \label{sec:genHam}

In this section, we prove a version of  \tref{theo:main} which applies to more general thermalized Hamiltonian systems subject to two assumptions H1 and H2 (see below). As we show in \sref{s:hypo} and \sref{sec:Lyapunov}, these assumptions follow from Conditions \ref{cond:spanTTk}--\ref{cond:lilp}. Although the material discussed in this section is mostly standard (see for example \cite{MR1931266}), we provide a complete exposition relying on the version of Harris' ergodic theorem proved in \cite{hairer_yet_2011}. We hope that by considering more general Hamiltonian systems and conditions in this section, the proofs will be both easier to read and useful beyond the scope of this paper.

 The setup is as in \sref{sec:setup}, except that we do not assume that  the set of masses $\CG$ has the structure of a graph and that the Hamiltonian has the form \eqref{e:hamil}. More precisely, we study the SDE
\begin{equ}[e:SDEgen]
dq_v = p_v\,dt\;,\qquad dp_v = -\nabla_{q_v} H(p,q)\,dt - \gamma_v p_v\,dt + \sqrt{2T_v \gamma_v}\,dW_v(t)\;,
\end{equ}
where the friction constants $\gamma_v$, the temperatures $T_v$ and the set  $\CB\subset \CG$ are as in \sref{sec:setup}, and where the Hamiltonian is given by
\begin{equ}
 H(p,q) = \sum_{v \in \CG} \frac{p_v^2} 2 + U(q)~,
\end{equ}
for some arbitrary smooth, non-negative potential $U$ on $\real^{n|\CG|}$.

We also assume throughout this section that \cref{cond:auxiliary} holds, \ie, that $H$ has compact level sets and that $\exp\bigl(-\beta H\bigr)$ is integrable on $\Omega$ for some $\beta > 0$.

We define the semigroup $(\cP^t)_{t\geq 0}$ acting on the space of bounded measurable functions on $\Omega$ by $\cP^t f(z) = \E_z f(z_t) = \int_\Omega f(z') P_t(z, dz')$. 
We also fix $0 < \theta < 1/T_{\rm max}$, with $T_{\rm max}  = \max\{T_b: b\in \CB\}$ as in \tref{theo:main}.  We let moreover $$V = e^{\vartheta H}~.$$

The solutions to \eref{e:SDEgen} form a Markov process whose generator $\CL$ is
\begin{equ}[e:defL]
\CL = X_0 + \sum_{b \in \CB} \sum_{i=1, \dots, n}X_{b,i}^2\;,
\end{equ}
where $X_{b,i} = \sqrt{T_b \gamma_b} \d_{p^i_b}$ and 
\begin{equ}
X_0 = \sum_{v\in \CG} \Bigl(p_v \cdot \nabla_{q_v} - \nabla_{q_v} U(q) \cdot \nabla_{p_v} - \gamma_v p_v \cdot  \nabla_{p_v}\Bigr) \;.
\end{equ}
From now on, we will view $X_0$ and the $X_{b,i}$ interchangeably as first-order differential operators and as vector fields on $\Omega$.

With $C_* = \theta\sum_{ b\in \CB} \gamma_b T_b$, we obtain
\begin{equ}[eq:lyapunovplus]
\CL V = \sum_{b\in \CB} \theta \gamma_b \left([\theta T_b-1]p_b^2 + T_b\right)e^{\theta H} \leq C_*V~.
\end{equ}
 Since $H$, and hence $V$, have compact level sets by assumption, the process admits strong solutions that are continuous and defined for all $t\geq 0$ (almost surely),  the strong Markov property is satisfied, and for all $t\geq 0$ we have
\begin{equ}[eq:lyapunovplusint]
\cP^t V \leq  e^{C_*t}V
\end{equ}
 (see for example \cite[Theorem 3.5]{Khas}, \cite{RB03EPM}, and \cite[Theorem III.3.1]{revuz_continuous_1999} for the strong Markov claim).  

We now introduce  H\"or\-mander's celebrated ``Lie bracket condition'' \cite{hormander_1967}. Define a family of vector fields $\CA_0$ by $\CA_0 = \{X_{b,i}\,:\, b\in \CB, i=1, \dots, n\}$ and then, recursively,
\begin{equ}
\CA_{k+1} = \CA_k \cup \bigl\{[X,Y]\,:\, X \in \CA_k\,, Y \in \CA_0 \cup \{X_0\}\bigr\}~,
\end{equ}
where $[X,Y]$ denotes the Lie bracket (commutator) of $X$ and $Y$.
With this notation at hand, we formulate

\begin{custcondition}{H1}\label{cond:HHormander} The operator $\CL$ defined in \eref{e:defL} satisfies H\"ormander's bracket condition, \ie, for every $z \in \Omega$,
there exists an integer $k>0$ such that the linear span of $\{Y(z): Y \in \CA_k\}$ is all of $\Omega$.
\end{custcondition}

\cref{cond:HHormander} is sufficient (and ``almost necessary'') for $\partial_t - \CL$ to be hypoelliptic, so that the semigroup associated to \eref{e:SDEgen} has
a smoothing effect (see \pref{prop:regularite} below). We note that the requirement in \cref{cond:HHormander} is made for all $z\in \Omega$; see for example \cite{RaquepasOsc2017} for an argument which only requires H\"ormander's condition to hold at one point, but which is specific to quasi-harmonic systems whose harmonic part is subject to Kalman's controllability condition.

Next, we introduce a Lyapunov condition, which will be crucial in order to obtain the existence of an invariant measure and the exponential convergence \eqref{eq:convergencemu}.

\begin{custcondition}{H2}\label{cond:HLyapunov} There exists $t_* > 0$ and $\kappa \in (0,1)$ such that\footnote{Here and below, $\ind_K$ denotes the characteristic function of the set $K$.}
\begin{equs}\label{eq:flyap}
\cP^{t_*} V \leq \kappa  V+ c\ind_{K}~,
\end{equs}
where $c>0$ is a constant and $K$ is a compact set. 
\end{custcondition}

In \sref{s:hypo}, we show that for the original system \eqref{e:SDE}, Conditions \ref{cond:spanTTk} and \ref{cond:nondegVe}  imply \cref{cond:HHormander}, and in \sref{sec:Lyapunov} we show that Conditions \ref{cond:spanTTk}, \ref{cond:nearhom}, \ref{cond:limitingpotentials} and \ref{cond:lilp} imply \cref{cond:HLyapunov}. With this in mind, \tref{theo:main} is a special case of

\begin{theorem}\label{theo:maingeneral}The following holds (recall that \cref{cond:auxiliary} is assumed throughout this section).
\begin{enumerate}
\item 	Under \cref{cond:HHormander}, the system \eref{e:SDEgen} admits at most one invariant measure, and if it exists, it has a smooth density with respect to Lebesgue measure.
\item Assuming \cref{cond:HLyapunov}, the system  \eref{e:SDEgen} admits a least one invariant measure, and $V$ is integrable with respect to it.
\item Finally, assuming Conditions \ref{cond:HHormander} and \ref{cond:HLyapunov}, the system \eref{e:SDEgen} admits a unique invariant measure $\mu_\star$, and the exponential convergence in \eqref{eq:convergencemu} holds.
\end{enumerate}
\end{theorem}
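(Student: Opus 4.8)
The plan is to deduce all three statements from the version of Harris' theorem in \cite{hairer_yet_2011}, combined with the hypoellipticity/smoothing consequences of \cref{cond:HHormander}. First I would record the smoothing statement: since $\partial_t - \CL$ is hypoelliptic under \cref{cond:HHormander}, the transition kernels $P_t(z,\argcdot)$ have smooth densities $p_t(z,z')$ jointly in $(t,z,z')$ for $t>0$, so $\cP^t$ maps bounded measurable functions to smooth ones; this gives the regularity claim in part~(1) for any invariant measure (an invariant measure $\mu$ satisfies $\mu = \mu P_t$, hence has density $\int p_t(z,\argcdot)\,\mu(dz)$). Next I would establish a \emph{support/irreducibility} statement: using H\"ormander's condition together with the Stroock--Varadhan support theorem (or a direct control argument), the densities $p_t(z,z')$ are strictly positive for all $z,z'\in\Omega$ and all $t>0$. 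This yields that the process is topologically irreducible and that any two invariant measures are mutually equivalent; a standard ergodic-decomposition argument then forces uniqueness, proving part~(1). (Both of these are the content of \pref{prop:regularite}, which the excerpt promises, so I may simply invoke it.)

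For part~(2), existence, I would run the Krylov--Bogolyubov argument using the Lyapunov bound. From \eqref{eq:lyapunovplusint} we have $\cP^t V \le e^{C_* t} V$, and \cref{cond:HLyapunov} gives $\cP^{t_*} V \le \kappa V + c\ind_K$ with $\kappa<1$. Iterating the latter along multiples of $t_*$ and interpolating with the former over the intervals in between shows $\sup_{t\ge 0}\cP^t V(z_0) < \infty$ for a fixed $z_0$, and more precisely that the time-averaged measures $\frac1T\int_0^T P_t(z_0,\argcdot)\,dt$ have uniformly bounded integral against $V$. Since $V = e^{\theta H}$ has compact level sets, $\{V \le R\}$ is compact, so these averaged measures are tight; any weak limit point $\mu_\star$ is invariant, and by Fatou/lower-semicontinuity $\int V\,d\mu_\star < \infty$. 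This proves part~(2).

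For part~(3), I would verify the two hypotheses of the Harris theorem from \cite{hairer_yet_2011}: the geometric Lyapunov/drift condition (which is exactly \cref{cond:HLyapunov} at time $t_*$, giving $\cP^{t_*}V \le \kappa V + c\ind_K$), and a \emph{minorization} condition on the compact set $K$, namely $\inf_{z\in K} P_{t_*}(z,\argcdot) \ge \eta\,\nu(\argcdot)$ for some $\eta>0$ and some probability measure $\nu$. The minorization follows from the strict positivity and joint continuity of the density $p_{t_*}(z,z')$ established above: on the compact set $K$ one can find a ball $\CO$ with $\inf_{z\in K,\,z'\in\CO} p_{t_*}(z,z') \ge \eta_0 > 0$, and take $\nu$ to be normalized Lebesgue measure on $\CO$. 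The abstract theorem then provides constants $\bar C>0$, $\bar\gamma\in(0,1)$ such that $\|\cP^{t_*n}\mu - \cP^{t_*n}\mu'\|_V \le \bar C\,\bar\gamma^{\,n}\|\mu-\mu'\|_V$ in the weighted total-variation norm $\|\argcdot\|_V$ dual to $\{|f|\le V\}$; uniqueness of the invariant measure is immediate, and applying this with $\mu = \delta_z$, $\mu' = \mu_\star$, then bridging the non-integer times via \eqref{eq:lyapunovplusint}, yields \eqref{eq:convergencemu} with $c$ arbitrarily close to (but below) the rate dictated by $\bar\gamma$ and $t_*$, after adjusting $\theta$. Care is needed with the range of $\theta$: the drift \eqref{eq:lyapunovplus} only closes for $\theta < 1/T_{\rm max}$, and one recovers the stated range $0<\theta<T_{\rm max}$ in \eqref{eq:convergencemu} — as written in part~(3) — by the same argument applied with $V = e^{\theta H}$ for the relevant $\theta$.

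The main obstacle is the \emph{minorization on $K$}: it is the one step that genuinely couples the two ingredients, since it needs both the compactness of $K$ coming from the Lyapunov structure and the strict positivity of the transition density coming from H\"ormander's condition plus a controllability (support-theorem) argument. Verifying strict positivity of $p_t(z,z')$ for \emph{all} pairs $z,z'$ — rather than just hypoellipticity — requires knowing that the control system associated to \eqref{e:SDEgen} is exactly controllable, which is precisely where \cref{cond:HHormander} (and, for the concrete model, Conditions~\ref{cond:spanTTk}--\ref{cond:nondegVe}) is used; everything else is a routine assembly of Krylov--Bogolyubov and the quoted form of Harris' theorem.
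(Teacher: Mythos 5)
Your plan for part (2) — iterate the drift at multiples of $t_*$, interpolate over $[0,t_*]$ via \eqref{eq:lyapunovplusint}, and run Krylov--Bogolyubov using the compact level sets of $V$ — is exactly the paper's Proposition~\ref{prop:existence}, and is correct. The framework of part (3) via Harris' theorem from \cite{hairer_yet_2011} (geometric drift plus minorization on a sublevel set of $V$) is also the route the paper takes in Proposition~\ref{prop:expoconv}. However, there is a genuine gap in the way you propose to obtain irreducibility and the minorization bound.

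You assert that H\"ormander's condition plus the Stroock--Varadhan support theorem yield strict positivity of $p_t(z,z')$ for \emph{all} $z,z'\in\Omega$ and \emph{all} $t>0$, and you claim this is ``the content of \pref{prop:regularite}''. Neither is right: \pref{prop:regularite} only asserts smoothness of $p_t$, not positivity, and the positivity claim is much stronger than what follows from H\"ormander's condition alone. The support theorem reduces it to a global small-time exact controllability statement for the degenerate control system associated to \eqref{e:SDEgen}, which the paper deliberately does \emph{not} prove — see the remark after Proposition~\ref{prop:uniq}, where it is pointed out that such a ``stronger controllability'' argument (as in \cite{EPR2}) would indeed give positivity of $P_t(z,\CU)$ for all $t>0$, but the present, softer argument does not. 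The paper sidesteps this by comparing with the modified SDE \eqref{e:SDE2} at uniform temperature $1/\beta$: there the explicit Gibbs measure $\mu_\beta$ is invariant, has full support, and the strong Feller property (which \emph{does} follow from \pref{prop:regularite}) forces $\mu_\beta$ to be the unique ergodic measure, whence accessibility $\sup_{t>0}\overline P_t(z,\CU)>0$; since scaling the noise amplitude leaves the transition kernels in the same equivalence class, the same accessibility holds for \eqref{e:SDEgen}. The minorization is then not obtained from everywhere-positivity but from Proposition~\ref{prop:minorization}, which combines accessibility with an aperiodicity argument (showing $p_t(z_*,z_*)>0$ for all $t$ beyond some threshold, using the semigroup property) to produce a single reference point $z_*$ and a minimal time $t_C$ such that $\inf_{z\in C}p_t(z,z_*)>0$ for all $t\ge t_C$; the Harris theorem is then applied with a base time $t_0\ge t_C$, not with $t_*$. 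Your overall structure would be salvageable if you replaced the unproved positivity claim with this accessibility-plus-aperiodicity chain (or actually proved the global small-time controllability for the specific model), but as written the proof of part (1) and of the minorization for part (3) rests on a statement that is neither established in the excerpt nor an automatic consequence of H\"ormander's bracket condition.

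As a side remark, in part (3) you note the range $0<\theta<T_{\rm max}$ appearing in the statement and try to accommodate it; that range is a misprint in the source — the correct condition throughout is $0<\theta<1/T_{\rm max}$, which is the one under which \eqref{eq:lyapunovplus} closes, and no further ``adjustment of $\theta$'' is needed.
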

\begin{proof}
The three parts of the theorem are proved in Propositions~\ref{prop:uniq}, \ref{prop:existence} and \ref{prop:expoconv} below. 	
\end{proof}

\subsection{Controllability and uniqueness}
\label{ssec:control}

The following consequence of H\"ormander's condition is well known \cite{hormander_1967} (see \cite[Section 7]{RB03EPM}, \cite{hairer_malliavins_2011}, and \cite[Section 7.4]{MR2410225} for introductions).

\begin{proposition}\label{prop:regularite}
	Assume \cref{cond:HHormander}. Then the transition kernel in \eqref{eq:transitionkernel} can be written as $P_t(z, dz') = p_t(z,z')dz'$, where the map $(t, z, z')\mapsto p_t(z,z')$ is smooth on $(0,\infty) \times \Omega \times \Omega$. In particular, the process is strong Feller. Finally, every invariant measure has a smooth density with respect to Lebesgue measure on $\Omega$.
\end{proposition}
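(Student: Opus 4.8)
The plan is to invoke H\"ormander's hypoellipticity theorem \cite{hormander_1967} for the operator $\partial_t - \CL$ on $(0,\infty)\times\Omega$. First, I would observe that \cref{cond:HHormander} asserts precisely that the vector fields $X_0, X_{b,i}$ (with $b\in\CB$, $i=1,\dots,n$) together with their iterated Lie brackets span $\Omega$ at every point $z$; since these vector fields are time-independent and smooth, the collection $\{\partial_t\} \cup \{X_{b,i}\} \cup \{[X_0,\cdot],\dots\}$ spans $\R\times\Omega$ at every point of $(0,\infty)\times\Omega$, which is the hypothesis of H\"ormander's theorem. Hence $\partial_t - \CL$ is hypoelliptic, and the same holds for the formal adjoint $\partial_t + \CL^*$ (whose second-order part has the same principal symbol, the relevant brackets being unchanged up to sign). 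Standard parabolic hypoelliptic regularity theory (see \cite[Section 7]{RB03EPM}, \cite{hairer_malliavins_2011}, \cite[Section 7.4]{MR2410225}) then gives that the transition probabilities $P_t(z,\cdot)$ have a density $p_t(z,z')$ and that $(t,z,z')\mapsto p_t(z,z')$ is jointly smooth on $(0,\infty)\times\Omega\times\Omega$: indeed $p_t(z,z')$ solves the Kolmogorov forward equation $\partial_t p = \CL^*_{z'} p$ in $z'$ and the backward equation $\partial_t p = \CL_z p$ in $z$ in the distributional sense, and hypoellipticity of both operators upgrades this to smoothness; joint smoothness follows since $u(t,z,z')\mapsto p_{t}(z,z')$ satisfies $(\partial_t - \CL_z)u = 0$ and $(\partial_t - \CL^*_{z'})u=0$, an overdetermined hypoelliptic system on the product.

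Next, I would derive the strong Feller property as a direct consequence: for bounded measurable $f$, $\cP^t f(z) = \int f(z')\,p_t(z,z')\,dz'$, and since $p_t(z,z')$ is continuous in $z$ with suitable local integrability in $z'$ (using the sub-Markov bound $\int p_t(z,z')\,dz' \le 1$ together with local smoothness and a short argument with dominated convergence, or the standard local estimates on hypoelliptic heat kernels), $z\mapsto \cP^t f(z)$ is continuous for every $t>0$. Some care is needed here because $\Omega$ is non-compact and $p_t$ need not be globally bounded; the cleanest route is to fix a compact neighbourhood of a point $z_0$, use the local smoothness of $p_t$ together with the total-mass bound to control the contribution of $z'$ in large balls, and conclude continuity at $z_0$.

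Finally, for the statement that every invariant measure $\mu$ has a smooth density: if $\mu P_t = \mu$, then $\mu(dz') = \int \mu(dz)\,p_t(z,z')\,dz'$ for any fixed $t>0$, so $\mu$ is absolutely continuous with density $\rho(z') = \int p_t(z,z')\,\mu(dz)$; since $\mu$ is a probability measure and $p_t$ is smooth with derivatives locally uniformly controlled (again using compactness of level sets of $H$ / tightness to handle the tail of $\mu$, or simply differentiating under the integral sign on compacts), $\rho$ is smooth. I expect the main obstacle to be not the hypoellipticity itself — that is a black-box application of \cite{hormander_1967} — but rather the bookkeeping needed to pass from local smoothness of $p_t$ to the global conclusions (joint smoothness, strong Feller, smooth invariant density) on the non-compact phase space $\Omega$; in practice this is handled by the references cited, so the proof can be kept short by quoting \cite[Section 7]{RB03EPM}, \cite{hairer_malliavins_2011} and \cite[Section 7.4]{MR2410225} for the details.
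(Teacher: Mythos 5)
Your proposal is correct and follows the same (and essentially the only) route: the paper itself offers no proof, but simply labels the statement a well-known consequence of H\"ormander's theorem and cites \cite{hormander_1967}, \cite[Section 7]{RB03EPM}, \cite{hairer_malliavins_2011}, \cite[Section 7.4]{MR2410225} — precisely the references you lean on. Your unpacking is sound: H1 implies the parabolic bracket condition for $\partial_t - \CL$ on $(0,\infty)\times\Omega$ because $\bar X_0 = \partial_t - X_0$ supplies the $\partial_t$ direction while $[X_{b,i},\bar X_0]=-[X_{b,i},X_0]$ reproduces the time-independent brackets; hypoellipticity of $\partial_t-\CL$ and of $\partial_t+\CL^*$ then yield the smooth density, and integrating against the density gives the strong Feller property. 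Two small remarks. First, your ``overdetermined hypoelliptic system on the product'' step for joint smoothness in $(t,z,z')$ is the genuinely delicate point; it is true, but deserves a clean citation rather than a sketch (the references you list do cover it). Second, for the smoothness of an invariant density there is a shorter path than differentiating under the integral sign in $\rho(z')=\int p_t(z,z')\,\mu(dz)$: an invariant $\mu$ satisfies $\CL^*\mu=0$ in the sense of distributions, and H1 also makes the stationary operator $\CL^*$ hypoelliptic (the bracket condition already allows $X_0$), so $\mu$ has a smooth density directly, with no tail or uniformity bookkeeping needed.
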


We now prove the following ``accessibility'' result (see also \cite[Section~5.2.1]{CEP_3rotors} for another variant of this argument).

\begin{proposition}\label{prop:uniq} Assume \cref{cond:HHormander}. Then the system \eref{e:SDEgen} admits at most one invariant measure, and for every non-empty open set $\CU\subset \Omega$ and all $z\in \Omega$, we have $\sup_{t> 0}P_t(z, \CU) > 0$.
\end{proposition}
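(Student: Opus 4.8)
The plan is to combine the strong Feller property from \pref{prop:regularite} with an accessibility (support) argument, and then invoke the standard Doob--Khasminskii-type dichotomy that a strong Feller Markov semigroup can have at most one invariant measure whose topological supports overlap. First I would establish the accessibility claim $\sup_{t>0}P_t(z,\CU)>0$ for every non-empty open $\CU$ and every $z\in\Omega$. Since $P_t(z,\cdot)$ has a smooth strictly-positive-somewhere density $p_t(z,\cdot)$, it suffices to show that for every $z$ and every $z'$ there is some $t>0$ with $p_t(z,z')>0$; equivalently, by the Stroock--Varadhan support theorem, that $z'$ lies in the closure of the set of points reachable from $z$ by the controlled ODE obtained from \eref{e:SDEgen} by replacing the Brownian increments $dW_b$ with arbitrary smooth controls $u_b(t)$. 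So the core task is a controllability statement: for the control system
\begin{equ}
\dot q_v = p_v\;,\qquad \dot p_v = -\nabla_{q_v}H(p,q) - \gamma_v p_v + \sqrt{2T_b\gamma_b}\,u_v(t)\quad(v\in\CB)\;,
\end{equ}
with $u_v\equiv 0$ for $v\notin\CB$, the reachable set from any $z$ is dense in $\Omega$ (indeed all of $\Omega$). In this general-Hamiltonian section we only have \cref{cond:HHormander} available, so the argument must be soft: H\"ormander's condition gives that the reachable set has non-empty interior (the system is ``bracket-generating''), and a time-reversal symmetry of the Hamiltonian dynamics — the map $(p,q,t)\mapsto(-p,q,-t)$ together with sign-flipping the controls — lets one run the controlled flow ``backwards'', which upgrades ``interior non-empty'' to ``reachable set is open and its closure is invariant under the dynamics'', hence (by connectedness of $\Omega$ and a standard orbit argument) reachable from any point. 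I expect this accessibility step, and in particular making the time-reversal/orbit argument clean at the level of the approximate (density-positivity) statement rather than exact controllability, to be the main obstacle; everything else is bookkeeping.

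Concretely, I would proceed as follows. (1) Note that by the Chapman--Kolmogorov equation and positivity of densities, $p_{s+t}(z,z') \geq \int p_s(z,w)p_t(w,z')\,dw$, so the set $\{(t,z'): p_t(z,z')>0\}$ has good stability properties; in particular if $p_t(z,\cdot)>0$ on some open set and the dynamics can steer that open set to contain any prescribed point, we are done. (2) Invoke the support theorem of Stroock--Varadhan: $\supp P_t(z,\cdot)$ equals the closure of the set $\mathcal R_t(z)$ of endpoints at time $t$ of solutions of the control system above with smooth controls. (3) Show $\overline{\bigcup_{t>0}\mathcal R_t(z)} = \Omega$. For this, first observe that $\mathcal R_{>0}(z):=\bigcup_{t>0}\mathcal R_t(z)$ has non-empty interior: the drift and control vector fields $X_0, X_{b,i}$ satisfy H\"ormander's condition by \cref{cond:HHormander}, and by the Krener/Sussmann orbit theorem (or directly by Chow's theorem applied after adjoining $-X_0$, which is legitimate here because of time reversibility) the reachable set from $z$ has non-empty interior. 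Then use the reversibility of Hamiltonian flow: if $(p(t),q(t))$ solves the control system with controls $u_b$, then $(\,-p(T-t),q(T-t)\,)$ solves it with controls $\tilde u_b(t) = -u_b(T-t)$ and friction sign formally reversed — to handle the friction term cleanly I would instead argue that the closure of the reachable set is closed under the time-$t$ \emph{uncontrolled} flow $\Phi_t$ and its inverse (run the controlled flow to approach $w$, then apply $\Phi_t$; conversely approximate $\Phi_{-t}w$), so $\overline{\mathcal R_{>0}(z)}$ is a non-empty open set invariant under $\Phi_t$ for all $t\in\R$; combined with the fact that from any point one can reach a full-dimensional set, a standard connectedness argument on $\Omega$ (which is connected) yields $\overline{\mathcal R_{>0}(z)}=\Omega$.

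(4) Finally deduce uniqueness. Suppose $\mu_1,\mu_2$ are invariant measures. By \pref{prop:regularite} both are absolutely continuous with smooth densities, and the semigroup is strong Feller; from step (3), for any $z$ and any non-empty open $\CU$ we have $P_t(z,\CU)>0$ for some $t$, i.e.\ every point of $\Omega$ is in the support of the ``chain'' — equivalently the process is topologically irreducible. A strong Feller, topologically irreducible Markov semigroup admits at most one invariant probability measure: if $\mu_1\neq\mu_2$ they would have disjoint ergodic components, but irreducibility plus strong Feller forces any two invariant measures to be mutually equivalent (Doob's theorem / Hairer--Mattingly), a contradiction. I would cite Doob's theorem in the form given in, e.g., \cite{hairer_yet_2011} or \cite[Section 7]{RB03EPM} rather than reprove it. This closes the proof; the only genuinely delicate point, as noted, is the density-of-reachable-set argument in step (3), where care is needed because the heat baths act on $\CB$ only and ``information'' must propagate through the deterministic part of the dynamics — but under \cref{cond:HHormander} this propagation is exactly what the bracket condition encodes, so the soft argument above suffices without the finer quantitative control developed in \sref{s:hypo}.
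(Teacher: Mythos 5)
Your proposal takes a genuinely different route from the paper, and unfortunately the key step (3) — density of the reachable set in $\Omega$ — contains a gap that I don't think can be closed by the argument you sketch. The crux is your appeal to time-reversal. For a purely Hamiltonian drift the involution $(p,q,t)\mapsto(-p,q,-t)$ does turn forward flows into backward flows and would let you adjoin $-X_0$ and invoke Chow--Rashevskii. But the drift $X_0$ in \eref{e:defL} contains the dissipative piece $-\gamma_v p_v\cdot\nabla_{p_v}$, and under $(p,t)\mapsto(-p,-t)$ this becomes \emph{anti}-friction: if $\dot p = -\gamma p$ and $\tilde p(t) = -p(-t)$, then $\dot{\tilde p} = +\gamma\tilde p$. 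So the controlled system is \emph{not} time-reversible, and you cannot freely use $-X_0$. Without that, \cref{cond:HHormander} via Krener's theorem only gives that $\mathcal R_{>0}(z)$ has non-empty interior (indeed that $z\in\overline{\mathrm{int}\,\mathcal R_{>0}(z)}$); it does \emph{not} give $\overline{\mathcal R_{>0}(z)}=\Omega$. Your suggested fallback — that $\overline{\mathcal R_{>0}(z)}$ is closed under the \emph{backward} uncontrolled flow $\Phi_{-t}$ — is precisely the non-reversibility issue restated: forward reachability is clearly preserved under $\Phi_t$, but there is no reason a reachable $w$ should have $\Phi_{-t}w$ reachable, and you offer no argument for it. As stated the proposal does not establish accessibility.

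The paper avoids this issue entirely by a ``soft'' trick (following \cite{Control}): it introduces the modified SDE \eref{e:SDE2} in which all temperatures are set equal to $1/\beta$. For that system the Gibbs measure $\mu_\beta\propto e^{-\beta H}$ is an explicit invariant probability measure (integrable by \cref{cond:auxiliary}) with \emph{full support}. Using the strong Feller property from \pref{prop:regularite}, the paper shows that the supports of distinct ergodic invariant measures would be separated by open sets, which is incompatible with the existence of a full-support invariant measure; hence $\mu_\beta$ is the unique invariant measure of \eref{e:SDE2}. Birkhoff plus full support of $\mu_\beta$ then gives $\sup_{t>0}\overline P_t(z,\CU)>0$ for a.e.\ $z$, and Chapman--Kolmogorov with smooth densities upgrades this to every $z$. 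Finally — and this is the second ingredient you'd need — since \eref{e:SDEgen} and \eref{e:SDE2} differ only by a constant rescaling of the Brownian motions, their transition probabilities from any fixed $z$ are mutually absolutely continuous, so the accessibility statement transfers to the original system, and uniqueness follows by the same support-separation argument you sketch in your step (4). If you want to rescue a direct controllability proof, you would need an additional input to replace time-reversibility (e.g.\ Poisson stability of the orbits, in the style of Lobry/Jurdjevic), but this is precisely what the existence of the full-support invariant measure $\mu_\beta$ provides for free in the paper's argument.
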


\begin{proof}
The argument follows the same lines as the reasoning first given in \cite{Control}, see also
\cite{Florian}.
Take $\beta > 0$ as in \cref{cond:auxiliary} and consider instead
of \eref{e:SDEgen} the modified equation
\begin{equ}[e:SDE2]
dq_v = p_v\,dt\;,\qquad dp_v = -\nabla_{q_v} H(p,q)\,dt - \gamma_v p_v\,dt+ \sqrt{2\gamma_v \beta^{-1}}\,dW_v(t)\;.
\end{equ}
The only difference is that all the temperatures have been replaced by $1/\beta$ (we still have $\gamma_v = 0$ for all $v\notin \CB$). By the same argument as above, the solutions to \eqref{e:SDEgen} almost surely exist for all times.
It is well known that the measure $$d\mu_\beta = \frac 1 Ze^{-\beta H(p,q)}\,dp\,dq$$ is invariant for \eref{e:SDE2}, and by \cref{cond:auxiliary}, one can choose $Z>0$ so that $\mu_\beta$ is a probability measure. (The invariance of $\mu_\beta$ can be seen by checking that $\CL^* e^{-\beta H} = 0$, where $\CL^*$ is the formal adjoint of the generator of \eqref{e:SDE2}.)

We  next show that $\mu_\beta$ is the only invariant probability measure for \eref{e:SDE2}. It is easy to show that, as a consequence of \pref{prop:regularite}, the map $z\mapsto \overline P_t(z, \argcdot)$ is continuous in the total variation topology, where $\overline P_t$ denotes the transition probabilities for \eqref{e:SDE2}. Since distinct ergodic invariant probability measures for \eref{e:SDE2}
are mutually singular by Birkhoff's ergodic theorem, this immediately implies that if $\nu$ is an ergodic invariant
measure for \eref{e:SDE2} and $z \in \supp \nu$, then there exists a neighborhood $\CU_z$ of $z$ such that 
$\CU_z \cap \supp \bar \nu = \emptyset$ for \textit{every other} ergodic invariant measure $\bar \nu$. 

As a consequence, let us choose some (there exists at least one) ergodic invariant measure $\nu$  of \eref{e:SDE2}.
Assuming by contradiction that $\nu$ is not unique, we have $\supp \nu \neq \Omega$.
As a consequence, setting $\CV = \bigcup_{z \in \supp \nu} (\CU_z \setminus \supp \nu)$, we have
constructed a non-empty open set $\CV$ such that $\CV \cap \supp \bar \nu = \emptyset$
for \textit{every} ergodic invariant measure $\bar\nu$ of \eref{e:SDE2} and therefore, by the ergodic representation theorem,
for \textit{every} invariant measure $\bar\nu$. (We must have $\CV
\neq \emptyset$ for otherwise $\supp \nu$ would be both open and
closed, which cannot be.) However, $\supp \mu_\beta  = \Omega$, thus yielding a contradiction.

Returning to our main line of argument, since $\mu_\beta$ is the unique invariant probability measure for \eref{e:SDE2},
it must be ergodic. Since $\mu_\beta$ has full support, it then follows from Birkhoff's ergodic theorem that for every open set $\CU$
and Lebesgue-almost every initial condition $z\in \Omega$, we have $\sup_{t> 0}\overline P_t(z, \CU) > 0$. An easy application 
of the Chapman-Kolmogorov equation, using the smoothness of the transition probabilities, shows that this actually holds for every $z\in \Omega$.\footnote{One can also use that the strong Feller property 
implies that Birkhoff's ergodic theorem holds for every initial
condition in the support of the invariant measure \cite[Theorem 4.10]{HSV}.}
The conclusion of the proposition thus holds for \eqref{e:SDE2}. We now return to \eqref{e:SDEgen}.

The key is that for each $z\in \Omega$ and $t\geq 0$, the transition probabilities $\overline  P_t(z, \argcdot)$ for \eqref{e:SDE2} and $P_t(z, \argcdot)$ for \eqref{e:SDEgen} are equivalent, since the two stochastic differential equations differ only by the scaling of the Brownian motions. Thus, we indeed have $\sup_{t> 0}P_t(z, \CU) > 0$ for all $z\in \Omega$ and every non-empty open set $\CU \subset \Omega$. Assume now by contradiction that \eqref{e:SDEgen} admits more than one invariant probability measure. Then by the ergodic decomposition theorem there exist two distinct ergodic measures, which then have distinct supports $S_1$ and $S_2$. By smoothness, there exists a non-empty open set $\CU \subset S_2$, and by taking $z\in S_1$ we find $\sup_{t> 0}P_t(z, \CU) = 0$, which is a contradiction.
\end{proof}

Although this will not be needed, we state the following corollary, which follows from the Stroock--Varadhan support theorem (see \cite{stroock_1972}, and \cite[Theorem 5.b]{MR3581833} for an extension to case of unbounded coefficients).
\begin{corollary}
Assume \cref{cond:HHormander}.
Then, for any starting point 
$z_0 \in \Omega$ and any non-empty open set $\CU \subset \Omega$, there exists a time $t > 0$ and smooth
controls $u_b \colon [0,t] \to \real^n$ for $b\in \CB$ such that the solution at time $t$ to\footnote{The same is true without the dissipative terms $- \gamma_v p_v$, since they can be absorbed into the controls $u_v$ (recall that $\gamma_v = 0$ when $v\notin \CB$).}
\begin{equ}
\dot q_v = p_v\;,\qquad \dot p_v =  -\nabla_{q_v} H(p,q) - \gamma_v p_v + \ind_{\CB}(v)u_v(t)\;, \qquad v\in \CG~,
\end{equ}
with initial condition $z_0$, lies in $\CU$. 
\end{corollary}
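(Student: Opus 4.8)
The plan is to derive the corollary directly from Proposition~\ref{prop:uniq} together with the Stroock--Varadhan support theorem. First I would recall that Proposition~\ref{prop:uniq} gives, for every starting point $z_0 \in \Omega$ and every non-empty open set $\CU \subset \Omega$, the existence of a time $t > 0$ with $P_t(z_0, \CU) > 0$; this positivity of the transition probability is the analytic input we want to convert into a statement about controls. The bridge between the two is the Stroock--Varadhan support theorem, which identifies the (topological) support of $P_t(z_0, \argcdot)$ with the closure of the set of endpoints at time $t$ of solutions to the associated control problem, where the diffusion vector fields $X_{b,i} = \sqrt{T_b\gamma_b}\,\d_{p^i_b}$ are driven by smooth controls and the drift is the vector field $X_0$. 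Since we are dealing with unbounded coefficients, I would invoke the extension in \cite[Theorem 5.b]{MR3581833} rather than the classical version.

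Concretely, the steps are: (1) Fix $z_0$ and $\CU$; by Proposition~\ref{prop:uniq} choose $t > 0$ with $P_t(z_0, \CU) > 0$, so in particular $\CU$ meets $\supp P_t(z_0, \argcdot)$. (2) Apply the support theorem: $\supp P_t(z_0, \argcdot)$ equals the closure of the attainable set $\mathcal{A}_t(z_0)$ consisting of all endpoints $z(t)$ of the controlled ODE
\begin{equ}
\dot q_v = p_v\;,\qquad \dot p_v = -\nabla_{q_v} H(p,q) - \gamma_v p_v + \sqrt{T_b\gamma_b}\,\ind_{\CB}(v)\tilde u_v(t)\;,
\end{equ}
with $\tilde u_v$ smooth and $z(0) = z_0$. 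Hence $\overline{\mathcal{A}_t(z_0)} \cap \CU \neq \emptyset$. (3) Since $\CU$ is open, $\overline{\mathcal{A}_t(z_0)}\cap \CU \neq\emptyset$ forces $\mathcal{A}_t(z_0)\cap \CU \neq\emptyset$, so there exist smooth controls steering $z_0$ into $\CU$ in time $t$. (4) Rename $u_v := \sqrt{T_b\gamma_b}\,\tilde u_v$ on $\CB$ (this is a harmless reparametrisation since $\gamma_b, T_b > 0$) to match the form in the statement; the footnote's remark about absorbing $-\gamma_v p_v$ into $u_v$ follows because on $\CB$ the forcing term is already full-dimensional in the $p_b$ directions, and off $\CB$ one has $\gamma_v = 0$.

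The only genuinely delicate point is step~(2): invoking the support theorem for an SDE with unbounded (here polynomially growing, by Remark~\ref{replace510}(ii)) drift coefficients requires the extension of \cite{MR3581833}, whose hypotheses one must check are met --- essentially non-explosion of both the SDE and the control system, which is guaranteed here by the compact level sets of $H$ and the Lyapunov bound \eqref{eq:lyapunovplusint}. Everything else is bookkeeping: matching the normalisation of the noise coefficients and observing that openness of $\CU$ upgrades ``closure meets $\CU$'' to ``the set itself meets $\CU$''. Since the corollary is explicitly flagged as not needed in the sequel, I would keep the argument brief and simply cite the support theorem in the form appropriate to unbounded coefficients.
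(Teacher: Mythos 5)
Your argument is correct and is essentially the same as the paper's: the paper states this corollary without detailed proof, remarking only that it follows from the Stroock--Varadhan support theorem and citing \cite[Theorem 5.b]{MR3581833} for the unbounded-coefficient extension. You have simply filled in the bookkeeping — using \pref{prop:uniq} to find a time $t$ with $P_t(z_0,\CU)>0$, identifying $\supp P_t(z_0,\argcdot)$ with the closure of the attainable set of the control ODE, and observing that openness of $\CU$ upgrades ``meets the closure'' to ``meets the set'' — which is precisely the argument the paper has in mind.
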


\begin{remark}
In the case of chains of oscillators, a stronger controllability argument is used in \cite{EPR2}. The argument given above is ``softer''. As a consequence, it applies to a larger class of Langevin equations, at the expense of
having less explicit control. The argument in \cite{EPR2} actually implies that, in the statement of \pref{prop:uniq}, the quantity $P_t(z, \CU)$ is positive for all $t>0$. 
\end{remark}

\subsection{Minorization}

The next proposition shows that every compact set is {\em small} in the terminology of \cite{MT}. In fact, we show that for each given compact set $C$, the minorization condition holds for {\em all} large enough $t$. In the proof, $p_t(\argcdot,\argcdot)$ is as in \pref{prop:regularite}.

\begin{proposition}\label{prop:minorization}Assume Condition \ref{cond:HHormander}. Then, for every compact set $C$, there exists a time $t_C$ such that for all $t\geq t_C$, there exists a non-negative and non-trivial measure $\nu$ (which may depend on $t$) such that $P_{t}(z, \argcdot) \geq \nu$ for all $z\in C$.
\end{proposition}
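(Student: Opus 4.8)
The plan is to deduce the minorization condition from the smoothness of the transition density established in \pref{prop:regularite} together with the accessibility statement of \pref{prop:uniq}. First I would fix a compact set $C$. By \pref{prop:uniq}, for every $z \in \Omega$ and every non-empty open set $\CU$ we have $\sup_{t>0} P_t(z,\CU) > 0$; applying this with $\CU$ ranging over a fixed countable basis of the topology of $\Omega$, and using that $P_t(z, dz') = p_t(z,z')\,dz'$ with $p_t$ smooth and strictly positive wherever it is reachable, I would first argue that for each $z$ there is some time $t(z)$ and some point $z'(z)$ with $p_{t(z)}(z, z'(z)) > 0$. The key upgrade is to make this uniform over $z \in C$ and over a range of times.

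Concretely, here is the mechanism I expect to use. By continuity of $(t,z,z') \mapsto p_t(z,z')$ on $(0,\infty)\times\Omega\times\Omega$, once we know $p_{t_0}(z_0, z_0') > 0$ for some $(t_0, z_0, z_0')$, there are neighborhoods and a positive constant so that $p_t(z,z') \geq \eta > 0$ for all $(t,z,z')$ in a product neighborhood. The issue is to cover $C$ by finitely many such ``good'' neighborhoods at a \emph{common} time. To get this, I would use the Chapman--Kolmogorov equation to propagate positivity forward in time: if $P_{s}(z, \CU) > 0$ for $z$ in some set and $\CU$ open, then writing $P_{s+r}(z, \argcdot) = \int P_s(z, dw) P_r(w, \argcdot)$ and using that $P_r(w, dw') = p_r(w,w')\,dw'$ with $p_r$ strictly positive on an appropriate open set, one propagates a pointwise lower bound on the density $p_{s+r}(z, \argcdot)$ on a fixed open set. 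This is exactly the argument sketched in the footnote and final paragraph of the proof of \pref{prop:uniq}.

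So the steps, in order, are: (i) fix $C$ compact; (ii) for each $z\in C$, produce $t(z)>0$ and an open set $\CU(z)$ with $\inf_{z'\in \CU(z)} p_{t(z)}(z,z') > 0$, using \pref{prop:uniq} and smoothness/positivity of the density; (iii) by joint continuity, find open $\CO_z \ni z$ and a short interval of times around $t(z)$ on which $p_t(\tilde z, z') \geq \eta_z > 0$ for $\tilde z \in \CO_z$, $z' \in \CU(z)$; (iv) extract a finite subcover $\CO_{z_1}, \dots, \CO_{z_m}$ of $C$, with associated times $t_1, \dots, t_m$; (v) use Chapman--Kolmogorov to ``synchronize'' the times: for $t$ large, write $t = t_j + (t - t_j)$ and use that the dynamics can reach, from any point of $\CU(z_j)$, a fixed open set with positive density over the remaining time $t - t_j$ — again by applying steps (ii)--(iii) starting from $\bar\CU = \overline{\CU(z_j)}$-like compact sets and invoking \pref{prop:uniq} and continuity. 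This shows that there is $t_C$ such that for all $t \geq t_C$ there is one fixed open set $W$ and a constant $\eta > 0$ with $p_t(z, z') \geq \eta$ for all $z \in C$ and $z' \in W$; then $\nu = \eta \,\mathbf{1}_W\,dz'$ (suitably restricted to have finite mass) works.

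The main obstacle is step (v), the synchronization: \pref{prop:uniq} only gives accessibility at \emph{some} time depending on the point, whereas the minorization must hold at a common time for the whole compact set, and moreover for all sufficiently large $t$. Overcoming this requires the combination of (a) Chapman--Kolmogorov to go from ``reachable at time $t_j$'' to ``reachable at time $t$'' for $t > t_j$ by passing through an intermediate open set of states from which a further fixed open set is reachable with uniformly positive density over the leftover time, and (b) a compactness argument to keep all constants uniform. I would expect the writeup to lean on the same ``soft'' reasoning already used in \pref{prop:uniq}, so the heart of it is bookkeeping: carefully chaining the positivity of $p_t$ through finitely many steps while tracking that the resulting lower bound is uniform in $z\in C$ and locally uniform in $t$, then patching the local-in-$t$ bounds into a bound valid for all $t \geq t_C$.
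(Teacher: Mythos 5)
Your overall scaffolding---smoothness of $p_t$ from \pref{prop:regularite}, accessibility from \pref{prop:uniq}, joint continuity, and a compactness argument on $C$---matches the paper's, and you have correctly identified that the synchronization step (v) is where the real work lies. However, the mechanism you propose there does not close. Invoking \pref{prop:uniq} recursively from $\CU(z_j)$ only gives reachability of some further open set at \emph{some} further time, not at the prescribed leftover time $t-t_j$; iterating this produces an ever-growing but unsynchronized family of times and target sets, and gives no way to produce a single open set $W$ together with a bound valid for \emph{all} $t \geq t_C$. The ``patching of local-in-$t$ bounds'' you allude to at the end needs a concrete engine, and that engine is missing from your sketch.

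Two refinements in the paper's proof supply it. First, fix a \emph{single} target point $z_*$ at the outset (any $z_*$ with $p_1(z_0,z_*)>0$ for an arbitrary reference point $z_0$). Accessibility into $B(z_0,\delta)$ followed by one extra unit of time then yields, for each $z$, a time $t_1(z)$ with $p_{t_1(z)}(z,z_*)>0$, extending by joint continuity to a ball $B(z,\delta_z)$; crucially the target $z_*$ is the \emph{same} for every $z$, so everything can be chained through it. Second---and this is the decisive missing idea---apply the previous step with $z=z_*$ itself to get $p_{t_1(z_*)}(z_*,z_*)>0$, and use the aperiodicity trick: continuity in $t$ gives an interval $[t_1(z_*),\,t_1(z_*)+\Delta]$ of positivity, the semigroup property gives positivity on $[n\,t_1(z_*),\,n\,t_1(z_*)+n\Delta]$ for every $n$, and these intervals eventually cover a half-line $[t_2,\infty)$ for $t_2=n_*t_1(z_*)$ with $n_*\geq t_1(z_*)/\Delta$. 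Now for any $z$ and any $t\geq t_1(z)+t_2$ one decomposes $t=t_1(z)+(t-t_1(z))$ and chains $p_{t_1(z)}(z',z_*)>0$ near $z$ with $p_{t-t_1(z)}(z_*,z_*)>0$. A finite subcover of $C$ by the balls $B(z,\delta_z)$ then gives $t_C$, and $\nu = \varepsilon\,\ind_{B(z_*,r)}\,dz'$ works for small $\varepsilon,r$ by continuity and compactness. Without the single-target choice and the aperiodicity argument, your step (v) cannot be completed as described.
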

\begin{proof}
We start by showing that there exists $z_* \in \Omega$ such that for all $z \in \Omega$, there are $t_\sharp(z)$ and $\delta_z > 0$ satisfying
\begin{equ}\label{eq:aperiod}
	p_t(z', z_*) > 0 \qquad \text{for all $t\geq t_\sharp(z)$ and all $z' \in B(z, \delta_z)$ .}
\end{equ}

First, pick any $z_0 \in \Omega$. We now fix any $z_*$ such that $p_1(z_0,z_*) > 0$. By continuity, there exists $\delta > 0$ such that $\inf_{z\in B(z_0, \delta)}p_{1}(z,z_*) > 0$. By \pref{prop:uniq}, there exists for each $z\in \Omega$ some $t_0(z)$ such that $P_{t_0(z)}(z, B(z_0, \delta)) > 0$. It then follows from the semigroup property 
that $p_{t_1(z)}(z, z_*) > 0$ with $t_1(z) = t_0(z) +1$. Using continuity again, we can choose $\delta_z > 0$ so that
\begin{equ}[eq:part1lemferry]
	p_{t_1(z)}(z', z_*) > 0 \qquad \text{for all } z'\in B(z, \delta_z)~.
\end{equ}
We now show that there exists $t_2 > 0$ such that
\begin{equ}[eq:part2lemferry]
	p_t(z_*, z_*) > 0 \qquad \text{for all $t\geq t_2$ }.
\end{equ}
Since $p_{t_1(z_*)}(z_*, z_*) > 0$, continuity with respect to time implies that for some $\Delta > 0$ small enough, we have $p_t(z_*, z_*) > 0$ for all $t\in [t_1(z_*), t_1(z_*) +\Delta]$. But then the same holds for all $t\in [nt_1(z_*), nt_1(z_*) + n\Delta]$, $n \in \integers$. Thus \eref{eq:part2lemferry} holds with $t_2 = n_*t_1(z_*)$ for any integer $n_* \geq  t_1(z_*)/\Delta$. Using \eref{eq:part1lemferry}, \eref{eq:part2lemferry} and the semigroup property yields  \eqref{eq:aperiod} with $t_\sharp(z) = t_1(z) + t_2$.

We now prove the main claim. Let $C$ be a compact set. The balls $\{B(z, \delta_z): z\in C\}$ form an open cover of $C$, and by compactness we can extract a finite subcover, yielding a maximum time $t_C$ such that $p_{t}(z, z_*) > 0$ for all $z\in C$ and all $t\geq t_C$. For any such $t$, since $p_t(\argcdot,\argcdot)$ is continuous on $\Omega^2$ and $C$ is compact, the result follows with $d\nu = \varepsilon\ind_{B(z_*, r)}dz$ for small enough $\varepsilon, r > 0$.
\end{proof}

\subsection{Existence of an invariant measure and exponential convergence}\label{sec:expoconv}

As an elementary consequence of \cref{cond:HLyapunov}, we find that 
\begin{equ}
	\cP^{nt_*} V  \leq    \kappa^{n}V + c\sum_{i=0}^{\infty}\kappa^{i}\quad \text{for all $n\in \integers$}~.
\end{equ}
From this and \eqref{eq:lyapunovplusint}, we obtain that
\begin{equ}\label{eq:semigrouptcontinu}
	\cP^t V  \leq c_1 +  c_2 \rho^tV \quad \text{for all $t\geq0$}~,
\end{equ}
with $c_1, c_2  > 0$ and $\rho = \kappa^{1/t_*}\in (0,1)$. 
In particular, since $V$ has compact level sets, this implies that for any $z\in \Omega$, the family of probability measures $(P_t(z, \argcdot))_{t\geq 0}$ is tight. Since the process is Feller, the standard Krylov--Bogolyubov construction then implies that for some sequence $t_k$ increasing to infinity, $\frac 1 {t_k} \int_0^{t_k}P_s(z, \argcdot)ds$ converges weakly to some measure which is invariant, and with respect to which $V$ is integrable. We thus obtain

\begin{proposition}\label{prop:existence}Under \cref{cond:HLyapunov}, the process admits an invariant measure $\mu_\star$, and $V$ is integrable with respect to $\mu_*$.
\end{proposition}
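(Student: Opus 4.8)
The plan is to upgrade the one-step contraction of \cref{cond:HLyapunov} to a bound on $\cP^t V$ that is uniform in $t$, and then feed this into the Krylov--Bogolyubov construction. First I would iterate \eqref{eq:flyap}: since $\cP^{t_*}\ind_K \le 1$, applying $\cP^{t_*}$ repeatedly gives $\cP^{nt_*}V \le \kappa^n V + c\sum_{i\ge 0}\kappa^i = \kappa^n V + c/(1-\kappa)$ for every $n\in\integers$. Writing a general $t\ge 0$ as $t = nt_* + s$ with $s\in[0,t_*)$ and using \eqref{eq:lyapunovplusint} on the leftover time $s$ (note that $\cP^s$ fixes constants and is linear, and $n \ge t/t_* - 1$), one obtains $\cP^t V \le e^{C_* t_*}\kappa^{-1}\rho^t V + c/(1-\kappa)$ with $\rho = \kappa^{1/t_*}\in(0,1)$, which is \eqref{eq:semigrouptcontinu}.

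Next I would fix $z\in\Omega$ and observe that \eqref{eq:semigrouptcontinu} gives $\sup_{t\ge 0}\cP^t V(z) \le c_1 + c_2 V(z) =: M_z < \infty$. Since $V = e^{\theta H}$ and $H$ has compact level sets by \cref{cond:auxiliary}, Markov's inequality yields $P_t(z,\{V>R\}) \le M_z/R$ for all $t$ and all $R>0$, with $\{V\le R\}$ compact; hence $\{P_t(z,\argcdot)\}_{t\ge 0}$ is tight, and so is the family of time averages $Q_T(z,\argcdot) := \tfrac1T\int_0^T P_s(z,\argcdot)\,ds$. Because the coefficients of \eref{e:SDEgen} are smooth, the solution flow depends continuously on the initial condition, so $(\cP^t)_{t\ge 0}$ is Feller (we do not need the strong Feller property of \pref{prop:regularite}, which relies on \cref{cond:HHormander} and is unavailable in this proposition). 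By Prokhorov's theorem there is a sequence $T_k\to\infty$ with $Q_{T_k}(z,\argcdot)\to\mu_\star$ weakly. For bounded continuous $f$, the standard estimate $\bigl|\int \cP^t f\,dQ_{T_k}(z) - \int f\,dQ_{T_k}(z)\bigr| = \tfrac1{T_k}\bigl|\bigl(\int_{T_k}^{T_k+t}-\int_0^t\bigr)\cP^s f(z)\,ds\bigr| \le 2t\|f\|_\infty/T_k \to 0$, together with the Feller property (so that $\int \cP^t f\,d\mu_\star$ makes sense and $\int\cP^t f\,dQ_{T_k}\to\int\cP^t f\,d\mu_\star$), shows $\int \cP^t f\,d\mu_\star = \int f\,d\mu_\star$ for every $t$, i.e.\ $\mu_\star$ is invariant.

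For the integrability statement I would fix $R>0$; then $V\wedge R$ is bounded and continuous, so weak convergence together with $\cP^s(V\wedge R)\le \cP^s V$ and \eqref{eq:semigrouptcontinu} gives
\[
\int (V\wedge R)\,d\mu_\star = \lim_k \int (V\wedge R)\,dQ_{T_k}(z) \le \limsup_k \tfrac1{T_k}\int_0^{T_k}\cP^s V(z)\,ds \le c_1,
\]
the last inequality because $\tfrac1{T_k}\int_0^{T_k}\bigl(c_1 + c_2\rho^s V(z)\bigr)\,ds \to c_1$. Letting $R\to\infty$ and invoking monotone convergence then yields $\int V\,d\mu_\star \le c_1 < \infty$.

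I do not expect a genuine obstacle here — the argument is classical — but the two points that require a little care are the bookkeeping in passing from the discrete-time bound \eqref{eq:flyap} to the continuous-time bound \eqref{eq:semigrouptcontinu}, and checking that the plain Feller property (rather than the strong Feller property, which is not assumed in this proposition) already suffices both for the Krylov--Bogolyubov limit to be genuinely invariant and for the $\mu_\star$-integrability of $V$.
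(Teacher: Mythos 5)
Your proof is correct and follows essentially the same route as the paper: iterate \eqref{eq:flyap}, combine with \eqref{eq:lyapunovplusint} to obtain \eqref{eq:semigrouptcontinu}, then use tightness (from compact level sets of $V$) plus the Feller property and Krylov--Bogolyubov to get an invariant measure with $V$ integrable against it. The paper simply invokes the Krylov--Bogolyubov construction without spelling out the details of the invariance of the limit or the truncation argument for integrability; you have correctly filled those in.
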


Assuming in addition \cref{cond:HHormander} implies that $\mu_\star$ is unique (\pref{prop:uniq}), and we now prove exponential convergence.
\begin{proposition}\label{prop:expoconv}
	Under Conditions \ref{cond:HHormander} and \ref{cond:HLyapunov}, the exponential convergence in \eqref{eq:convergencemu} holds.
\end{proposition}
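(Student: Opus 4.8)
The plan is to invoke the version of Harris' theorem from \cite{hairer_yet_2011}, which combines a Lyapunov (geometric drift) condition with a minorization on the sublevel sets of the Lyapunov function. First I would recall that, by \pref{prop:existence}, the invariant measure $\mu_\star$ exists, and by \pref{prop:uniq} (which applies since \cref{cond:HHormander} holds) it is unique. Then I would assemble the two hypotheses of Harris' theorem for the discrete-time chain with transition kernel $\hat P \eqdef P_{t_*}$ (or a suitable integer multiple $P_{mt_*}$): the drift condition is exactly \cref{cond:HLyapunov}, i.e. $\hat P V \le \kappa V + c\ind_K$ with $\kappa \in (0,1)$ and $K$ compact; the minorization follows from \pref{prop:minorization}, since $K$ is compact we may choose $m$ large enough that $mt_* \ge t_K$ and obtain a non-trivial measure $\nu$ with $P_{mt_*}(z,\argcdot) \ge \nu$ for all $z$ in the (compact) level set $\{V \le R\}$ for any fixed $R$; one just needs $R$ large enough that the level set contains $K$ and that $\kappa' \eqdef \kappa^m + $ (small correction) times something is still compatible with the ``small set'' being a sublevel set, which is the standard bookkeeping in \cite[Theorem 1.3]{hairer_yet_2011}.

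From Harris' theorem I then get a weighted total-variation contraction: there exist $\tilde C > 0$ and $\tilde\rho \in (0,1)$ such that for all $n$ and all $z$,
\begin{equ}
\sup_{|f| \le 1 + V}\Bigl|\E_z f(z_{nmt_*}) - \int f\,d\mu_\star\Bigr| \le \tilde C\,\tilde\rho^{\,n}\bigl(1 + V(z)\bigr)~.
\end{equ}
Next I would upgrade this discrete-time statement at times $nmt_*$ to all real $t \ge 0$: write $t = nmt_* + s$ with $0 \le s < mt_*$, apply the semigroup property, and control the extra evolution over time $s$ using \eqref{eq:lyapunovplusint}, namely $\cP^s V \le e^{C_* s} V \le e^{C_* mt_*}V$; since $V$-norms are preserved up to this fixed constant, the continuous-time bound follows with adjusted constants. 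Finally, recalling $V = e^{\theta H}$ with $\theta < 1/T_{\max}$ arbitrary, and noting that any $f$ with $|f| \le e^{\theta H}$ satisfies $|f| \le V \le 1 + V$, I would translate the weighted bound into precisely \eqref{eq:convergencemu}: $|\E_z f(z_t) - \int f\,d\mu_\star| \le C(1 + e^{\theta H(z)})\tilde\rho^{\,t/(mt_*)}$, and since $e^{\theta H} \ge 1$ (as $H \ge 0$) the prefactor is bounded by $2e^{\theta H(z)}$, yielding the claimed form $C e^{\theta H(z) - ct}$ with $c = -\log\tilde\rho/(mt_*) > 0$. (A remark on the range of $\theta$: the statement of \tref{theo:main}(3) writes $0 < \theta < T_{\max}$, but the relevant condition inherited from this section is $0 < \theta < 1/T_{\max}$; I would use the latter, consistent with \eqref{eq:lyapunovplus}.)

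The main obstacle I anticipate is not conceptual but bookkeeping: making sure the minorization set in \pref{prop:minorization} can be taken to be a genuine sublevel set $\{V \le R\}$ of the Lyapunov function — which is what Harris' theorem wants — rather than the somewhat arbitrary compact $K$ appearing in \cref{cond:HLyapunov}. This is handled by the standard trick of iterating the drift condition: for $n$ large, $\cP^{nt_*}V \le \kappa^n V + c/(1-\kappa)$, so outside a large enough sublevel set $\{V \le R\}$ one has a strict contraction, and inside it one invokes \pref{prop:minorization} (choosing the number of steps large enough that $nt_* \ge t_C$ with $C = \{V \le R\}$). Verifying that the resulting constants satisfy the hypotheses of \cite[Theorem 1.3]{hairer_yet_2011} — essentially that the contraction rate outside the small set and the mass of the minorizing measure combine favorably — is routine but must be done carefully.
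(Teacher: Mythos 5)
Your proposal is correct and follows essentially the same route as the paper: invoke the Harris-type result from \cite{hairer_yet_2011} for a discrete-time skeleton $(\cP^{nt_0})_n$, supply the drift condition from \cref{cond:HLyapunov} (iterated/interpolated as in \eqref{eq:semigrouptcontinu}) and the minorization on a sublevel set of $V$ via \pref{prop:minorization} with $t_0 \geq t_C$, then upgrade to continuous time by writing $t = nt_0 + r$ and absorbing the short evolution with $\cP^r V \le e^{C_* t_0} V$. The one place where the paper is more explicit than your sketch is the ``bookkeeping'': it fixes $R = 2c_1/(1-\rho)$, takes $C = \{V \le R\}$, and picks $t_0$ both $\ge t_C$ and large enough that $c_2\rho^{t_0} < \rho$, which guarantees $R > 2c_1/(1-c_2\rho^{t_0})$, exactly the numeric condition under which \cite[Theorem 1.3]{hairer_yet_2011} applies with that sublevel set as the small set. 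Your observation that the bound $0<\theta<T_{\max}$ in \tref{theo:main}(3) is a typo for $0<\theta<1/T_{\max}$ is correct.
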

\begin{proof}
	We will apply the main result of \cite{hairer_yet_2011} to the discrete-time semigroup $(\cP^{nt_0})_{n=0,1,2, \dots}$, for some large enough $t_0>0$.  Let first $R  = 2c_1/(1-\rho)$. Here $c_1, c_2$ and $\rho$ are as in \eqref{eq:semigrouptcontinu}. We then define the compact set $C = \{z: V(z) \leq R\}$. We choose now $t_0 \geq t_C$ with the $t_C$ from \pref{prop:minorization}, and large enough so that $c_2 \rho^{t_0} < \rho$. It follows that $R > 2c_1/(1-c_2 \rho^{t_0})$, so that by \eqref{eq:semigrouptcontinu} the main result of \cite{hairer_yet_2011} applies to $(\cP^{nt_0})_{n=0,1,2, \dots}$. We obtain\footnote{Another way to obtain \eqref{eq:discreteconv} with $t_0 = t_*$ is to use \cite[Theorem~15.0.1]{MT}. Indeed, \eqref{eq:aperiod}  implies that the process is aperiodic, \cref{cond:HLyapunov} provides the required drift condition, and by \pref{prop:minorization} the compact set $K$ in \eqref{eq:flyap} is {\em small} (and hence {\em petite}). An alternative proof of convergence
using quasi-compactness of the semigroup can be found in \cite{RB03EPM}.} that for some $C_0, c_0 > 0$ and all $z\in \Omega$,
\begin{equ}\label{eq:discreteconv}
\sup_{f\in C(\Omega) \,:\, |f| \leq V}\left|\E_z f(z_{nt_0}) - \int f d \mu_\star\right| \leq C_0 V(z)e^{-c_0 {nt_0}}~ \quad \text{for all $n\in \integers$ }.
\end{equ}
For $|f| \leq V$, we define $g(z,t) = \E_z f(z_t)-\int f d \mu_\star$. Decomposing $t = n t_0 + r$ with $n\in \integers$ and $r\in [0,t_0)$, we obtain from the Markov property that
\begin{equs}
\left|g(z,t)\right| & = \left|\E_z g(z_{r},nt_0)\right|\leq C_0e^{-c_0 {nt_0}} \E_z V(z_r) \leq C_0e^{C_*t_0-c_0 {nt_0}}  V(z)~,
\end{equs}
where we have also used \eqref{eq:lyapunovplusint}. This immediately implies \eqref{eq:convergencemu} for some $C,c>0$, and thus the proof is complete.
\end{proof}

\section{Hypoellipticity}\label{s:hypo}

In this section, we prove 
\begin{proposition}\label{prop:Hor} Under Conditions \ref{cond:spanTTk} and \ref{cond:nondegVe}, the system  \eref{e:SDE} satisfies \cref{cond:HHormander}.
\end{proposition}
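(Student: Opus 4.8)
The plan is to verify H\"ormander's bracket condition (\cref{cond:HHormander}) at an arbitrary point $z=(p,q)\in\Omega$ by starting from the ``obvious'' vector fields $X_{b,i}=\sqrt{T_b\gamma_b}\,\d_{p^i_b}$, $b\in\CB$, and generating new directions through iterated Lie brackets with $X_0$. The first bracket is standard: $[X_{b,i},X_0]$ produces (up to a nonzero constant) the field $\d_{q^i_b}$, and then $[\d_{q^i_b},X_0]$ produces a field whose $p$-components involve $\d_{p^i_b}$ again but whose newly relevant part comes from differentiating the force term $-\nabla_{q_v}U$. So already we control the momenta and positions of all bath vertices $b\in\CB$. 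The task is then to propagate controllability from $\CB$ outward through the graph, following exactly the combinatorial structure encoded by the operator $\cT$ and \cref{cond:spanTTk}.

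The core inductive step is the following. Suppose that, after sufficiently many brackets, we control $\d_{p^i_v}$ and $\d_{q^i_v}$ for all $v$ in some set $B$ with $\CB\subseteq B$ (in the sense that these vector fields lie in the span of $\CA_k(z)$ for $k$ large enough, locally uniformly in $z$). Let $w\in\CG\setminus B$ be nicely connected to $B$, witnessed by an edge $(b,w)\in\CE$ with $b\in B$ having no other edge to $\CG\setminus B$. Bracketing $\d_{q^i_b}$ against $X_0$ and discarding all terms that already lie in the controlled span (those coming from pinning $U_b$, from dissipation, and from edges of $b$ landing inside $B$), the only surviving contribution is $-\sum_j (\d_{q^j_b}\d_{q^i_b}V_{(b,w)})(\delta q_{(b,w)})\,\d_{p^j_w}$; here is where ``no other edge from $b$ to $\CG\setminus B$'' is used crucially, as it guarantees that $w$ is the unique new neighbour and no cross-terms contaminate the bracket. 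Iterating brackets of this field with $X_0$ replaces the Hessian of $V_{(b,w)}$ by higher derivatives $D^\alpha\nabla V_{(b,w)}$ evaluated at the same point $\delta q_{(b,w)}$; by \cref{cond:nondegVe} (non-degeneracy, \dref{d:nondegenerate}) the collection $\{D^\alpha\nabla V_{(b,w)}(\delta q_{(b,w)}):1\le|\alpha|\le\ell\}$ spans $\real^n$, so we recover all of $\d_{p^j_w}$, $j=1,\dots,n$, and then one more bracket with $X_0$ gives $\d_{q^j_w}$. Hence we now control all of $\cT B$. Starting from $B=\CB$ and iterating, after $k$ rounds we control $\cT^k\CB$, and \cref{cond:spanTTk} gives $\cT^k\CB=\CG$ for some $k$, so the span of $\{Y(z):Y\in\CA_m\}$ for $m$ large is all of $\Omega$.

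I would organize the write-up as: (1) a lemma computing the relevant brackets $[\d_{q^i_v},X_0]$ explicitly and identifying, modulo already-controlled directions, the ``transmitted'' term through each edge; (2) the non-degeneracy argument showing that iterated brackets through a single edge recover the full $n$-dimensional fibre at the far endpoint, where one must be slightly careful that the bracket of a vector field of the form $\sum_j f_j(\delta q_e)\d_{p^j_w}$ with $X_0$ reproduces $\sum_j (p\cdot\nabla f_j)(\delta q_e)\d_{p^j_w}$ plus controlled terms, and that varying over which component of $p$ one differentiates, together with using that $p_{w'}-p_w$ ranges over a full set as $w$ varies suitably, lets one extract each partial derivative $D^\alpha\nabla V_e$ separately (this is the standard trick, e.g.\ as in \cite{EckHai}); and (3) the graph induction driven by $\cT$ and \aref{cond:spanTTk}.

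The main obstacle is step~(2), specifically making rigorous the passage from the single Hessian $\mathrm{Hess}\,V_e(\delta q_e)$ appearing in the first relevant bracket to the full family $\{D^\alpha\nabla V_e(\delta q_e)\}$: one must track brackets carefully to see that differentiation in the $p$-variables of the ``upstream'' vertex cleanly produces directional derivatives of $\nabla V_e$ in the $q$-direction, without the dissipative terms or the other (already-controlled) force terms obstructing the argument, and one must ensure the combinatorial bookkeeping of ``already-controlled directions'' is consistent across the induction. The ``no other edge to $\CG\setminus B$'' hypothesis in the definition of nicely connected is exactly what prevents a bracket through vertex $b$ from simultaneously exciting two uncontrolled neighbours with a linear combination of their momenta that one cannot disentangle, so this point deserves explicit emphasis.
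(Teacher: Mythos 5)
Your overall strategy matches the paper's: first-level brackets give momenta and positions at the bath vertices, the ``nicely connected'' structure of $\cT$ isolates a single transmitted term through the controlling edge, non-degeneracy (\cref{cond:nondegVe}) recovers the full $n$-dimensional fibre at the new vertex, and the graph induction then runs through $\cT^k\CB$ up to $\CG$. Where the proposal goes wrong is precisely the step you single out as the main obstacle, namely how to pass from the single Hessian term $-\sum_j (\partial_{q_b^j}\partial_{q_b^i} V_{(b,w)})(\delta q_{(b,w)})\,\partial_{p^j_w}$ to the full family $\{D^\alpha\nabla V_{(b,w)}(\delta q_{(b,w)})\cdot\nabla_{p_w}\}_{1\le|\alpha|\le\ell}$. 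You propose to iterate by bracketing with $X_0$; the resulting coefficient is the \emph{directional} derivative $(p_w - p_b)\cdot\nabla$ applied to the previous coefficient (plus lower-order terms), and this is a single fixed direction at the point $z=(p,q)$ where H\"ormander's condition is being evaluated. The suggested remedy — ``varying over which component of $p$ one differentiates'' or using that $p_{w'}-p_w$ ``ranges over a full set'' — is not available: $p$ is frozen once $z$ is fixed, $w$ is the unique vertex being added, and you cannot choose which directional derivative the bracket with $X_0$ produces. In particular, at points where $p_w=p_b$ these iterated brackets produce nothing new at all.

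The paper's fix is to iterate by bracketing with $\partial_{q_b^j}$ (which has already been shown to lie in the controlled family), not with $X_0$. Since the transmitted coefficient depends on $q$ only through $\delta q_{(b,w)} = q_w - q_b$, the commutator $[\partial_{q_b^j},\, F(\delta q_{(b,w)})\cdot\nabla_{p_w}]$ is exactly $-(\partial_j F)(\delta q_{(b,w)})\cdot\nabla_{p_w}$, cleanly producing all mixed partials $D^\alpha\nabla V_{(b,w)}(\delta q_{(b,w)})\cdot\nabla_{p_w}$ with no $p$-dependence. Two further points are worth flagging. First, this requires working in the module $\CM$ of vector fields closed under arbitrary Lie brackets \emph{and} multiplication by smooth functions, rather than the literal recursion $\CA_{k+1}=\CA_k\cup\{[X,Y]:X\in\CA_k,\,Y\in\CA_0\cup\{X_0\}\}$ of \cref{cond:HHormander}, since $\partial_{q_b^j}\notin\CA_0\cup\{X_0\}$; the paper explicitly states and uses the standard fact that the $\CM$-condition implies \cref{cond:HHormander}. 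Second, ``the collection spans $\R^n$'' at each point is a pointwise statement, whereas what you need is to exhibit each $\partial_{p_w^j}$ as a vector field in $\CM$; the paper does this explicitly by forming the smooth, everywhere-invertible matrix $M_{ij}(x) = \sum_{1\le|\alpha|\le\ell}(D^\alpha\partial_i V)(x)(D^\alpha\partial_j V)(x)$ and multiplying by the smooth coefficients $M^{-1}_{ij}(D^\alpha\partial_i V)$, which is where closure under multiplication by smooth functions is used. You would need both of these ingredients to close the argument.
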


\begin{proof}
For the system \eref{e:SDE}, the vector field $X_0$ in the decomposition \eref{e:defL} reads
\begin{equs}
X_0 &= \sum_{v\in \CG} \Bigl(p_v \cdot  \nabla_{q_v} - \nabla U_v(q_v)\cdot  \nabla_{p_v} - \gamma_v p_v \cdot \nabla_{p_v}\Bigr)\\
& \qquad  - \sum_{(u,v) \in \CE} \nabla V_{(u,v)}(q_v - q_u) \cdot \bigl(\nabla_{p_v}-\nabla_{p_u}\bigr)\;.
\end{equs}

We will actually prove the following statement, which implies \cref{cond:HHormander}. 
Let $\bar X_0 = \d_t - X_0$ and set 
$\CM_0 = \{\bar X_0\} \cup \{X_{b,i} \,:\, b \in \CB, i=1, \dots, n\}$, which we view as a family of smooth vector fields on $\R^{1+2n|\CG|}$. 
Denote by $\CM$ the smallest set of vector fields containing $\CM_0$ that is closed under Lie brackets and multiplication
by smooth functions.

We will show that $\d_t$, as well as $\nabla_{p_v}$ and $\nabla_{q_v}$ for every $v \in \CG$, all belong to $\CM$. Since $\bar X_0 \in \CM$, it is sufficient to prove the claim about the $\nabla_{p_v}$ and $\nabla_{q_v}$.  (Here and below, what we mean by $\nabla_{p_v} \in \CM$ is that $\partial_{p_v^i}\in \CM$ for all $i=1, \dots, n$, and similarly for $\nabla_{q_v}$.)

Note first that, by the definition of $X_{b,i}$ and $\CM_0$, we have
$\nabla_{p_b} \in \CM$ for all $b \in \CB$. Furthermore,
since
\begin{equ}
{} [\d_{p_v^i}, \bar X_0] = -\d_{q_v^i} + \gamma_v \d_{p_v^i}
\end{equ}
for all $v\in\CG$, it follows that one has the implication
\begin{equ}
\nabla_{p_v} \in \CM \quad\Rightarrow\quad \nabla_{q_v} \in \CM\;.
\end{equ}
By the definition of the notion of $\CB$ controlling $\CG$, the claim
now follows if we can show that, for any set $\CB'\subset \CG$, one
has the implication
\begin{equ}
\nabla_{q_b},\nabla_{p_b} \in \CM \text{ for all } b \in \CB' \quad\Rightarrow\quad \nabla_{p_v} \in \CM \text{ for all } v \in \CT \CB'\;.
\end{equ}
Assume therefore that $\CB'$ is such that $\nabla_{q_b},\nabla_{p_b}$
are in $\CM$ for all $b \in \CB'$.
Note that, for all $i \in \{1,\ldots,n\}$ and every $b\in \CB'$,
\begin{equ}[e:comm]
{}[\d_{q_{b}^i}, \bar X_0] = \bigl(\d_i \nabla U_b\bigr)(q_b)\cdot\nabla_{p_b} - \sum_{e=(b,v) \in \CE_b} \bigl(\d_{i} \nabla V_e\bigr)(\delta q_e) \cdot\bigl(\nabla_{p_v}-\nabla_{p_b}\bigr)\;,
\end{equ}
where we denote by $\CE_b$ the subset of those edges in $\CE$ that are of the form $(b,v)$ for some $v \in \CG$.
Fix now $v \in \CT \CB'\setminus\CB'$. By the definition of $\CT \CB'$, there exists
then $b \in \CB'$ such that $(b,v) \in \CE_b$ and, for every other 
$w$ for which  $(b,w)$ is in $\CE_b$, one has $w \in \CB'$. 
For such a $b\in\CB'$, we conclude that in \eref{e:comm} all the terms but $\bigl(\d_{i} \nabla V_{(b,v)}\bigr)(q_v - q_b) \cdot \nabla_{p_v}$ are of the form $f_u(z)\cdot\nabla_{p_u}$ for some $u\in \CB'$, so that 
\begin{equ}\label{e:step1}
\bigl(\d_{i} \nabla V_{(b,v)}\bigr)(q_v - q_b)\cdot \nabla_{p_v} \in \CM\;.
\end{equ}
By the definition of $\CT\CB'$, this holds for every $v \in \CT
\CB'\setminus\CB'$. We now get rid of the potential term in
\eref{e:step1}. Repeatedly taking Lie brackets with $\d_{q_b^j}$, \eref{e:step1} implies that, for every non-zero multi-index $\alpha$, we have
\begin{equ}\label{e:M}
\bigl(D^\alpha \nabla V_{(b,v)}\bigr)(q_v - q_b)\cdot \nabla_{p_v} \in \CM\;.
\end{equ}
Let now $\ell $ be the value appearing in the non-degeneracy
assumption for $V_{(b,v)}$ and let $M$ be the $n\times n$ matrix-valued function
 whose elements are given by
\begin{equ}
M_{ij}(x) = \sum_{1\le |\alpha| \le \ell }\bigl(D^\alpha \d_i V_{(b,v)}\bigr)(x) \, \bigl(D^\alpha \d_j V_{(b,v)}\bigr)(x)\;, \quad x\in\real^n~.
\end{equ}
It follows from the non-degeneracy assumption that $M$ is invertible for every $x\in \real^n$, so that $M^{-1}_{ij}(x)$ is a smooth function.
An explicit calculation shows, furthermore, that one has the identity 
\begin{equ}
\d_{p_v^j} = \sum_{i=1\vphantom{\ell}}^n \sum_{1 \le |\alpha| \le \ell } M^{-1}_{ij}(q_v - q_b) \bigl(D^\alpha \d_i V_{(b,v)}\bigr)(q_v - q_b) \,\bigl(D^\alpha \nabla V_{(b,v)}\bigr)(q_v - q_b) \cdot \nabla_{p_v} \;.
\end{equ}
From \eref{e:M} and the fact that $M^{-1}_{ij}(q_v - q_b) \bigl(D^\alpha \d_i V_{(b,v)}\bigr)(q_v - q_b)$ is a smooth function,
we deduce that we indeed have $\nabla_{p_v} \in \CM$, thus completing the proof. 
\end{proof}

\section{Lyapunov condition}\label{sec:Lyapunov}

In this section, we show that Conditions \ref{cond:spanTTk}, \ref{cond:nearhom}, \ref{cond:limitingpotentials} and \ref{cond:lilp} imply that the system \eref{e:SDE} satisfies \cref{cond:HLyapunov} above, \ie, that $V = e^{\theta H}$ satisfies the Lyapunov property if  $\theta$ is small enough.

The proof follows the lines of the argument that can be found in \cite{ReyTho02ECT,carmona_2007}. Unfortunately,
these works both contained a gap in the argument, which we presently correct (see \rref{rem:liglpnoncompact}).

We fix $t_* > 0$ and $\theta < 1/T_{\rm max}$ with $T_{\rm max}= \max\{T_b: b\in \CB\}$.  The main result of this section is
\begin{theorem}\label{pr:lyap}
Under Conditions \ref{cond:spanTTk}, \ref{cond:nearhom}, \ref{cond:limitingpotentials} and \ref{cond:lilp}, there is a constant $C_1> 0$ such that for all $z_0$ such that $H(z_0)$ is large enough, we have
\begin{equs}[eq:boundC1C2]
\E_{z_0} e^{\theta H(z_{t_*})- \theta H(z_0)}	\leq e^{- C_1 H(z_0)}~.
\end{equs}
\end{theorem}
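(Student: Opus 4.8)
The plan is to control the expected change of the energy $H$ over a fixed time interval $[0,t_*]$, showing it decreases by a definite fraction when the system starts at high energy. First I would use It\^o's formula to compute the exact drift of $H$ along the dynamics: since the Hamiltonian part is conservative and only the thermostatted vertices contribute, one gets
\begin{equ}
\E_{z_0} H(z_{t_*}) - H(z_0) = \E_{z_0}\int_0^{t_*} \sum_{b\in\CB} \gamma_b\bigl(T_b - \|p_b(s)\|^2\bigr)\,ds~.
\end{equ}
So the energy dissipation is governed by $\int_0^{t_*}\sum_b \|p_b(s)\|^2\,ds$. The heart of the matter is a lower bound of the form: with high probability (or in expectation), this quantity is at least of order $H(z_0)^{a}$ for some exponent $a>0$ depending on $\ell_i,\ell_p$, when $H(z_0)$ is large. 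The passage from such an estimate to the exponential bound \eqref{eq:boundC1C2} is a now-standard argument: one combines the a priori bound $e^{\theta H(z_{t_*})}\le e^{\theta H(z_0)} M_{t_*}$ where $M$ is (up to the drift already accounted for) an exponential supermartingale, splits the probability space according to whether $\int_0^{t_*}\sum_b\|p_b\|^2$ is large or not, and uses a Cauchy--Schwarz / Young inequality together with the exponential-moment control coming from \eqref{eq:lyapunovplus}. The key point is that $\theta<1/T_{\rm max}$ gives the quadratic form in the exponent a strictly negative coefficient on the $p_b$ directions, which is what makes the supermartingale estimate close.

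The main obstacle — and this is precisely the ``oversight'' the introduction advertises correcting — is proving the lower bound on $\int_0^{t_*}\sum_{b\in\CB}\|p_b(s)\|^2\,ds$ in terms of $H(z_0)$. The issue is that the heat baths act only on $\CB$, not on all of $\CG$, so high energy in a bulk oscillator does not directly force large momenta at a bath vertex. I would handle this by a rescaling / compactness argument in the spirit of \cite{ReyTho02ECT,carmona_2007}: rescale positions and momenta by the appropriate powers of $H(z_0)$ so that the rescaled energy is of order one, observe that by near-homogeneity (Condition~\ref{cond:nearhom}) the rescaled dynamics converges, as $H(z_0)\to\infty$, to the Hamiltonian flow of the limiting homogeneous Hamiltonian built from $U_{v,\infty}$ and $V_{e,\infty}$, and then show that for this limiting conservative flow the quantity $\int_0^{t_*}\sum_{b}\|p_b\|^2$ cannot vanish identically on a positive-energy orbit. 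This last step is where Conditions~\ref{cond:spanTTk} and~\ref{cond:limitingpotentials} enter: if $p_b\equiv 0$ on $[0,t_*]$ for all $b\in\CB$, then $\dot p_b\equiv0$, so the net force on each $b\in\CB$ vanishes, and using that $\CB$ is nicely connected and propagating through $\CT\CB,\cT^2\CB,\dots$ together with the local injectivity of $-\nabla V_{e,\infty}$ one forces all the $\delta q_e$ (and all the $q_v$, via coercivity of the $U_{v,\infty}$) to be constant, hence the orbit has zero energy — a contradiction.

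The delicate bookkeeping is the interplay between the interaction and pinning potentials flagged in the introduction: the correct rescaling exponent is dictated by $\ell_i$ (the interaction degree), but the pinning terms $U_v$ of degree $\ell_p\le\ell_i$ must not blow up faster than the interaction terms under this rescaling — this is exactly where Condition~\ref{cond:lilp} is used, and why one must split the phase space into a region where pinning dominates and one where interaction dominates. I expect the proof to proceed by: (1) the It\^o computation above; (2) reduction of \eqref{eq:boundC1C2} to a statement that $\E_{z_0}\int_0^{t_*}\sum_b\|p_b\|^2\,ds \gtrsim H(z_0)^{a}$ with $a>0$ (or more precisely a statement about the dissipation functional being bounded below with high probability); (3) the rescaling argument and convergence to the limiting flow, treating separately the pinning-dominated and interaction-dominated regimes; (4) the rigidity argument for the limiting flow using controllability and C4; and (5) assembling the pieces with the supermartingale/Young-inequality step, being careful that the error terms from ``nearly homogeneous'' versus ``homogeneous'' are lower order and that the threshold on $H(z_0)$ can be taken uniform. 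The subtlety in step (5) is that one needs the lower bound to hold not just in expectation but with enough control on the lower tail to beat the exponential weight $e^{\theta H}$; this is arranged by first establishing the bound on an event of probability close to $1$ (via the convergence of rescaled paths, uniformly on compact time intervals, using that the noise contribution to a bath momentum over $[0,t_*]$ is of order $1$, hence negligible after rescaling when $H(z_0)$ is large), and controlling the complementary event crudely using \eqref{eq:lyapunovplusint}.
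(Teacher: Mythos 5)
Your proposal follows essentially the same route as the paper: the It\^o computation isolating the dissipation integral $\Gamma(t)=\sum_b\gamma_b\int_0^t \|p_b(s)\|^2\,ds$, a H\"older/supermartingale estimate (the paper's \lref{lem:bornerhoUt}) to convert a probabilistic lower bound on $\Gamma$ into the exponential-moment bound, a rescaling of time and phase space under which the dynamics converges to a limiting homogeneous Hamiltonian flow, and a rigidity argument for that flow that propagates through $\cT\CB,\cT^2\CB,\dots$ and invokes \cref{cond:limitingpotentials} to force the positions to be constant. You also correctly identify the oversight being repaired: the phase space must be split into interaction-dominated and pinning-dominated regions, which carry different natural timescales.

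There is, however, one genuine gap. The rescaling argument furnishes a dissipation lower bound only over a single natural-timescale window $\tau(z_0)\sim\lambda H(z_0)^{1/\ell_i-1/2}$ (resp.\ exponent $1/\ell_p$): over such a window the paper's \pref{p:tauxU} gives $\Gamma(\tau(z_0))\gtrsim H(z_0)\,\tau(z_0)\sim H(z_0)^{1/2+1/\ell_i}$ with high probability. This exponent is strictly less than $1$ whenever $\ell_i>2$, so it yields only the weaker bound $e^{-C_1 H^r(z_0)}$ with $r<1$, which is what \cite{ReyTho02ECT} obtained, not the linear rate claimed here. To reach $e^{-C_1 H(z_0)}$ one must iterate: partition $[0,t_*]$ adaptively using the stopping times $\tau_{j+1}=\tau_j+\tau(z_{\tau_j})$ and apply the short-window estimate $\approx H(z_0)^{1/2-1/\ell_i}$ times via the strong Markov property (the paper's \Cref{cor:multitaun}); this is the sharpening singled out in \rref{rem:sharper}. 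Your phrase ``$\gtrsim H(z_0)^a$ for some $a>0$'' followed by ``the passage to \eqref{eq:boundC1C2} is standard'' glosses over exactly this step; without the iteration you prove a weaker inequality than stated (still sufficient for exponential convergence downstream, but not this theorem). Relatedly, the event decomposition should be made precise: the paper partitions on energy level into $A_1$ (energy stays within a factor $2$ of $H(z_0)$), $A_2$ (it halves), $A_3$ (it doubles), handles $A_2$ and $A_3$ directly by Dynkin/supermartingale arguments without touching the dissipation, and runs the iterated dissipation estimate only on $A_1$, where the energy at each stopping time $\tau_j$ is comparable to $H(z_0)$ and the window lengths are therefore uniformly controlled.
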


\begin{remark}By the coercivity of $H$, the theorem above implies that there exist constants  $\kappa \in (0,1)$ and $c>0$, and a compact set $K$ such that
\begin{equs}
\E_z e^{\theta H(z_{t_*})} \leq \kappa  e^{\theta H(z)} + c\ind_{K}(z)~,
\end{equs}
which is the usual Lyapunov condition used in \cref{cond:HLyapunov}.
\end{remark}

For the remainder of the paper, we assume that Conditions \ref{cond:spanTTk}, \ref{cond:nearhom}, \ref{cond:limitingpotentials} and \ref{cond:lilp} are satisfied.

The central role in the proof of \tref{pr:lyap} will be played by the
{\em dissipation integral}
\begin{equs}\label{eq:dissipation_integral}
\Gamma(t) = \sum_{b\in \CB} \gamma_b \int_0^{t} p_b^2(s) ds~.
\end{equs}
In a nutshell, we will prove \eqref{eq:boundC1C2} by showing that if $H(z_0)$ is large enough, then with very high probability the main contribution to the energy difference $H(z_{t_*}) - H(z_0)$ comes from (minus) the dissipation integral $\Gamma(t_*)$, which, also with very high probability, scales like $H(z_0)$.

In order to do this, we start by partitioning, for each initial condition $z_0 \in \Omega$, the probability space into the following three events:
\begin{equs}
A_1 &= \left\{H(z_s) \in \left[\frac{H(z_0)}2, 2H(z_0)\right] ~\forall s\in[0,t_*]\right\}~,\\
A_2 &= \left\{\inf_{s\in[0,t_*]} H(z_s) < \frac{H(z_0)}2\right\}~,\\
A_3 &= \left\{\sup_{s\in[0,t_*]} H(z_s) > 2H(z_0)\right\}~.
\end{equs}

The event $A_1$ will be the center of most of our analysis. 
The event $A_2$ will be of no trouble, since after getting as low as $H(z_0)/2$, it is unlikely that the energy will increase again to a large value. Finally, the event $A_3$ will be of negligible probability at high energy. 

When the event $A_1$ is realized, we will cut the time interval $[0,t_*]$ into subintervals. The length of each subinterval will depend on the distribution of energy between the {\em interaction} and {\em center of mass} degrees of freedom as follows.

We introduce the center of mass coordinates
\begin{equs}[eq:defCDMcoord]
Q &= \frac 1 {|\CG|}\sum_{v\in \CG} q_v~, \qquad P = \sum_{v\in \CG} p_v~,
\end{equs}
and split the Hamiltonian according to
\begin{equs}[eq:separationHams]
H &= H_c+ H_i~,
\end{equs}
where
\begin{equs}
H_c &= \frac{ P^2}{2|\CG|} + \sum_{v\in \CG} U_v(q_v)~,\\
H_i & = \frac 12 \sum_{v\in \CG} \left({p_v} -\frac P {|\CG|}\right)^2 + \sum_{e\in \CE} V_{e}(\delta q_e)~.
\end{equs}

We then let
\begin{equs}[eq:defstoptx]
\tau(z) =  \begin{cases} \lambda H(z)^{\frac 1{\ell_i} - \frac 12} & \text{if } H_i(z) \geq H(z)/2~,\\
\lambda H(z)^{\frac 1{\ell_p} - \frac 12} & \text{if } H_c(z) > H(z)/2~,
\end{cases}
\end{equs}
where $\lambda > 0$ is arbitrary if $\ell_p > 2$, and subject to the condition $0 < \lambda \leq t_*/2$ if $\ell_p = 2$.
Note that $\tau(z)$ is {\em not} random when $z$ is fixed.

The rationale behind \eref{eq:defstoptx} is simple: when the system is
dominated by the ``internal'' dynamics, the natural time scale is
$H(z)^{ 1/{\ell_i} - 1/2}$. In the opposite case, the time
scale $H(z)^{ 1/{\ell_p} - 1/2}$ of the pinning potentials is
relevant. When $\ell_i = \ell_p$, this distinction of time scales
obviously vanishes.

The following proposition, which we will prove in \sref{sec:interacDom} and \sref{sec:pinningdom}, says that with a very large probability, the average dissipation rate over the time interval $[0, \tau(z_0)]$ is at least some fraction of the initial energy.

\begin{proposition}\label{p:tauxU}
Let
\begin{equ}
\tilde A = \left\{H(z_s) \leq  4H(z_0)~\forall s\in[0,\tau(z_0)]\right\}~.
\end{equ}
Then there exist $\varepsilon, B > 0$ such that for all  $z_0$ with $H(z_0)$ large enough,
\begin{equs}[eq:resultptauxU]
\P_{z_0}\left(\tilde A \cap \left\{\Gamma(\tau(z_0))< \varepsilon H(z_0)\tau(z_0) \right\}\right) \leq e^{-B H(z_0)}~.
\end{equs}
\end{proposition}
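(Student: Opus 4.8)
The plan is to reduce the statement to two sub-cases according to which branch of \eref{eq:defstoptx} defines $\tau(z_0)$, namely $H_i(z_0)\geq H(z_0)/2$ (the ``interaction-dominated'' regime) and $H_c(z_0) > H(z_0)/2$ (the ``pinning-dominated'' regime), and to treat these separately in \sref{sec:interacDom} and \sref{sec:pinningdom} respectively. Since $H = H_c + H_i$, at least one of these holds, so it suffices to prove \eref{eq:resultptauxU} on each of the two events. In both cases, $\tau = \tau(z_0)$ is a \emph{deterministic} time once $z_0$ is fixed, so there are no issues with stopping times, and we may freely use It\^o's formula on $[0,\tau]$.

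The core of the argument is a deterministic \emph{controllability-type} estimate: if the energy stays comparable to $H(z_0)$ throughout $[0,\tau]$ (which is the content of the event $\tilde A$), then the trajectory of the deterministic part of the flow, rescaled by $H(z_0)$ and by the natural time scale $\tau$, is close to the flow of the corresponding \emph{limiting} homogeneous Hamiltonian, by Condition~\ref{cond:nearhom} (using in particular Remark~\ref{replace510}(iii), which gives uniform convergence of the rescaled forces on compact sets). For this limiting flow one shows, using Conditions~\ref{cond:spanTTk} (controllability through the springs) and~\ref{cond:limitingpotentials} (local injectivity of the limiting interaction forces), that it is impossible for \emph{all} the bath momenta $p_b$, $b\in\CB$, to remain small for the entire interval: if they did, the local injectivity of $-\nabla V_{e,\infty}$ together with the propagation of ``$p_v$ is nearly constant'' through nicely connected vertices (the same induction along $\cT^k\CB$ as in the proof of \pref{prop:Hor}) would force the rescaled configuration to be frozen, contradicting that it carries a fixed positive fraction of the energy in a nontrivial way. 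Hence along the \emph{noiseless} flow one has a lower bound $\int_0^\tau \sum_{b\in\CB}\gamma_b p_b^2(s)\,ds \geq 3\varepsilon_0 H(z_0)\tau$ for some $\varepsilon_0 > 0$, uniformly over the relevant set of rescaled initial data (which is compact after quotienting by the dilation, using coercivity from Condition~\ref{cond:nearhom}).

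It then remains to transfer this deterministic lower bound to the stochastic system and to upgrade it to the exponential probability bound \eref{eq:resultptauxU}. On the event $\tilde A$ the diffusion coefficients $\sqrt{2T_b\gamma_b}$ are bounded and the drift is controlled (all forces are bounded by $C(1+\|z\|^{\ell_i-1})\lesssim H(z_0)^{1-1/\ell_i}$ by Remark~\ref{replace510}(ii) on $\tilde A$), so the stochastic trajectory differs from the noiseless one on $[0,\tau]$ by an amount whose rescaled size is controlled by the supremum of a rescaled Brownian motion; a large-deviations / Gaussian tail estimate (e.g. the reflection principle plus a Chernoff bound, exploiting that $\tau$ shrinks precisely so that $H(z_0)\tau^2$ scales like... indeed $H(z_0)\tau \asymp H(z_0)^{1/\ell_i + 1/2}\to\infty$) shows that the probability that this perturbation is large enough to destroy the factor-of-$3$ room in the lower bound is at most $e^{-BH(z_0)}$. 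Combining, on $\tilde A$ and outside this bad event we get $\Gamma(\tau)\geq \varepsilon H(z_0)\tau$, which is exactly \eref{eq:resultptauxU}. The main obstacle is the deterministic controllability lemma for the limiting homogeneous flow — making precise, uniformly in the rescaled initial condition, that the bath momenta cannot all stay small — since this is where Conditions~\ref{cond:spanTTk} and~\ref{cond:limitingpotentials} must be combined with the graph-theoretic propagation argument, and it is also the step where the distinction between the two time scales in \eref{eq:defstoptx} genuinely matters (the case $\ell_p = 2$, where $\lambda$ must be taken $\leq t_*/2$, needs separate care because the pinning time scale does not shrink).
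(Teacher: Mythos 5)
Your high-level plan matches the paper's for the \emph{interaction-dominated} case ($H_i(z_0)\geq H(z_0)/2$): rescale time and variables, compare the rescaled dynamics to the limiting homogeneous flow via \rref{replace510}(iii) and Gronwall (\lref{lem:lemmecontinuite}), show the limiting flow has a dissipation lower bound by the propagation argument through $\cT^k\CB$ combined with \cref{cond:limitingpotentials}, and conclude with a Gaussian tail estimate on $\sup\|\tilde W\|$. This is correct and is essentially \pref{prop:dissipGCInteracDom} and \lref{lem:convInteracDom}. (One small inaccuracy: when $\ell_i>\ell_p$ the paper's compactness is ``modulo global translations $q_v\mapsto q_v+\rho$'', not modulo dilations.)

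However, for the \emph{pinning-dominated} case ($\ell_i>\ell_p$ and $H_c(z_0)>H(z_0)/2$) your central assertion --- that ``the trajectory of the deterministic part of the flow, rescaled by $H(z_0)$ and by the natural time scale $\tau$, is close to the flow of the corresponding limiting homogeneous Hamiltonian'' --- is false, and this is precisely the point where the paper's proof bifurcates. After rescaling $q_v$ by $E^{-1/\ell_p}$ (the natural pinning scale), the interaction force
\begin{equ}
\nabla_{\tilde q_v}\bigl(E^{-1}V_e(E^{1/\ell_p}\delta\tilde q_e)\bigr)\sim E^{\ell_i/\ell_p-1}\|\delta\tilde q_e\|^{\ell_i-1}
\end{equ}
diverges pointwise as $E\to\infty$ (and its supremum over $\tilde K_E$ grows like $E^{1/\ell_p-1/\ell_i}$), so no limiting dynamics exists for the full rescaled system. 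The ``controllability of the limiting flow'' lemma you invoke simply has no object to apply to. The missing idea is to pass to center-of-mass coordinates $(\tilde P,\tilde Q)$, in which the divergent interaction forces cancel identically; the rescaled center-of-mass dynamics \emph{does} converge (by \lref{lem:convergencepotpindom}) to the Hamiltonian flow of $\hat H(\hat P,\hat Q)=\hat P^2/(2|\CG|)+U_\infty(\hat Q)$, and the deterministic input is a lower bound on the \emph{displacement} $\sup_\sigma\|\hat Q(\sigma)-\hat Q(0)\|\geq C$ (\pref{prop:limitDomBoundedBelow}), not on a dissipation integral, and it uses only coercivity and compactness --- no graph controllability or \cref{cond:limitingpotentials} here. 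One then closes the loop: on $\tilde A$ the ``tight molecule'' bound $\|\tilde q_b-\tilde Q\|\lesssim E^{1/\ell_i-1/\ell_p}\to 0$ plus Cauchy--Schwarz shows that small $\int\tilde p_b^2$ forces $\tilde Q$ to barely move, contradicting the displacement lower bound unless the noise term is atypically large. A secondary point you also omit: the restriction $H_i(z_0)\geq H(z_0)/2$ in the first case is what guarantees $\hat H(\tilde z_0)\geq 1/4$ there (otherwise $\tilde K_E$ contains points with $\tilde H=1$ but $\hat H=0$, as noted in \rref{rem:liglpnoncompact}), so the two cases cannot be merged even at the level of the comparison lemma.
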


For the remainder of this section, we assume that $\varepsilon, B$  are fixed as in \pref{p:tauxU}. 
We start with a corollary of \pref{p:tauxU}, which says that one can basically apply \pref{p:tauxU} to successive time intervals in order to obtain estimates on $\Gamma(t_*)$.
\begin{corollary}\label{cor:multitaun}
There exists $B'>0$ such that for all $z_0$ with $H(z_0)$ large enough,
\begin{equs}[eq:toshowcor54]
\P_{z_0}\left(A_1 \cap \left\{\Gamma(t_*) < \frac {\varepsilon  t_* }4 H(z_0) \right\}\right) \leq e^{- B'H(z_0)}~.
\end{equs}
\end{corollary}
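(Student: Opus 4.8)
The plan is to iterate Proposition~\ref{p:tauxU} along a sequence of stopping times and use the strong Markov property to control the cumulative dissipation. Here is the argument in outline.

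\medskip

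\noindent\textbf{Step 1: Define the iterates.} Set $\sigma_0 = 0$ and, recursively, $\sigma_{k+1} = \sigma_k + \tau(z_{\sigma_k})$, where $\tau$ is as in \eqref{eq:defstoptx}. On the event $A_1$ we have $H(z_s) \in [H(z_0)/2, 2H(z_0)]$ for all $s\in[0,t_*]$, so each time $z_{\sigma_k}$ satisfies $H(z_{\sigma_k}) \asymp H(z_0)$, and hence $\tau(z_{\sigma_k})$ is bounded below and above by fixed multiples of $H(z_0)^{1/\ell_i - 1/2}$ or $H(z_0)^{1/\ell_p - 1/2}$ — in any case by $c\, H(z_0)^{1/\ell_p - 1/2}$ from below (using $\ell_i \geq \ell_p$ from \cref{cond:lilp}, and noting these are $\leq t_*/2$ for $H(z_0)$ large when $\ell_p > 2$, or by the constraint on $\lambda$ when $\ell_p = 2$). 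Therefore on $A_1$ only a bounded number $N$ of steps are needed before $\sigma_N \geq t_*$; more precisely $N \leq C_2 H(z_0)^{1/2 - 1/\ell_p} + 1 =: N(z_0)$, which grows sub-linearly (or stays bounded if $\ell_p = 2$) in $H(z_0)$.

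\medskip

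\noindent\textbf{Step 2: Apply Proposition~\ref{p:tauxU} at each step via the strong Markov property.} For each $k$, condition on $\mathcal F_{\sigma_k}$. On the event that $H(z_{\sigma_k})$ is large (which holds on $A_1$), \pref{p:tauxU} applied with initial condition $z_{\sigma_k}$ tells us that, except on an event of conditional probability at most $e^{-B H(z_{\sigma_k})} \leq e^{-BH(z_0)/2}$, either the energy exceeds $4H(z_{\sigma_k})$ somewhere on $[\sigma_k, \sigma_{k+1}]$ (impossible on $A_1$, since there $H \leq 2H(z_0) \leq 4 H(z_{\sigma_k})$) or the dissipation accumulated over that subinterval is at least $\varepsilon H(z_{\sigma_k}) \tau(z_{\sigma_k}) \geq \varepsilon \tfrac{H(z_0)}{2}\tau(z_{\sigma_k})$. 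Summing over $k = 0, \dots, N-1$, on $A_1$ and off the union of these bad events we get
\begin{equ}
\Gamma(t_*) \geq \Gamma(\sigma_N) \geq \sum_{k=0}^{N-1} \varepsilon \frac{H(z_0)}{2} \tau(z_{\sigma_k}) = \frac{\varepsilon H(z_0)}{2}\, \sigma_N \geq \frac{\varepsilon H(z_0)}{2}\, t_* \geq \frac{\varepsilon t_*}{4}H(z_0)~,
\end{equ}
where we used $\Gamma$ nondecreasing, $\sigma_N \geq t_*$, and absorbed a factor.

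\medskip

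\noindent\textbf{Step 3: Bound the probability of the bad event.} The complement of \eqref{eq:toshowcor54} is contained, on $A_1$, in the union over $k=0,\dots,N(z_0)-1$ of the bad events from Step 2, each of conditional probability $\leq e^{-BH(z_0)/2}$. A union bound gives total probability $\leq N(z_0)\, e^{-BH(z_0)/2} \leq \bigl(C_2 H(z_0)^{1/2-1/\ell_p}+1\bigr)e^{-BH(z_0)/2}$. Since the polynomial prefactor is dominated by $e^{BH(z_0)/4}$ for $H(z_0)$ large, this is at most $e^{-B'H(z_0)}$ with, say, $B' = B/4$, for all $H(z_0)$ large enough. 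This yields \eqref{eq:toshowcor54}.

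\medskip

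\noindent\textbf{Main obstacle.} The delicate point is the bookkeeping of the stopping-time iteration: one must check that the $\sigma_k$ are genuine stopping times so that the strong Markov property and \pref{p:tauxU} can be invoked at each $z_{\sigma_k}$, and that the event $A_1$ (defined over $[0,t_*]$) is compatible with the event $\tilde A$ from \pref{p:tauxU} (defined over $[0,\tau(z_0)]$) at every scale — i.e. that $H(z_{\sigma_k}) \asymp H(z_0)$ on $A_1$ genuinely forces $H(z_s) \leq 4 H(z_{\sigma_k})$ on each subinterval, so that $\tilde A$ is automatic there. One must also take care that the number of steps $N(z_0)$, while possibly growing with $H(z_0)$, grows strictly slower than linearly (it is $O(H(z_0)^{1/2-1/\ell_p})$ with exponent in $[0,1/2)$, or $O(1)$ when $\ell_p=2$), so the union bound is not defeated by the exponential gain $e^{-BH(z_0)/2}$ per step. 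Everything else is routine.
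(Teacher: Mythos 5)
Your approach is essentially the paper's: iterate \pref{p:tauxU} along the stopping-time sequence $\sigma_0=0$, $\sigma_{k+1}=\sigma_k+\tau(z_{\sigma_k})$, use the strong Markov property and a union bound over the (at most polynomially many) subintervals. The paper does exactly this. However, Step~2 contains a genuine error in the final chain of inequalities. You define $N$ so that $\sigma_N\geq t_*$ and then write
$\Gamma(t_*)\geq\Gamma(\sigma_N)$. This is backwards: the dissipation integral $\Gamma$ is non-decreasing, so $\sigma_N\geq t_*$ gives $\Gamma(\sigma_N)\geq\Gamma(t_*)$, not the reverse. The estimate you derive thus concerns $\Gamma(\sigma_N)$, which is of no help in bounding $\Gamma(t_*)$ from below.

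The fix (and what the paper does) is to truncate the sum at the last index $J$ with $\sigma_J\leq t_*$, so that $\Gamma(t_*)\geq\Gamma(\sigma_J)=\sum_{k=0}^{J-1}\int_{\sigma_k}^{\sigma_{k+1}}\cdots\geq\frac{\varepsilon H(z_0)}{2}\sigma_J$. One then needs the additional observation that $\sigma_J\geq t_*/2$, which follows because $\sigma_{J+1}>t_*$ and $\tau(z_{\sigma_J})\leq t_*/2$. This is precisely where the restriction $\lambda\leq t_*/2$ (for $\ell_p=2$), or the vanishing of $H(z_0)^{1/\ell_p-1/2}$ (for $\ell_p>2$), becomes essential. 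You actually mention in a parenthetical in Step~1 that the step lengths are $\leq t_*/2$, but you never use that fact; it is exactly what is needed to close the gap. There is also a minor slip in Step~1: the lower bound on the step length should be $c\,H(z_0)^{1/\ell_i-1/2}$ rather than $c\,H(z_0)^{1/\ell_p-1/2}$ (since $1/\ell_i-1/2$ is the more negative exponent, it gives the smaller value at high energy), so the bound on $N$ should be $O(H(z_0)^{1/2-1/\ell_i})$; this does not affect the union-bound conclusion, as the exponent is still in $[0,1/2)$.
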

\begin{proof}

Fix $z_0$ and let $E =  H(z_0)$.
Consider the sequence of stopping times
\begin{equ}[eq:stoppingtimes]
\tau_0 =0~, \qquad \tau_{j+1}= \tau_j + \tau(z_{\tau_j})~,
\end{equ}
with $\tau(z)$ for $z\in \Omega$ as in \eref{eq:defstoptx}. We now introduce the random variable
\begin{equs}
J = \sup\{ j:  \tau_j \leq t_*\}~.
\end{equs}

On $A_1$, we have for all $t\leq t_*$ that
\begin{equs}
\lambda (2E)^{\frac 1{\ell_i} - \frac 12} \leq \tau(z_t) \leq  \lambda (E/2)^{\frac 1{\ell_p} - \frac 12}~,
\end{equs}
and hence that
\begin{equs}
J \leq \hat J \equiv \lfloor t_*(2E)^{\frac 12 - \frac 1 {\ell_i}}\lambda^{-1} \rfloor~.
\end{equs}
Moreover, if $E$ is large enough (and in the case $\ell_p = 2$, using that $\lambda \leq t_*/2$), we have on $A_1$ that $J > 0$ and that
\begin{equ}
\tau_{J} > t_* - \tau(z_{\tau_{J}})  \geq  \frac{t_*}2~.
\end{equ}

Consider next the events
\begin{equs}
G_j &=   \{J > j\}\cap \left\{\sum_{b\in \CB} \gamma_b \int_{\tau_j}^{\tau_{j+1}} p_b^2(s) ds < \varepsilon \tau(z_{\tau_j}) H(z_{\tau_j}) \right\}~,\\
G &= \bigcup_{j\geq 0}  G_j =  \left\{ \exists j< J : \sum_{b\in \CB} \gamma_b \int_{\tau_j}^{\tau_{j+1}} p_b^2(s) ds < \varepsilon \tau(z_{\tau_j}) H(z_{\tau_j}) \right\}~.
\end{equs}
We observe that the event $A_1 \cap \{J > j\}$ is a subset of
\begin{equ}
	\tilde A_j \equiv \left\{H(z_{\tau_j}) \geq \frac E 2 \text{ and }  H(z_t) \leq 4H(z_{\tau_j}) ~\forall t\in [\tau_j, \tau_{j+1}]\right\}~.
\end{equ}
Thus, if $E$ is large enough, we find by \pref{p:tauxU} and the strong Markov property that
for all $j\geq 0$,
\begin{equs}
\P_{z_0}(A_1 \cap G_j) \leq e^{-BE/2}~,
\end{equs}
so that
\begin{equs}[eq:probGG]
\P_{z_0}(A_1 \cap G) \leq \sum_{j=0}^{\hat J-1} \P_{z_0}(A_1 \cap G_j) \leq \hat J e^{-BE/2} \leq e^{-B' E}
\end{equs}
if $B' > 0$ is small enough and $E$ large enough. 

We observe next that on $A_1 \cap G^c$ and for all $E$ large enough,
\begin{equs}[eq:decompdiss]
\Gamma(t_*)& \geq \sum_{j=0}^{J-1}\sum_{b\in \CB} \gamma_b \int_{\tau_j}^{\tau_{j+1}} p_b^2(s) ds \geq  \sum_{j=0}^{J-1}\varepsilon \tau(z_{\tau_j}) H(z_{\tau_j}) \\
& \geq \frac{\varepsilon E}2  \sum_{j=0}^{J-1} \tau(z_{\tau_j}) =  \frac{\varepsilon E}2  \tau_{J} \geq \frac{\varepsilon E}4  t_*~.
\end{equs}
Thus, the left-hand side of \eqref{eq:toshowcor54} is bounded by $\P_{z_0}\left(A_1 \cap G\right)$, which by  \eref{eq:probGG} completes the proof.
\end{proof}

\begin{lemma}\label{lem:bornerhoUt}
There are constants $\rho, q >0$ such that for every initial condition $z_0$, every event $A$, and all $t>0$,
\begin{equs}[eq:rel1sq]
\E_{z_0} \left(e^{\theta H(z_{t}) - \theta H(z_0)}\ind_A\right)	 & \leq  e^{C_* t}\left(\E_z (e^{-\rho \Gamma(t) }\ind_A) \right)^{\frac 1 q} \leq e^{C_* t} ~,
\end{equs}
with again $C_* = \theta\sum_{ b\in \CB} \gamma_b T_b$.

\end{lemma}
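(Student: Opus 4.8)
The plan is to combine It\^o's formula with an exponential (Dol\'eans) supermartingale and to finish with a single H\"older inequality. First I would apply It\^o's formula to $t\mapsto\theta H(z_t)$ along \eqref{e:SDE}: the Hamiltonian part of the drift cancels, the friction contributes $-\theta\Gamma(t)$ (with $\Gamma$ as in \eqref{eq:dissipation_integral}), the It\^o correction contributes a term $C_* t$ (with $C_*$ the constant appearing in \eqref{eq:lyapunovplus}), and the noise terms assemble into a single continuous local martingale $M_t=\sum_{b\in\CB}\int_0^t\sqrt{2T_b\gamma_b}\,p_b(s)\cdot dW_b(s)$ whose bracket is $\langle M\rangle_t=2\Gamma_T(t)$, where $\Gamma_T(t):=\sum_{b\in\CB}T_b\gamma_b\int_0^t p_b^2(s)\,ds$ satisfies $0\le\Gamma_T(t)\le T_{\rm max}\,\Gamma(t)$. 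Exponentiating the resulting identity gives, pathwise,
\[
e^{\theta H(z_t)-\theta H(z_0)}=e^{C_* t}\,e^{-\theta\Gamma(t)}\,e^{\theta M_t}.
\]
Global existence of the solution of \eqref{e:SDE}, already recorded after \eqref{eq:lyapunovplusint}, ensures that $M$ and $\Gamma_T$ are a.s.\ finite, so It\^o's formula applies.

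Next I would strip off the martingale. Write $e^{\theta M_t}=\mathcal E_t\,e^{\theta^2\Gamma_T(t)}$ with $\mathcal E_t:=\exp\bigl(\theta M_t-\theta^2\Gamma_T(t)\bigr)$, the stochastic exponential of $\theta M$: being a nonnegative continuous local martingale it is a supermartingale with $\E_{z_0}\mathcal E_t\le1$, and likewise $\E_{z_0}\mathcal E^{(\sigma)}_t\le1$ for every $\sigma>0$, where $\mathcal E^{(\sigma)}_t:=\exp\bigl(\sigma M_t-\tfrac{\sigma^2}{2}\langle M\rangle_t\bigr)$. Since $\theta<1/T_{\rm max}$ and $\Gamma_T\le T_{\rm max}\Gamma$, we have $-\theta\Gamma(t)+\theta^2\Gamma_T(t)\le-\rho\,\Gamma(t)$ with $\rho:=\theta(1-\theta T_{\rm max})>0$, so that
\[
e^{\theta H(z_t)-\theta H(z_0)}\,\ind_A\le e^{C_* t}\,\mathcal E_t\,e^{-\rho\Gamma(t)}\,\ind_A .
\]

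It then remains to bound $\E_{z_0}\bigl(\mathcal E_t\,e^{-\rho\Gamma(t)}\,\ind_A\bigr)$. Fix $p>1$ small enough that $p(p-1)\,\theta^2 T_{\rm max}\le\rho$, and set $q:=p/(p-1)$. Splitting $e^{-\rho\Gamma}=e^{-\rho\Gamma/p}\cdot e^{-\rho\Gamma/q}$ and applying H\"older's inequality (using $\ind_A^2=\ind_A$),
\[
\E_{z_0}\bigl(\mathcal E_t\,e^{-\rho\Gamma(t)}\,\ind_A\bigr)\le\bigl(\E_{z_0}[\mathcal E_t^{\,p}e^{-\rho\Gamma(t)}]\bigr)^{1/p}\bigl(\E_{z_0}[e^{-\rho\Gamma(t)}\ind_A]\bigr)^{1/q}.
\]
For the first factor I would use the elementary identity $\mathcal E_t^{\,p}=\mathcal E^{(p\theta)}_t\,\exp\bigl(p(p-1)\theta^2\Gamma_T(t)\bigr)$; combined with $\Gamma_T\le T_{\rm max}\Gamma$ and the choice of $p$, this gives $\mathcal E_t^{\,p}e^{-\rho\Gamma(t)}\le\mathcal E^{(p\theta)}_t$, hence $\E_{z_0}[\mathcal E_t^{\,p}e^{-\rho\Gamma(t)}]\le1$ by the supermartingale bound. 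Putting everything together yields $\E_{z_0}\bigl(e^{\theta H(z_t)-\theta H(z_0)}\ind_A\bigr)\le e^{C_* t}\bigl(\E_{z_0}[e^{-\rho\Gamma(t)}\ind_A]\bigr)^{1/q}$, which is the first inequality of the lemma; the second is immediate because $0\le e^{-\rho\Gamma(t)}\ind_A\le1$ makes the $1/q$-th power $\le1$.

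The one point requiring care — and the reason a bare moment estimate does not work — is the martingale factor $e^{\theta M_t}$: its exponential moments cannot be controlled on their own, but the ``bad'' contributions it creates, namely $e^{\theta^2\Gamma_T(t)}$ and (after raising $\mathcal E_t$ to the power $p$) $e^{p(p-1)\theta^2\Gamma_T(t)}$, are of exactly the same nature as the ``good'' dissipative factor $e^{-\theta\Gamma(t)}$, up to the constant $T_{\rm max}$. A small sacrifice in the coefficient of $\Gamma$, affordable precisely because $\theta<1/T_{\rm max}$ is a strict inequality, then lets the dissipative factor swallow both, and this is what makes the H\"older step close. No Novikov-type integrability condition enters, since we only use that nonnegative local martingales are supermartingales.
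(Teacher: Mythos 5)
Your proof is correct and uses essentially the same ingredients as the paper: Itô's formula to express $\theta(H(z_t)-H(z_0))$ as $C_*t - \theta\Gamma(t) + \theta M_t$, a Doléans--Dade supermartingale bound to dispose of the martingale contribution, Hölder's inequality, and the domination $[M]_t \le 2T_{\max}\Gamma(t)$ so that the dissipative factor $e^{-\theta\Gamma}$ absorbs the quadratic-variation term. The only difference from the paper's write-up is the bookkeeping in the Hölder step: you first replace $e^{-\theta\Gamma+\theta^2\Gamma_T}$ by $e^{-\rho\Gamma}$ and then split $e^{-\rho\Gamma}=e^{-\rho\Gamma/p}e^{-\rho\Gamma/q}$, whereas the paper factors $e^{-\theta\Gamma+\theta M}$ as $\bigl(e^{-\theta\Gamma+\frac{p\theta^2}{2}[M]}\ind_A\bigr)\cdot e^{\theta M - \frac{p\theta^2}{2}[M]}$ and applies Hölder directly, landing on $\rho=\theta q(1-p\theta T_{\max})$ rather than your $\rho = \theta(1-\theta T_{\max})$ with the separate smallness condition on $p$. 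Both choices close cleanly; this is a cosmetic reorganization, not a different route.
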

\begin{proof}
	This proof is as in \cite{ReyTho02ECT,carmona_2007}. By applying the It\^o formula to $H(z_t)$, we find
\begin{equs}
\E_{z_0} \left(e^{\theta H(z_{t}) - \theta H(z_0)}\ind_A\right)	 & = e^{C_* t }\E_{z_0} \left(e^{-\theta  \Gamma(t) + \theta  M_{t} } \ind_A\right)~,
\end{equs}
where
\begin{equs}
M_t & =  \int_0^t \sum_{b\in \CB} \sqrt{2\gamma_b T_b} p_b(s) dW_b(s)~.
\end{equs}
The quadratic variation of $M_t$ satisfies
\begin{equ}[eq:quadvarMt]
~[M]_t =  2\int_0^t \sum_{b\in \CB} \gamma_b T_b p^2_b(s) ds \leq  2 T_{\rm max} \Gamma(t)~.
\end{equ}
Let $p> 1$ be such that $p\theta  < 1/T_{\rm max}$ and let $q$ be such that $\frac 1q + \frac 1p = 1$. By H\"older's inequality,
\begin{equs}
&\E_{z_0} \left(e^{-\theta  \Gamma(t) + \theta  M_{t} } \ind_A\right) = \E_{z_0} \left( e^{-\theta \Gamma(t) + \frac {p\theta^2}2 [M]_t} \ind_A e^{ \theta M_t -  \frac {p\theta^2}2 [M]_t}\right)\\
& \qquad \leq \left(\E_{z_0} e^{-\theta q \Gamma(t) + \frac {qp\theta^2}2 [M]_t} \ind_A \right)^{\frac 1 q}  \left(\E_{z_0} e^{ p \theta  M_t -  \frac {p^2\theta^2}2 [M]_t} \right)^{\frac 1 p}~.
\end{equs}
The expectation in the second bracket in the last line is $\leq 1$, since the exponential there is a  Dol\'eans--Dade exponential, and thus a supermartingale. Finally, by \eqref{eq:quadvarMt} we obtain \eref{eq:rel1sq} with $\rho = \theta q(1-p\theta T_{\rm max}) > 0$.
\end{proof}

\begin{lemma}\label{lem:apriori}
There exists $c>0$ such that for all $z_0$ with $H(z_0)$ large enough,
\begin{equs}[eq:2ndbound]
\P_{z_0}(A_3)\leq e^{-c H(z_0)}~.
\end{equs}
\end{lemma}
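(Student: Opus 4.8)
The plan is to control the probability that the energy doubles over the short time interval $[0,t_*]$ by using the Lyapunov-type bound from \lref{lem:bornerhoUt} together with a Markov/Chebyshev argument, after ``stopping'' the energy at the level $2H(z_0)$. First I would introduce the stopping time $\sigma = \inf\{s \geq 0 : H(z_s) \geq 2H(z_0)\}$, so that $A_3 = \{\sigma < t_*\}$ (up to a null set, since trajectories are continuous). On the event $A_3$ one has $H(z_\sigma) = 2H(z_0)$ exactly, by continuity of $s \mapsto H(z_s)$. The idea is then that reaching energy $2H(z_0)$ from $H(z_0)$ in a short time is exponentially unlikely at high energy, because the only mechanism that injects energy into the system is the noise at the baths, and the corresponding stochastic integral cannot grow that much that fast with appreciable probability.

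The key computation goes through the exponential martingale already used in \lref{lem:bornerhoUt}. Applying It\^o's formula to $H(z_{t})$ as in that proof, for any stopping time $\sigma \wedge t_*$ one gets
\begin{equ}
\E_{z_0}\bigl(e^{\theta H(z_{\sigma\wedge t_*})-\theta H(z_0)}\bigr) = e^{C_*(\sigma\wedge t_*)}\,\E_{z_0}\bigl(e^{-\theta\Gamma(\sigma\wedge t_*) + \theta M_{\sigma\wedge t_*}}\bigr)~,
\end{equ}
and since $\Gamma \geq 0$ and the optional stopping theorem applies to the Dol\'eans--Dade supermartingale $\exp(\theta M_t - \tfrac{1}{2}\theta^2[M]_t)$ together with the bound $[M]_t \leq 2T_{\rm max}\Gamma(t)$, one controls the right-hand side by $e^{C_* t_*}$ times a bounded factor exactly as in \lref{lem:bornerhoUt} (choosing $p>1$ with $p\theta < 1/T_{\rm max}$). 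Hence there is a constant $C$, independent of $z_0$, with $\E_{z_0}\bigl(e^{\theta H(z_{\sigma\wedge t_*})-\theta H(z_0)}\bigr) \leq C$. On the event $A_3 = \{\sigma < t_*\}$ we have $H(z_{\sigma\wedge t_*}) = H(z_\sigma) = 2H(z_0)$, so
\begin{equ}
\P_{z_0}(A_3)\, e^{\theta H(z_0)} \leq \E_{z_0}\bigl(e^{\theta H(z_{\sigma\wedge t_*})-\theta H(z_0)}\ind_{A_3}\bigr) \leq C~,
\end{equ}
which gives $\P_{z_0}(A_3) \leq C e^{-\theta H(z_0)}$, and absorbing $C$ into the exponential for $H(z_0)$ large enough yields \eqref{eq:2ndbound} with, say, $c = \theta/2$.

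The main point requiring a little care — and the only genuine obstacle — is the use of the optional stopping theorem on an \emph{unbounded} time horizon combined with the exponential supermartingale: one should either localize with a further sequence of stopping times $\sigma_n = \sigma \wedge t_* \wedge \inf\{s : \|z_s\| \geq n\}$ and pass to the limit using Fatou's lemma (the relevant exponential is nonnegative, so this is legitimate), or simply observe that on $[0,\sigma\wedge t_*]$ the energy, hence $|p_b(s)|$, is bounded by a deterministic constant depending on $z_0$, so $[M]$ is bounded and ordinary optional stopping applies. Either way the bound $\E_{z_0}\bigl(e^{\theta H(z_{\sigma\wedge t_*})-\theta H(z_0)}\bigr)\leq C$ holds with $C$ depending only on $t_*, \theta, T_{\rm max}$ and $\sum_b \gamma_b T_b$, not on $z_0$. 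Everything else is a direct repetition of the H\"older argument in \lref{lem:bornerhoUt}.
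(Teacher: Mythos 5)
Your proof is correct and follows the same overall strategy as the paper: stop the process at the first time the energy reaches $2H(z_0)$, obtain a $z_0$-uniform bound on the stopped exponential energy, and finish with a Chebyshev estimate. The only substantive difference is in the middle step: you rerun the It\^o/Dol\'eans--Dade/H\"older argument of \lref{lem:bornerhoUt} at the stopping time, whereas the paper avoids the H\"older step entirely by noting that $e^{\theta H - C_* t}$ is a pathwise supersolution (since $\CL V \le C_* V$, as recorded in \eref{eq:lyapunovplus}) and applying Dynkin's formula directly to the stopped process to get $\E_{z_0} e^{\theta H(z_\sigma)} \le e^{C_* t_* + \theta H(z_0)}$. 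Both routes give the same bound $\P_{z_0}(A_3) \le e^{C_* t_* - \theta H(z_0)}$; the paper's is a touch shorter because it reuses \eref{eq:lyapunovplus} rather than re-deriving the martingale decomposition, while yours has the mild advantage of making explicit that the same Dol\'eans--Dade mechanism underlies both lemmas. Also, your worry about optional stopping is somewhat overcautious: $\sigma \wedge t_*$ is a bounded stopping time and the Dol\'eans--Dade exponential is a non-negative local martingale, hence a supermartingale, so optional stopping applies directly without any localization; still, the workaround you describe is fine.
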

\begin{proof}
This is a classical result (see for example \cite{ReyTho02ECT} or the proof of Theorem 3.5 in \cite{Khas}). Observe that by \eref{eq:lyapunovplus},
\begin{equs}
(\partial_t + \CL)(e^{\theta H-C_*t}) = (\CL-C_*) e^{\theta H-C_*t}\leq 0~.
\end{equs}
Consider the stopping time $\sigma = \min(t_*, \inf\{t\geq 0: H(z_t) >  2H(z_0)\})$ (with the convention $\inf \emptyset = +\infty$).	 Then,  $\sigma$ is a bounded stopping time, and we have by Dynkin's formula
\begin{equs}
\E_{z_0} e^{\theta H(z_{ \sigma})-C_* t_*} & \leq 	\E_{z_0}  e^{\theta H(z_{ \sigma})-C_* \sigma} \\
& =  e^{\theta H(z_0)} + \E_{z_0} \int_0^{\sigma} ((\partial_s + \CL)(e^{\theta H-C_*s}))(z_s) d s~.
\end{equs}
As the expectation in the last line is non-positive, we find $\E_{z_0}  e^{\theta H(z_{ \sigma})} \leq 	 e^{C_* t_* +\theta H(z_0)}$, and thus
\begin{equs}
\P_{z_0}(A_3) & = \P_{z_0}\{\sigma <  t_*\} \leq  e^{-2\theta H(z_0)} \E_{z_0} \left(e^{\theta H(z_{ \sigma})} \ind_{\sigma < t_*}\right)\leq e^{C_* t_* - \theta H(z_0)}~,
\end{equs}
where the last inequality uses \eref{eq:rel1sq}.
Thus, choosing $c$ small enough completes the proof.
\end{proof}

We can now give the
\begin{proof}[of \tref{pr:lyap}]

First, we have by \lref{lem:bornerhoUt} and \Cref{cor:multitaun} that if $H(z_0)$ is large enough,
\begin{equs}[eq:pieceA1]
&\E_{z_0} \left(e^{\theta H(z_{t_*}) - \theta H(z_0)}\ind_{A_1}\right)	  \leq  e^{C_* t}\left(\E_{z_0} (e^{-\rho \Gamma(t_*) }\ind_{A_1}) \right)^{\frac 1 q} \\
& \qquad \leq e^{C_* t}\left(e^{- B'H(z_0)} + e^{-\rho \varepsilon t_* H(z_0)/4 } \right)^{\frac 1 q} \leq e^{- cH(z_0)}
\end{equs}
for some small enough $c>0$.
We next work on $A_2$. Consider the stopping time $\sigma = \min(t_*, \inf\{t\geq 0: H(z_t) <  H(z_0)/2\})$ (again with $\inf \emptyset = +\infty$). We have $A_2 = \{\sigma <  {t_*}\}$ and
\begin{equs}
	\E_{z_0} \left(e^{\theta H(z_{t_*})}\ind_{A_2}\right) &\leq e^{\theta H(z_0)/2}\E_{z_0} \left(e^{\theta H(z_{t_*})- \theta H(z_{\sigma})}\ind_{A_2}\right)  \leq e^{\theta H(z_0)/2 + C_* t_*}~,
\end{equs}
where we have used the strong Markov property, \eref{eq:rel1sq}, and the fact that $t_*-\sigma \leq t_*$. But then,
\begin{equs}[eq:pieceA2]
\E_{z_0} \left(e^{\theta H(z_{t_*})- \theta H(z_0)}\ind_{A_2}\right) \leq e^{C_*{t_*}- \theta H(z_0)/2} \leq e^{-c H(z_0)}~,
\end{equs}
if $c>0$ is small enough and $H(z_0)$ is large enough.

Finally, by \lref{lem:bornerhoUt} and \lref{lem:apriori}, we have
\begin{equs}[eq:pieceA3]
\E_{z_0} \left(e^{\theta H(z_{t_*})- \theta H(z_0)}\ind_{A_3}\right) & \leq e^{C_* {t_*} }\left(\P_{z_0} (A_3) \right)^{\frac 1 q} \leq e^{- cH(z_0)}~,
\end{equs}
which has the desired form again. Summing \eref{eq:pieceA1}, \eref{eq:pieceA2} and \eref{eq:pieceA3} completes the proof.
\end{proof}

\begin{remark}\label{rem:sharper} Above, we split the time interval $[0,t_*]$ into many subintervals, and apply \pref{p:tauxU} to each of them. This is what allows us to obtain \eref{eq:boundC1C2}, which is very natural from the dimensional point of view. In comparison,  \cite{ReyTho02ECT,carmona_2007}	 use the same Lyapunov function, but obtain weaker estimates (but still sufficient to obtain exponential convergence in \eref{eq:convergencemu}): the bound obtained in \cite{ReyTho02ECT} is $\E_{z_0} e^{\theta H(z_{t_*})- \theta H(z_0)}	\leq e^{- C_1 H^r(z_0)}$ with $r\in (0,1)$, and in \cite{carmona_2007} it is only shown that $\lim_{\|z_0\|\to\infty} \E_{z_0} e^{\theta H(z_{t_*})- \theta H(z_0)} = 0$.
\end{remark}

It now remains to prove \pref{p:tauxU}. In order to do so, we start with some technical lemmas.

\begin{lemma}\label{lem:lemmecontinuite}Let $r\geq 1$ and let $f:\real^r \to \real^r$ be a locally Lipschitz function.
For $T>0$, let $V \in \CC([0,T], \real^r)$ and consider
\begin{equ}
d x_t = f(x_t)dt + dV(t),\qquad d y_t = f(y_t)dt
\end{equ}
with initial conditions $x_0 = y_0 \in \real^r$.
Then, provided that both $x$ and $y$ exist up to time $T$,
\begin{equ}
\sup_{t\in [0,T]}\|x_t-y_t\|  \leq e^{k_*T}   \sup_{t \in [0,T]}\|V(t)\| \;,\qquad 
k_* = \sup_{t \in [0,T]}\frac{\|f(x_t)-f(y_t)\|}{\|x_t-y_t\|}~,
\end{equ}
with the convention $0/0 = 0$.
\end{lemma}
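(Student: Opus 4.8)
The plan is to reduce the statement to Gronwall's inequality. First I would rewrite both equations in integrated form. Since the perturbation enters only through $dV$ and $x_0 = y_0$, subtracting the two integral equations gives
\begin{equ}
x_t - y_t = \int_0^t \bigl(f(x_s) - f(y_s)\bigr)\,ds + V(t)\;,\qquad t\in[0,T]\;,
\end{equ}
where the driving term is interpreted so that $x_t = x_0 + \int_0^t f(x_s)\,ds + V(t)$ (i.e.\ $V(0)=0$). Set $u(t) = \|x_t - y_t\|$ and $\bar V = \sup_{s\in[0,T]}\|V(s)\|$; since $x$ and $y$ are continuous on $[0,T]$ by hypothesis, $u$ is continuous and $\bar V < \infty$.

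The second step is the pointwise Lipschitz bound for $f$ along the two trajectories. Since $\{x_t : t\in[0,T]\}$ and $\{y_t : t\in[0,T]\}$ are compact subsets of $\real^r$ and $f$ is locally Lipschitz, $f$ is Lipschitz on the compact set $K = \{x_t : t\in[0,T]\}\cup\{y_t : t\in[0,T]\}$; in particular $k_* < \infty$. Moreover, directly from the definition of $k_*$ together with the convention $0/0 = 0$ (and the fact that $x_t = y_t$ forces $f(x_t) = f(y_t)$), one has $\|f(x_t) - f(y_t)\| \le k_*\, u(t)$ for every $t\in[0,T]$. Taking norms in the identity above and using the triangle inequality then yields
\begin{equ}
u(t) \le k_* \int_0^t u(s)\,ds + \bar V\;,\qquad t\in[0,T]\;.
\end{equ}

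Finally, I would apply the integral form of Gronwall's inequality to the continuous function $u$: from the last display one gets $u(t) \le \bar V\, e^{k_* t} \le \bar V\, e^{k_* T}$ for all $t\in[0,T]$, and taking the supremum over $t\in[0,T]$ gives exactly the claimed bound. I do not expect any genuine obstacle here; the only point deserving a line of care is the finiteness of $k_*$, that is, the fact that mere \emph{local} Lipschitz continuity of $f$ suffices — which is precisely what the compactness of the two trajectories on the bounded interval $[0,T]$ provides. (If one prefers not to invoke $k_* < \infty$ at all, one can simply note that the inequality is vacuous when $k_* = \infty$ or when $\bar V = 0$.)
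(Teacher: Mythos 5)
Your argument is correct and follows exactly the paper's route: subtract the integral equations to obtain $\|x_t - y_t\| \le \int_0^t k_* \|x_s - y_s\|\,ds + \|V(t)\|$ and apply Gronwall's inequality. The extra remarks on the finiteness of $k_*$ via compactness of the trajectories and the convention $V(0)=0$ are sensible additions but do not change the substance of the proof.
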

\begin{proof}
Setting $\Delta_s = \|x_s-y_s\|$, we have
$\Delta_t \leq  \int_0^t k_* \Delta_s ds  	+  \|V(t)\|$
and the result follows from Gronwall's inequality.
\end{proof}

\begin{remark}\Label{rem:afterlemmecontinuite}
We will later use \lref{lem:lemmecontinuite} to show that, after adequate rescaling, \eref{e:SDE} (or a component thereof) converges to a deterministic dynamics at high energy.
\end{remark}

As a consequence of the definition of $H$, \cref{cond:nearhom} and \rref{replace510} (iv), we immediately
obtain
\begin{lemma}
There is a constant $C>0$ such that for all $z\in \Omega$, $v\in \CG$ and $e\in \CE$,
\begin{equ}[eq:scalingvar1]
\|q_v\| \leq C (1+H^{\frac 1 {\ell_p}}(z))~,\quad
\|\delta q_e\| \leq C(1+H^{\frac 1 {\ell_i}}(z))~,\quad
\|p_v\| \leq C H^{\frac 12}(z)~.
\end{equ}
	
\end{lemma}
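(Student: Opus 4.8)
The goal is to establish the three bounds in \eqref{eq:scalingvar1}, which control the spatial and momentum variables in terms of the energy. The plan is to exploit the coercivity of the pinning and interaction potentials at infinity together with the non-negativity of all potential terms.

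\textbf{Plan.} First I would treat the momentum bound, which is the easiest: since $H(p,q) = \sum_{v}\frac{p_v^2}{2} + U(q) \geq \frac{p_v^2}{2}$ for each fixed $v$ (using that all the potential terms $U_v$ and $V_e$ are non-negative by the normalization assumed at the end of \sref{s:nearly}), we get $\|p_v\| \leq \sqrt{2 H(z)} \leq C H^{1/2}(z)$ directly. Next, for the interaction variables $\delta q_e$: from \rref{replace510}(iv) applied to each $V_e$ (which is legitimate since $\inf_{\|x\|=1}V_{e,\infty}(x) > 0$ by \cref{cond:nearhom}), there is a constant $C'_{V_e} > 0$ such that $V_e(\delta q_e) \geq C'_{V_e}\|\delta q_e\|^{\ell_i}$ whenever $\|\delta q_e\|$ is large enough. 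Combined with $H(z) \geq V_e(\delta q_e)$ (again by non-negativity of the remaining terms), this yields $\|\delta q_e\|^{\ell_i} \leq C'' H(z)$ for $\|\delta q_e\|$ large, and for bounded $\|\delta q_e\|$ the inequality is trivial after adjusting the additive constant; hence $\|\delta q_e\| \leq C(1 + H^{1/\ell_i}(z))$. The same reasoning applied to $U_v$ gives $\|q_v\|^{\ell_p} \leq C'' H(z)$ for $\|q_v\|$ large, hence $\|q_v\| \leq C(1 + H^{1/\ell_p}(z))$.

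\textbf{Uniformity of constants.} The one point requiring a little care is that the constants $C'_{V_e}$, $C'_{U_v}$ and the thresholds beyond which \rref{replace510}(iv) applies depend a priori on the individual potentials; but since $\CG$ and $\CE$ are \emph{finite}, one simply takes the worst (smallest) constant and the largest threshold over all $v \in \CG$ and $e \in \CE$, obtaining a single $C$ that works uniformly. For $z$ with $H(z)$ bounded, all three left-hand sides are automatically bounded (by compactness of the level sets of $H$, \cref{cond:auxiliary}, which follows from \cref{cond:nearhom}), so the claimed inequalities hold after enlarging the additive constant. There is essentially no obstacle here — this lemma is a routine bookkeeping consequence of coercivity and non-negativity, assembled from \rref{replace510}(iv) and the structure of $H$ — which is presumably why the authors write ``we immediately obtain.''
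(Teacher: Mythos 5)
Your proposal is correct and follows exactly the route the paper intends: non-negativity of all potential terms gives $H \geq p_v^2/2$, $H \geq V_e(\delta q_e)$, $H \geq U_v(q_v)$, and then \rref{replace510}(iv) together with the coercivity in \cref{cond:nearhom} converts the latter two into polynomial lower bounds, with finiteness of $\CG$ and $\CE$ yielding uniform constants. The appeal to compactness of level sets at the end is harmless but superfluous — the additive constant already covers the small-$\|\delta q_e\|$ and small-$\|q_v\|$ regimes directly.
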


We are now ready to prove \pref{p:tauxU}. We treat the case where
$H_i(z_0) \geq H(z_0)/2$ in \sref{sec:interacDom} and the case where $H_c(z_0) > H(z_0)/2$ in \sref{sec:pinningdom}.
When $\ell_i = \ell_p$, such a distinction is not necessary and only the analysis in \sref{sec:interacDom} is required.

\subsection{When the interactions dominate}\label{sec:interacDom}
In this subsection, we make
\begin{assumption}\label{ass:interdom} If $\ell_i > \ell_p$, we assume that $z_0 \in \Omega$ is such that $H_i(z_0) \geq H(z_0)/2$.
(If $\ell_i = \ell_p$, we make no such restriction.)
\end{assumption}

We write $E =  H(z_0)$. Consider the rescaled time $\sigma = E^{\frac 12 - \frac 1 {\ell_i}}t$ and the variables
\begin{equs}[eq:rescaletilde1]
\tilde p_v(\sigma) &= E^{-\frac 12}p_v(E^{\frac 1 {\ell_i} - \frac 12}\sigma)~,\\
\tilde q_v(\sigma) &= E^{-\frac 1 {\ell_i}}q_v(E^{\frac 1 {\ell_i} - \frac 12}\sigma)~.
\end{equs}

We write $\tilde z = (\tilde p, \tilde q)$ and $\tilde z_0$ for the rescaled initial condition. We consider times $t\in [0,\tau(z_0)] = [0,\lambda E^{1{/\ell_i} -  1/2}]$, or equivalently $\sigma \in [0, \lambda]$.
Observe that in terms of the rescaled time and variables,  \eref{eq:resultptauxU} reads
\begin{equs}[eq:toshowinteracdom]
\P_{z_0}\left(\tilde A \cap \left\{\int_0^\lambda \sum_{b\in \CB} \gamma_b \tilde p_b^2(\sigma) d\sigma < \varepsilon \lambda\right\} \right)  \leq e^{-BE}~.
\end{equs}
In the remainder of this section, we show that \eref{eq:toshowinteracdom} holds provided $E$ is large enough and $z_0$ satisfies \aref{ass:interdom}.

Introducing
\begin{equation}
\tilde H(p,q) = \sum_{v\in \CG}\frac{p_v^2}2 +  \sum_{v\in \CG} E^{-1} U_v(E^{\frac 1 {\ell_i}}q_v)+ \sum_{e\in \CE}E^{-1}V_e(E^{\frac 1 {\ell_i}}\delta q_e)~,
\end{equation}
it is easy to see that
\begin{equs}[eq:tildesysinterac]
d \tilde q_v&= \tilde p_v d \sigma~,\\
d \tilde p_v &= -(\nabla_{q_v} \tilde H)(\tilde p, \tilde q)d\sigma- E^{\frac 1{\ell_i} - \frac 12 } \gamma_v \tilde p_v d \sigma\\
& \qquad  + E^{\frac 1{2 \ell_i} - \frac 34} \sqrt{2 T_v \gamma_v} d \tilde W_v(\sigma) ~,
\end{equs}
where $
\tilde W_v(\sigma) = E^{-\frac 1{2 \ell_i} + \frac 14} W_v(E^{\frac 1{\ell_i} - \frac 12}\sigma)
$ is again an $n$-dimensional Brownian motion. Clearly, in \eref{eq:tildesysinterac}, the stochastic term vanishes in the limit $E \to \infty$, and so does the dissipative term, except when $\ell_i = 2$.

Observe that when $E\to \infty$, the Hamiltonian $\tilde H$ converges pointwise to
\begin{equation}
\hat H(p,q) = \sum_{v\in \CG}\frac{p_v^2}2 +  \delta_{\ell_i, \ell_p}\sum_{v\in \CG} U_{v, \infty}(q_v)+ \sum_{e\in \CE}V_{e,\infty}(\delta q_e)~,
\end{equation}
where $U_{v, \infty}$ and $V_{e,\infty}$ are defined in \cref{cond:nearhom}.

Moreover, by construction,
\begin{equ}
 	H(z) = E \tilde H(\tilde z)~,
\end{equ}
and in particular, 
\begin{equ}[eq:initenergy1]
\tilde H(\tilde z_0) = 1~.
\end{equ}

We introduce the set
\begin{equs}
\tilde K_E = \{\tilde z: \tilde H(\tilde z) \leq 4\}~.
\end{equs}
On the event $\tilde A$, we have $H(z_t) \leq 4E$ for all $t\in [0, \tau(z_0)]$, and hence also
\begin{equ}
\tilde z_\sigma \in \tilde K_E, \quad 0\leq \sigma \leq \lambda~.
\end{equ}

By \eref{eq:scalingvar1}, there exists $\tilde C > 0$ such that if $E$ is large enough, we have that for all $\tilde z \in \tilde K_E$,
\begin{equ}[eq:scalingvartilde1]
\|\tilde q_v\| \leq   \tilde C E^{\frac 1 {\ell_p} - \frac 1 {\ell_i}}~,\quad
\|\delta \tilde  q_e\| \leq \tilde C ~,\quad
\|\tilde p_v\| \leq  \tilde C ~.
\end{equ}

\begin{remark}\label{rem:liglpnoncompact}
Note that if $\ell_i > \ell_p$, then $\tilde q_v$ may become arbitrarily large when $E$ is large, so that the set $\tilde K_E$ is not bounded uniformly in $E$. In fact, when $\ell_i > \ell_p$, it is {\em not} true that $\sup_{\tilde z \in \tilde K_E} | \tilde H(\tilde z) -    \hat H(\tilde z)|$ goes to zero when $E\to \infty$. Indeed, for all $E$ one can find $\tilde z \in \tilde K_E$ such that all the energy is in the pinning potential, so that $\hat H(\tilde z) = 0$ but $\tilde H(\tilde z) = 1$. This explains why we have to restrict ourselves to initial conditions such that $H_i(z_0) \geq H(z_0)/2$ (which will guarantee that $\hat H (\tilde z_0)$ is not too small), and then treat the opposite case separately in \sref{sec:pinningdom}. This distinction is missing from the proofs in \cite{ReyTho02ECT,carmona_2007}.
\end{remark}

\begin{lemma}\label{lem:convergenceTildeInteracDom}
For all $0\leq |\alpha | \leq 1$ and $e\in \CE$, we have
\begin{equs}[eq:limInteracTilde]
	\lim_{E\to\infty}\sup_{\tilde z \in \tilde K_E} \left|D^\alpha_{\tilde z}\left(E^{-1}V_e(E^{\frac 1 {\ell_i}}\delta \tilde q_e) - V_{e,\infty}(\delta \tilde q_e)\right) \right| &= 0~.\\
\end{equs}
Let $v\in\CG$.
If $\ell_i = \ell_p$ and $0\leq |\alpha | \leq 1$ (case 1) or if $\ell_i >\ell_p$
and $|\alpha |=1$ (case 2), then:
\begin{equs}
\lim_{E\to\infty}\sup_{\tilde z \in \tilde K_E}
        \left|D^\alpha_{\tilde z}\left(E^{-1} U_v(E^{\frac 1
          {\ell_i}}\tilde q_v) - \delta_{\ell_i, \ell_p} U_{v,\infty}(\tilde q_v)\right)\right| &= 0~,\label{eq:limPinningTilde}\\
	\lim_{E\to\infty}\sup_{\tilde z \in \tilde K_E} \left|(D^\alpha \tilde H)(\tilde z) -   (D^\alpha \hat H)(\tilde z)\right| &= 0~.\label{eq:limHTildeequa}
	\end{equs}
\end{lemma}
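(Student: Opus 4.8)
The strategy is to reduce each limit to the ``near-homogeneity'' estimates of \rref{replace510}(iii), using the bound \eqref{eq:scalingvartilde1} on $\tilde K_E$ to control the arguments of the potentials. For the interaction terms \eqref{eq:limInteracTilde}, fix $e\in\CE$ and note that $\delta\tilde q_e$ ranges over the ball $\{\|x\|\leq \tilde C\}$ as $\tilde z$ ranges over $\tilde K_E$ (for $E$ large). By near-homogeneity of $V_e$ (with degree $\ell_i$ and limiting function $V_{e,\infty}$), for $|\alpha|=0$ one has $E^{-1}V_e(E^{1/\ell_i}x) = \lambda^{-\ell_i}V_e(\lambda x)$ with $\lambda = E^{1/\ell_i}$, so the convergence $\sup_{\|x\|\leq \tilde C}|\lambda^{-\ell_i}V_e(\lambda x) - V_{e,\infty}(x)|\to 0$ is exactly \rref{replace510}(iii) with the compact set $K=\{\|x\|\leq \tilde C\}$. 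For $|\alpha|=1$, I would use the chain rule: $D^\alpha_{\tilde z}\bigl(E^{-1}V_e(E^{1/\ell_i}\delta\tilde q_e)\bigr) = E^{-1+1/\ell_i}(\nabla V_e)(E^{1/\ell_i}\delta\tilde q_e)\cdot(\text{const})$, which equals $\lambda^{-(\ell_i-1)}(\nabla V_e)(\lambda\,\delta\tilde q_e)$ up to the fixed $\pm 1$ coefficients coming from $\partial(\delta\tilde q_e)/\partial\tilde q_v$; again \rref{replace510}(iii) applied to $\nabla V_e$ (note $D^\alpha V_{e,\infty} = D^\alpha_{\tilde z}V_{e,\infty}(\delta\tilde q_e)$ up to the same coefficients) gives the claim.

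For the pinning terms \eqref{eq:limPinningTilde}, the key observation is that the argument of $U_v$ is $E^{1/\ell_i}\tilde q_v$, but $\tilde q_v$ only satisfies $\|\tilde q_v\|\leq \tilde C E^{1/\ell_p - 1/\ell_i}$, so $E^{1/\ell_i}\tilde q_v$ ranges over a ball of radius $\tilde C E^{1/\ell_p}$, which is \emph{not} fixed. In case 1 ($\ell_i = \ell_p$), this radius is again $\tilde C$, so the same argument as for $V_e$ applies verbatim with $\lambda = E^{1/\ell_p}$. In case 2 ($\ell_i > \ell_p$, $|\alpha|=1$), we must show $E^{-1+1/\ell_i}(\nabla U_v)(E^{1/\ell_i}\tilde q_v)\to 0$ uniformly on $\tilde K_E$; here I would write $E^{-1+1/\ell_i}(\nabla U_v)(y)$ with $y = E^{1/\ell_i}\tilde q_v$, use the growth bound \rref{replace510}(ii), $\|\nabla U_v(y)\|\leq C_U(1+\|y\|^{\ell_p-1})$, so the quantity is bounded by $C_U E^{-1+1/\ell_i}(1 + (\tilde C E^{1/\ell_p})^{\ell_p - 1}) \leq C' E^{-1+1/\ell_i} E^{1-1/\ell_p} = C' E^{1/\ell_i - 1/\ell_p}\to 0$ since $\ell_i > \ell_p$. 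This is the crux of why the pinning contribution drops out when $\ell_i > \ell_p$, and it is the place where \cref{cond:lilp} ($\ell_i \geq \ell_p$) is genuinely used; it is also precisely the point addressed in \rref{rem:liglpnoncompact}.

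Finally, \eqref{eq:limHTildeequa} follows by summing: $\tilde H - \hat H = \sum_{v}\bigl(E^{-1}U_v(E^{1/\ell_i}\tilde q_v) - \delta_{\ell_i,\ell_p}U_{v,\infty}(\tilde q_v)\bigr) + \sum_e\bigl(E^{-1}V_e(E^{1/\ell_i}\delta\tilde q_e) - V_{e,\infty}(\delta\tilde q_e)\bigr)$ (the kinetic terms cancel exactly), and one applies \eqref{eq:limInteracTilde} and \eqref{eq:limPinningTilde} termwise; the finiteness of $\CG$ and $\CE$ makes the sum of finitely many uniform limits again a uniform limit. One must only check that in case 2 with $|\alpha|=0$ the pinning term of $\hat H$ is genuinely absent (the $\delta_{\ell_i,\ell_p}$ kills it), which is consistent since \eqref{eq:limPinningTilde} is only asserted for $|\alpha|=1$ in that case, and correspondingly \eqref{eq:limHTildeequa} for $|\alpha|=0$ needs no information about $E^{-1}U_v(E^{1/\ell_i}\tilde q_v)$ itself --- wait, it does; so for $|\alpha|=0$ in case 2 I would instead bound $E^{-1}U_v(E^{1/\ell_i}\tilde q_v)$ directly: by \rref{replace510}(ii), $E^{-1}|U_v(E^{1/\ell_i}\tilde q_v)|\leq C_U E^{-1}(1 + (\tilde C E^{1/\ell_p})^{\ell_p}) \leq C' E^{-1}E = C'$, which is bounded but not small --- so in fact \eqref{eq:limHTildeequa} for $|\alpha|=0$ can only hold in case 1, consistent with the lemma statement which for case 2 restricts to $|\alpha|=1$ throughout. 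The main obstacle in the whole proof is thus bookkeeping: keeping straight which scaling exponent ($1/\ell_i$ versus $1/\ell_p$) governs which variable, and recognizing that the pinning variable $\tilde q_v$ is rescaled ``wrongly'' relative to its own potential, which is exactly the subtlety \cite{ReyTho02ECT,carmona_2007} missed.
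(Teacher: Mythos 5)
Your proof is correct and follows essentially the same route as the paper: \eqref{eq:limInteracTilde} and case 1 of \eqref{eq:limPinningTilde} both come from the compact-set convergence in \rref{replace510}(iii) combined with the scaling bounds \eqref{eq:scalingvartilde1}, and case 2 comes from the polynomial growth bound in \rref{replace510}(ii), which shows the rescaled pinning gradient vanishes at rate $E^{1/\ell_i - 1/\ell_p}$ (the paper records the two-term bound $c'(E^{|\alpha|/\ell_i - 1} + E^{|\alpha|/\ell_i - |\alpha|/\ell_p})$, which simplifies to your estimate when $|\alpha|=1$); \eqref{eq:limHTildeequa} then follows by termwise summation after noting the kinetic parts cancel. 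The mid-proof detour about $|\alpha|=0$ in case 2 is ultimately resolved correctly (the lemma simply does not assert anything there), but in a final write-up you should just state that case 2 is restricted to $|\alpha|=1$ from the outset rather than discovering it by a failed computation --- this restriction is in the hypothesis, not something you need to re-derive.
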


\begin{proof}
The first identity follows immediately from \cref{cond:nearhom}, \rref{replace510} (iii) and \eref{eq:scalingvartilde1}.
Assume now we are in case 1. By \cref{cond:nearhom} and  \rref{replace510} (iii),
\begin{equ}
	\lim_{E\to\infty}\sup_{\|x\| \leq  \tilde C }  \left|\left(E^{\frac{|\alpha|}{\ell_i}-1} (D^\alpha U_v)(E^{\frac 1 {\ell_i}}x) - (D^\alpha U_{v,\infty})(x)\right)\right| = 0~.
\end{equ}
This together with \eref{eq:scalingvartilde1} proves \eref{eq:limPinningTilde}. 

Assume now we are in case 2. By \eref{eq:scalingvartilde1}, in order to prove \eref{eq:limPinningTilde}, it is enough to show that when  $  |\alpha| = 1$,
\begin{equ}[eq:equivboundlime]
	\lim_{E\to\infty}\sup_{\|x\| \leq  \tilde C E^{\frac 1 {\ell_p} - \frac 1 {\ell_i}}}  \left|E^{\frac{|\alpha|}{\ell_i}-1} (D^\alpha U_v)(E^{\frac 1 {\ell_i}}x) \right| = 0~.
\end{equ}
By \rref{replace510} (ii), there exists a constant $c$ such that $ |(D^\alpha U_v)(E^{\frac 1 {\ell_i}}x)| \leq c(1+E^{ {\ell_p/}{\ell_i} - {|\alpha|/}{\ell_i}}\|x\|^{\ell_p - |\alpha|})$. From this, we obtain that for some $c'>0$, the supremum in \eref{eq:equivboundlime} is bounded above by
\begin{equ}
	 c'(E^{\frac{|\alpha|}{\ell_i}-1} + E^{   \frac {|\alpha|} {\ell_i} -\frac {|\alpha|} {\ell_p}})~.
\end{equ}
Clearly, since  $ |\alpha | = 1 < 2 \leq \ell_p < \ell_i$, the above vanishes when $E\to \infty$, which proves \eref{eq:equivboundlime} and hence \eref{eq:limPinningTilde}.

Finally, in both cases, combining \eref{eq:limInteracTilde} and \eref{eq:limPinningTilde} yields \eref{eq:limHTildeequa} (recalling that the kinetic parts in $\hat H$ and $\tilde H$ are identical).
\end{proof}

We now observe that for all $E$ large enough,
\begin{equ}[eq:hatHTildex0]
\hat H(\tilde z_0) \in [1/4, 2]~.
\end{equ} 
Indeed, if $\ell_i = \ell_p$, this follows from \eref{eq:initenergy1} and \eref{eq:limHTildeequa}. If $\ell_i > \ell_p$, then  \aref{ass:interdom} ensures that $|\sum_{v\in \CG} E^{-1} U_v(E^{ 1/{\ell_i}}q_v)| \leq  1/2$, so that \eref{eq:initenergy1} and \eref{eq:limInteracTilde} indeed imply \eref{eq:hatHTildex0} for $E$ large enough.

Next, \eref{eq:tildesysinterac} can be rewritten as
\begin{equs}[eq:tildesysinteracR]
d \tilde q_v&= \tilde p_v d \sigma~,\\
d \tilde p_v &= -(\nabla_{q_v} \hat H)(\tilde p, \tilde q)d\sigma + \tilde R_v(\tilde q)d \sigma \\
& \qquad - E^{\frac 1{\ell_i} - \frac 12 } \gamma_v \tilde p_v d \sigma + E^{\frac 1{2 \ell_i} - \frac 34} \sqrt{2 T_v \gamma_v} d \tilde W_v(\sigma) ~,
\end{equs}
where $
\tilde R_v(\tilde q) =  -\nabla_{\tilde q_v} (\tilde H(\tilde z) - \hat H(\tilde z))$, which by \lref{lem:convergenceTildeInteracDom} satisfies, regardless of whether $\ell_i > \ell_p$ or $\ell_i = \ell_p$, 
\begin{equ}[eq:tildeRtozero]
\lim_{E\to\infty}	\sup_{\tilde z\in \tilde K_E}\| \tilde R_v(\tilde q_v) \| \to 0~.
\end{equ}

Consider now the deterministic limiting system
\begin{equs}[eq:limitsysteminterac]
d \hat q_v&= \hat p_v d \sigma~,\\
d \hat p_v &= -(\nabla_{q_v} \hat H)(\hat z)d\sigma - \delta_{\ell_i, 2}\gamma_v \hat p_v d \sigma ~,
\end{equs}
with initial condition $\hat z_0 = \tilde z_0$.

\begin{proposition}\label{prop:dissipGCInteracDom}
There is a constant $C>0$ such that for every initial condition $\hat z_0$ such that $\hat H(\hat z_0) \in [1/4, 2]$, the solution of \eref{eq:limitsysteminterac} satisfies
\begin{equs}[eq:int0tauhat]
\int_0^\lambda \sum_{b\in \CB} \gamma_b \hat p_b^2(\sigma) d\sigma \geq C~.
\end{equs}
\end{proposition}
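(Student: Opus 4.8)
The plan is to argue by a compactness and continuous-dependence argument, reducing the uniform lower bound to a \emph{pointwise} positivity statement which is in turn obtained from the controllability of the network through Conditions~\ref{cond:spanTTk} and~\ref{cond:limitingpotentials}.

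First I would record the basic features of the flow \eqref{eq:limitsysteminterac}. A direct computation along solutions gives $\frac{d}{d\sigma}\hat H(\hat z_\sigma)=-\delta_{\ell_i,2}\sum_{v\in\CG}\gamma_v\hat p_v^2(\sigma)\le 0$ (an identity valid in all cases, since $\ell_i>\ell_p$ forces $\ell_i>2$ and both sides vanish), so $\hat H$ is non-increasing along trajectories. Using the coercivity from \cref{cond:nearhom}, which yields $V_{e,\infty}(x)\ge c_e\|x\|^{\ell_i}$ and, when $\ell_i=\ell_p$, $U_{v,\infty}(x)\ge c_v\|x\|^{\ell_p}$, the bound $\hat H(\hat z_\sigma)\le\hat H(\hat z_0)\le 2$ controls $\|\hat p_v(\sigma)\|$ and $\|\delta\hat q_e(\sigma)\|$ uniformly on $[0,\lambda]$; since $\|\dot{\hat q}_v\|=\|\hat p_v\|$ is then bounded, solutions exist on all of $[0,\lambda]$, and since $\nabla V_{e,\infty}$ and $\nabla U_{v,\infty}$ are locally Lipschitz (part of near-homogeneity) the right-hand side of \eqref{eq:limitsysteminterac} is Lipschitz on the bounded region swept out by the relevant trajectories, so by Gronwall the map $\hat z_0\mapsto(\hat z_\sigma)_{\sigma\in[0,\lambda]}$, and hence $\hat z_0\mapsto\Phi(\hat z_0):=\int_0^\lambda\sum_{b\in\CB}\gamma_b\hat p_b^2(\sigma)\,d\sigma$, is continuous. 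Finally, the set $S=\{\hat z:\hat H(\hat z)\in[1/4,2]\}$ is compact when $\ell_i=\ell_p$; when $\ell_i>\ell_p$ the Hamiltonian $\hat H$ and the (then dissipation-free) flow \eqref{eq:limitsysteminterac} depend on $q$ only through the differences $\delta q_e$ and are invariant under the global shift $q_v\mapsto q_v+a$, $a\in\real^n$, under which $\Phi$ is unchanged, so it suffices to work on the compact set $S_0=\{\hat z\in S:\sum_{v\in\CG}q_v=0\}$. In all cases, once we know $\Phi>0$ pointwise on $S$ (resp. $S_0$), continuity plus compactness give $C:=\min\Phi>0$, which is the claim.

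It remains to prove $\Phi(\hat z_0)>0$ for every $\hat z_0$ with $\hat H(\hat z_0)\in[1/4,2]$. Suppose instead $\Phi(\hat z_0)=0$; since $\gamma_b>0$ for $b\in\CB$ and each $\hat p_b$ is continuous, this forces $\hat p_b\equiv 0$ on $[0,\lambda]$ for all $b\in\CB$, whence each such $\hat q_b$ is constant and the $\hat p$-equation gives $\nabla_{q_b}\hat H(\hat z_\sigma)=0$ for all $\sigma$. I would then propagate this along the definition of $\CT$: for $v\in\CT\CB\setminus\CB$ with witness $b\in\CB$ (so every neighbour of $b$ other than $v$ lies in $\CB$), all terms of $\nabla_{q_b}\hat H(\hat z_\sigma)$ except $\nabla V_{(b,v),\infty}(\hat q_v(\sigma)-\hat q_b)$ are constant in $\sigma$, so $\sigma\mapsto\nabla V_{(b,v),\infty}(\hat q_v(\sigma)-\hat q_b)$ is constant; combining the local injectivity of $\nabla V_{(b,v),\infty}$ (\cref{cond:limitingpotentials}) with the continuity of $\hat q_v$, a standard connectedness bootstrap on $[0,\lambda]$ forces $\hat q_v$ to be constant, hence $\hat p_v\equiv 0$ and $\nabla_{q_v}\hat H(\hat z_\sigma)=0$ for all $\sigma$. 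Iterating over $\CT^2\CB,\CT^3\CB,\dots$ and using $\CT^k\CB=\CG$ for some $k$ (\cref{cond:spanTTk}) gives $\hat p_v\equiv 0$ and $\hat q_v$ constant for every $v\in\CG$. Therefore $\hat z_0$ has zero kinetic energy and $\nabla_{q_v}\hat H(\hat z_0)=0$ for all $v$; contracting with $q_v^{(0)}$, summing over $v$, and using Euler's relation for the homogeneous functions $V_{e,\infty}$ and $U_{v,\infty}$ yields $\ell_i\sum_{e\in\CE}V_{e,\infty}(\delta q_e^{(0)})+\delta_{\ell_i,\ell_p}\ell_p\sum_{v\in\CG}U_{v,\infty}(q_v^{(0)})=0$, and since by coercivity every term is non-negative this forces $\hat H(\hat z_0)=0$, contradicting $\hat H(\hat z_0)\ge 1/4$.

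The main obstacle is the propagation step: turning ``the heat-bath momenta vanish identically'' into ``the whole trajectory is frozen'' is exactly where Conditions~\ref{cond:spanTTk} and~\ref{cond:limitingpotentials} enter, and the delicate point is that local injectivity of $\nabla V_{e,\infty}$ only determines $\hat q_v$ locally, so one must bootstrap along $[0,\lambda]$ by a connectedness argument rather than conclude in one step. Everything else — the energy monotonicity, the a priori bounds, continuity of the flow map, and the handling of the non-compact translation directions when $\ell_i>\ell_p$ — is routine once this is in place.
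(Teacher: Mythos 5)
Your argument follows the paper's proof in both structure and substance: first the pointwise positivity via the propagation along $\CT\CB,\CT^2\CB,\dots$ using \cref{cond:limitingpotentials} to freeze each newly reached vertex, then the compactness step with the same case split ($\ell_i=\ell_p$ compact; $\ell_i>\ell_p$ compact modulo global translation, under which the flow and the dissipation integral are invariant). You supply a few details the paper leaves implicit — the open–closed bootstrap turning local injectivity into global constancy, and the Euler-relation computation showing that $\nabla_{q_v}\hat H(\hat z_0)=0$ and zero momenta force $\hat H(\hat z_0)=0$ — but the route is the same.
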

\begin{proof}
We first show that 
\begin{equs}[eq:int0tauhatavant]
\int_0^\lambda \sum_{b\in \CB} \gamma_b \hat p_b^2(\sigma) d\sigma > 0 \quad \text{if }\quad   \hat H(\hat z_0)> 0~.
\end{equs}
Indeed, assume  the left-hand side of \eref{eq:int0tauhatavant} is
zero. Then, for all $b\in \CB$, we have $\hat p_b(\sigma) \equiv  0$
on $[0, \lambda]$. Take now $v \in \CT \CB \setminus \CB
$. There exists then $b\in \CB$ such that $b$ is linked only to $v$
and possibly some vertices in $\CB$. Now, since the masses in $\CB$ do not move,
all forces among them are constant (this applies to both the
interaction forces $-\nabla V_{e, \infty}$ with $e\in \CB \times \CB$ and, if $\ell_i = \ell_p$, to the pinning forces $-\nabla U_{b', \infty}$ with $b'\in \CB$). Thus, since the total
force on $b$ is identically zero, we must have that $\nabla V_{(b,v),
  \infty}(\hat q_b(\sigma)-\hat  q_{v}(\sigma))$ is constant. But then,
by \cref{cond:limitingpotentials}, this means that actually $\hat
q_b(\sigma) -\hat  q_{v}(\sigma)$ is constant, and hence that so is $\hat
q_{v}(\sigma)$. We have thus shown that $\hat  p_v(\sigma) \equiv 0$ for all $v\in \CT
\CB$. Proceeding in the same way, we obtain inductively that the same
holds for all $v$ in $\CT^2 \CB$, $\CT^3 \CB$, etc. Thus, by
\cref{cond:spanTTk}, we eventually obtain that 
no mass moves during
the time interval $[0, \lambda]$. But then we
have $\hat p_v(0)=0$ and $\nabla_{q_v} \hat H(\hat z_0) = 0$ for all $v\in \CG$,
which is only possible if $\hat H (\hat z_0) = 0$, so that \eqref{eq:int0tauhatavant} holds.

We now complete the proof of the proposition using a compactness argument and the fact that the solution of \eref{eq:limitsysteminterac} depends continuously on the initial condition $\hat z_0$. In order to do so, there are two cases to consider.
\begin{itemize}
	\item $\ell_i = \ell_p$. Then, the set $\{\hat z: \hat H(\hat z) \in [1/4,2]\}$ is compact, and hence \eref{eq:int0tauhat} holds for some $C>0$.
\item $\ell_i > \ell_p$. Then, the set $\{\hat z: \hat H(\hat z) \in [1/4,2]\}$ is not compact, since it is invariant under global translations $q_v \mapsto q_v + \rho$, where $\rho$ is any vector in $\real^n$ independent of $v$. But when $\ell_i > \ell_p$, both the dynamics \eref{eq:limitsysteminterac} and the left-hand side of \eref{eq:int0tauhat} are invariant under such translations. Since the set $\{\hat z: \hat H(\hat z) \in [1/4,2]\}$ is compact modulo such translations, we obtain \eref{eq:int0tauhat} for some $C>0$.
\end{itemize}
This completes the proof.
\end{proof}

\begin{remark}\label{rem:waiveCondC4} Note that
  \cref{cond:limitingpotentials}
is only used to prove
  \eref{eq:int0tauhatavant}.
In fact, there are systems for which
  \eref{eq:int0tauhatavant}, and hence all the results in the present
  paper, hold without \cref{cond:limitingpotentials}. For example,
  consider a chain of $N$ oscillators with heat baths at both ends, \ie, $\CG = \{1, \dots, N\}$,
  $\CB = \{1, N\}$ and $\CE= \{(1, 2), (2, 3), \dots, (N-1,
  N)\}$. Let $\ell_i > \ell_p$, so that the limiting system only involves
the interaction potentials. Assume the left-hand side of \eref{eq:int0tauhatavant} is
  zero. Then, on the time interval $[0,\lambda]$, we have $\hat
  p_1(\sigma) \equiv 0$. But then, we must have $\nabla V_{(1,2),
    \infty}(\hat q_2(\sigma)-\hat q_1(\sigma)) \equiv 0$ (unlike in
  the general case, we know here that the constant is zero, since no
  other force acts on the first oscillator). As a consequence, since
  the only stationary point of $V_{(1,2), \infty}$ is at the origin
  (this is true of any coercive, homogeneous function without the need
  for \cref{cond:limitingpotentials}), we must have $\hat q_2(\sigma)
  \equiv \hat q_1(\sigma)$. But then we also have $p_2(\sigma) \equiv
  0$. Continuing like this along the chain, we eventually obtain that
  all the masses stand still, and conclude as above that $\hat  H(\hat
  z_0) = 0$. 
\end{remark}

\begin{remark}\label{rem:CondC4matters} \cref{cond:limitingpotentials} cannot be waived in general. We give here a counterexample in three\footnote{Everything in this example happens in the $Oxy$-plane. The third dimension is necessary only to ensure that the interaction potential is non-degenerate.} dimensions consisting of two oscillators 1 and 2, the first of which is coupled to a heat bath (see \fref{fig:C4matters}). We start with both oscillators at rest at position  $(0,1,0)$ and  $(4,2,0)$ respectively. We assume that $V_{21}(x,y,z) = \frac {y^4}4 + \frac {x^2z^2}2$ when $(x,y,z)\approx (4,1,0)$, and that $U_{1} = U_{2} = U$, where $U(x,y,z) = \frac {x^4+y^4+z^4}4$ when $(x,y,z)\approx(0,1,0)$ and $U(x,y,z) = \frac  { x^4}{64}-\frac  {y^4}{32} + \frac {z^4}4$  when $(x,y,z)\approx(4,2,0)$. The potentials above can be extended to non-degenerate, coercive, homogeneous functions of degree 4 (note in particular that $\hat H = H$). Moreover, \cref{cond:limitingpotentials} is not satisfied, as $\nabla V_{21}(4 + \varepsilon, 1, 0) = (0, 1, 0)$ for all small enough $|\varepsilon|$. In this setup, the initial energy of the system is non-zero, and there exists a finite time interval during which the following happens: oscillator 1 does not move at all (so that \eqref{eq:int0tauhatavant} fails to hold if $\lambda$ is small enough), and the position of oscillator 2 is $(x(t), 2, 0)$, for some decreasing $x(t)$. The interaction force $f_{21}$ (see \fref{fig:C4matters}) remains equal to $(0,1,0)$, and the pinning force $f_1$ acting on oscillator 1 remains equal to $(0,-1, 0)$. During the same time, the pinning force $f_2$ acting on oscillator 2 is equal to $(-\frac {x^3(t)}{16}, 1, 0)$, consistently with the motion described above.
\end{remark}

\begin{figure}[ht]
\begin{center}
\scalebox{1.0}{
\begin{tikzpicture}[segment amplitude=15pt, line width=0.6pt, scale=1.6] 

\tikzset{cross/.style={cross out, draw=black, minimum size=1.5*(#1-\pgflinewidth), inner sep=0pt, outer sep=0pt},
%default radius will be 1pt. 
cross/.default={3.5pt}}

\node (x0) at (0,0) {};
\node (x2) at (0,1) {};
\node (x3) at (4,2) {};
\node (x32) at (3,2) {};
%\node (x34) at (3,0.5) {};

\draw[gray,dashed] (0,2)--(4,2);
\draw[gray,dashed] (0,1)--(4,1);
\draw[gray,dashed] (0,0)--(4,0);

\draw[gray,dashed] (0,0)--(0,2);
\draw[gray,dashed] (1,0)--(1,2);
\draw[gray,dashed] (2,0)--(2,2);
\draw[gray,dashed] (3,0)--(3,2);
\draw[gray,dashed] (4,0)--(4,2);

% forces have a factor 0.8

 \draw[snake=coil, segment amplitude=1.5pt, segment length=2.5pt, draw=black!60] (x2)   -- (0,0) node[midway,right] {};

 \draw[snake=coil, segment amplitude=1.5pt, segment length=2.5pt, draw=black!60] (x3)   -- (0,0) node[midway,right] {};

 \draw[snake=coil, segment amplitude=3pt, segment length=4pt, draw=black!60] (x2)   -- (x3) node[midway,below] {};

\fill[white]  (2.9,1.87) rectangle (3.1,1.2);

\draw[->, thick] (x2)++(0,0) -- (0,0.2) node[midway, left] {$f_1$};
\draw[->, thick] (x2)++(0,0) -- (0,1.8) node[midway, left] {$f_{21}$};

\draw[->, thick] (x3)++(0,0) -- (4,1.2)node[midway, right] {$-f_{21}$};
\draw[->, thick, gray] (x32)++(0,0) -- (3,1.2);

\draw[->, thick, gray] (x32)++(0,0) -- (1.65,2.8);

\draw[->, thick] (x3)++(0,0) -- (0.8,2.8)node[pos=0.4, above] {$f_2$};

\draw[->, thick] (0,2.1) -- (0,2.7) node[pos=1, above] {$y$};

\draw[->, thick] (4.1,0) -- (4.7,0) node[pos=1, right] {$x$};

\draw (0,0) node[cross] {} node[below]  {$(0,0)$};

\path[draw=black, fill=gray] (x2) circle (1mm)node[right=6pt, below=1.5pt]{1};

\path[draw=black, fill=gray] (x3) circle (1mm)node[above=5pt, right=2pt]{2};

\path[draw=black!30, fill=gray!30] (x32) circle (1mm)node[below=5pt]{};

\end{tikzpicture}}
  \caption{Illustration of the example in \rref{rem:CondC4matters} in the $Oxy$-plane (in which the motion takes place).}
  \label{fig:C4matters}
  \end{center}
\end{figure}

Returning to the proof of \eqref{eq:toshowinteracdom}, we compare now the systems $(\tilde p, \tilde q)$ and $(\hat p, \hat q)$.
\begin{lemma}\label{lem:convInteracDom}There exist a constant $c>0$, a family of constants $(G_E)_{E > 0} $ satisfying $\lim_{E\to\infty} G_E = 0$, 
and a family of non-negative random variables $(\eta_E)_{E > 0} $ satisfying
\begin{equs}[eq:Rtails]
\P\{\eta_E \geq s\}	\leq e^{-\frac {s^2} 2}~,
\end{equs}
such that if $E$ is large enough,
\begin{equs}
\ind_{\tilde A}\sup_{\sigma\in[0, \lambda]} \|\tilde  z_\sigma - \hat z_\sigma \| \leq G_E + c E^{\frac 1{2 \ell_i} - \frac 34} \eta_E~.
\end{equs}
\end{lemma}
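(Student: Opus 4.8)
The plan is to apply the deterministic comparison estimate of \lref{lem:lemmecontinuite} to the pair of first-order systems \eref{eq:tildesysinteracR} and \eref{eq:limitsysteminterac}, viewing the difference of their drift vector fields and the noise as the perturbation $V(\cdot)$ in that lemma. First I would write both systems as ODEs (resp.\ SDEs) in $\R^{2n|\CG|}$, with common drift $\hat z \mapsto (\hat p, -\nabla_q \hat H(\hat z) - \delta_{\ell_i,2}\gamma \hat p)$ and identical initial condition $\tilde z_0 = \hat z_0$. The discrepancy between \eref{eq:tildesysinteracR} and \eref{eq:limitsysteminterac} then consists of three contributions: the remainder term $\tilde R_v(\tilde q)$, which by \eref{eq:tildeRtozero} is uniformly $o(1)$ on $\tilde K_E$ as $E\to\infty$; the difference of dissipation terms, which when $\ell_i > 2$ is $E^{1/\ell_i - 1/2}\gamma_v \tilde p_v$ and hence $O(E^{1/\ell_i - 1/2})$ uniformly on $\tilde K_E$ by \eref{eq:scalingvartilde1} (and is absent when $\ell_i = 2$, since then $\delta_{\ell_i,2} = 1$ matches); and the stochastic term $E^{1/(2\ell_i) - 3/4}\sqrt{2T_v\gamma_v}\, d\tilde W_v(\sigma)$. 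Integrating, on the event $\tilde A$ (where $\tilde z_\sigma \in \tilde K_E$ for $\sigma \in [0,\lambda]$) the cumulative deterministic perturbation is bounded by $\lambda$ times a quantity $\delta_E \to 0$, and the cumulative stochastic perturbation is $E^{1/(2\ell_i) - 3/4}\sqrt{2T_v\gamma_v}\, \tilde W_v(\sigma)$.

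The key subtlety is that \lref{lem:lemmecontinuite} requires a perturbation that is a fixed continuous path $V(t)$, whereas here the perturbation depends on the solution $\tilde z$ itself (through $\tilde R_v$, through the dissipation term, and through the fact that $\tilde z$ only stays in $\tilde K_E$ on the event $\tilde A$). I would handle this in the standard way: on $\tilde A$ one has the a priori bound $\|\nabla_{\tilde q_v}(\tilde H - \hat H)\| \leq \delta_E$ and $\|\tilde p_v\| \leq \tilde C$ pointwise along the trajectory, so the integrated drift discrepancy is genuinely bounded by $C\lambda(\delta_E + E^{1/\ell_i - 1/2})$ regardless of how it arose; thus $\ind_{\tilde A}$ times the total perturbation path is dominated, uniformly in $\sigma\in[0,\lambda]$, by $G_E^{(0)} + c E^{1/(2\ell_i)-3/4}\sup_{\sigma\le\lambda}\|\tilde W(\sigma)\|$ with $G_E^{(0)} := C\lambda(\delta_E + E^{1/\ell_i-1/2}) \to 0$. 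For the Gronwall constant $k_*$ in \lref{lem:lemmecontinuite} I would note that both $\tilde z_\sigma$ (on $\tilde A$) and $\hat z_\sigma$ stay in a compact neighbourhood of $\tilde K_E$ modulo translations — more precisely, $\hat z_\sigma$ stays in $\{\hat H \le C'\}$ for $\sigma\le\lambda$ by energy conservation of \eref{eq:limitsysteminterac} (or, when $\ell_i=2$, by the Lyapunov bound), which combined with \eref{eq:scalingvartilde1} places $\delta \hat q_e, \hat p_v$ in a fixed compact set — and $\nabla\hat H$ is locally Lipschitz there by \cref{cond:nearhom}, so $k_* \le L$ for some constant $L$ independent of $E$; hence $e^{k_*\lambda} \le e^{L\lambda}$ is bounded.

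Finally I would set $\eta_E = \sup_{\sigma\in[0,\lambda]}\|\tilde W(\sigma)\|$ (or a fixed multiple thereof chosen so the tail bound \eref{eq:Rtails} holds — the reflection principle gives $\P\{\sup_{\sigma\le\lambda}\|W(\sigma)\| \ge s\} \le 2n\,e^{-s^2/(2\lambda n)}$, so rescaling $\eta_E$ by a constant depending only on $n$ and $\lambda$ yields the stated $e^{-s^2/2}$), and set $G_E = e^{L\lambda} G_E^{(0)}$, $c' = e^{L\lambda} c$. Assembling \lref{lem:lemmecontinuite} gives $\ind_{\tilde A}\sup_{\sigma\le\lambda}\|\tilde z_\sigma - \hat z_\sigma\| \le e^{L\lambda}(G_E^{(0)} + cE^{1/(2\ell_i)-3/4}\eta_E) = G_E + c' E^{1/(2\ell_i)-3/4}\eta_E$, which is the claim. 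The main obstacle is the bookkeeping in the previous paragraph: making rigorous that the solution-dependent perturbation can legitimately be replaced by a dominating fixed path on the event $\tilde A$, and checking that the limiting system's trajectory indeed stays in a fixed compact set (modulo translations when $\ell_i > \ell_p$) over $[0,\lambda]$ so that the Lipschitz constant $k_*$ is controlled uniformly in $E$; the rest is routine.
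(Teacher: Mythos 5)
Your proposal follows exactly the paper's route: apply \lref{lem:lemmecontinuite} pathwise with $f$ equal to the drift of the limiting system \eref{eq:limitsysteminterac}, collect the remainder $\tilde R_v$, the surplus dissipation $E^{1/\ell_i - 1/2}\gamma_v\tilde p_v$ (when $\ell_i>2$), and the noise into the perturbation path $V(\cdot)$, bound $\sup_\sigma\|V(\sigma)\|$ on the event $\tilde A$ using $\tilde z_\sigma \in \tilde K_E$ and \eref{eq:tildeRtozero}, and control the Gronwall exponent $k_*$ by showing both trajectories stay where $\nabla\hat H$ is Lipschitz — distinguishing, as the paper does, the compact case $\ell_i=\ell_p$ from the case $\ell_i>\ell_p$ where one must work modulo translations and use that $\nabla\hat H$ depends only on $\delta q_e$ and $p_v$. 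Your choice of $\eta_E$ as a multiple of $\sup_{\sigma\le\lambda}\|\tilde W(\sigma)\|$ and your remark on the Gaussian tail are likewise what the paper does. This is essentially the paper's proof, with the bookkeeping spelled out in a bit more detail; it is correct.
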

\begin{proof}
The result immediately follows from  \lref{lem:lemmecontinuite} and \eref{eq:tildeRtozero}, provided we can show that there exists an absolute constant $k>0$ such that on the event $\tilde A$, we have $\|(\nabla  \hat H)(\tilde z_\sigma) - (\nabla  \hat H)(\hat z_\sigma) \| \leq k \|\tilde z_\sigma - \hat z_\sigma\|$ for all $0\leq \sigma \leq \lambda$ (we need not worry about the other terms in \eref{eq:limitsysteminterac}, as they are globally Lipschitz).
	As mentioned above, on the event $\tilde A$, we have $\tilde z_\sigma \in \tilde K_E$ for all $0\leq \sigma \leq \lambda$. Moreover, since $\frac d {d \sigma}\hat H(\hat z_\sigma) = - \delta_{\ell_i,2}\sum_{b\in \CB} \gamma_b \hat p_b^2 \leq 0$, we have by \eref{eq:hatHTildex0} that $\hat H(\hat z_\sigma) \leq 2$ for all $0\leq \sigma \leq \lambda$. We consider again two cases separately.

\begin{itemize}
	\item $\ell_i = \ell_p$. Then, there exists $R>0$ such that for all $E$ large enough, $\|\tilde z_\sigma\| \leq R$ and $\|\hat z_\sigma\| \leq R$ for all $0\leq \sigma \leq \lambda$. Since $\nabla \hat H$ is locally Lipschitz (by \cref{cond:nearhom}), the proof is complete.
\item $\ell_i > \ell_p$. Then, one can find $R>0$ such that $\|\delta \tilde q_e(\sigma)\|, \|\delta \hat q_e(\sigma)\| , \|\tilde p_v(\sigma)\|$ and $\|\hat p_v(\sigma)\|$ are bounded by $R$ for all   $0\leq \sigma \leq \lambda$. Since $\nabla \hat H$  is locally Lipschitz and depends only the $\delta q_e$ and $p_v$, the proof is complete.
\end{itemize}

Note that by  \lref{lem:lemmecontinuite}, the random variable $\eta_E$ can be chosen as a constant times $\sup_{\sigma \in [0,\lambda]}\|\tilde W(\sigma)\|$, where $\tilde W = (\tilde W_b)_{b\in \CB}$ is an $n|\CB|$-dimensional Brownian motion. While $\tilde W$ depends on $E$ pathwise, its distribution does not. Moreover, $\tilde W$ does not depend on  $z_0$ for a given energy $E$, and thus the same is true of $\eta_E$.
\end{proof}

Using \lref{lem:convInteracDom}, \pref{prop:dissipGCInteracDom} and the inequality $
x^2  \geq \frac{y^2}2 -	(x-y)^2
$, we obtain that there exist $c, c'>0$ such that on $\tilde A$ and if $E$ is large enough,
\begin{equs}
\int_0^\lambda \sum_{b\in \CB} \gamma_b \tilde p_b^2(\sigma) d\sigma &\geq c -  c'(G_E + E^{\frac 1{2 \ell_i} - \frac 34} \eta_E)^2  \geq c -  2c'G_E^2 - 2 c'E^{\frac 1{ \ell_i} - \frac 32} \eta_E^2~.
\end{equs}
Since $G_E\to 0$, we find for $E$ large enough that
\begin{equs}
\int_0^\lambda \sum_{b\in \CB} \gamma_b \tilde p_b^2(\sigma) d\sigma \geq \frac c 2 - 2c' E^{\frac 1{ \ell_i} - \frac 32} \eta_E^2~,
\end{equs}
so that
\begin{equs}
\P_{z_0}\left(\tilde A\cap \left\{\int_0^\lambda \sum_{b\in \CB} \gamma_b \tilde p_b^2(\sigma) d\sigma < \frac c 4\right\}\right) \leq \P\left\{\eta_E >E^{ \frac 34- \frac 1{2 \ell_i}} \sqrt{\frac { c} {  {8c'}}}\right\}~.
\end{equs}
Using now \eref{eq:Rtails} and the fact that $\frac 1{ \ell_i} - \frac 32 \leq -1$ completes the proof of \eref{eq:toshowinteracdom} (for an adequate choice of $\varepsilon$ and $B$).

Thus, if $\ell_i = \ell_p$, the proof of \pref{p:tauxU} is complete. If now $\ell_i > \ell_p$, then because of \aref{ass:interdom}, the conclusion of \pref{p:tauxU} is proved only in the case where $H_i(z_0) \geq H(z_0)/2$, and the next subsection is required.

\subsection{When the pinning dominates}\label{sec:pinningdom}

Recalling the decomposition of $H$ introduced in \eref{eq:separationHams}, we now make the following assumption.

\begin{assumption}\label{ass:pindom} We assume that $\ell_i > \ell_p$ and that
the initial condition $z_0 \in \Omega$ satisfies $H_c(z_0) > H(z_0)/2$.
\end{assumption}

We start by rescaling the system in much the same way as in \sref{sec:interacDom}, except that we now choose the natural scaling of the pinning.
More precisely, we introduce the rescaled time $\sigma = E^{1/2 -  1 /{\ell_p}}t$ and the variables
\begin{equs}
\tilde p_v(\sigma) &= E^{-\frac 12}p_v(E^{\frac 1 {\ell_p} - \frac 12}\sigma)~,\\
\tilde q_v(\sigma) &= E^{-\frac 1 {\ell_p}}q_v(E^{\frac 1 {\ell_p} - \frac 12}\sigma)~.
\end{equs}
 We consider times $t\in [0,\tau(z_0)] = [0,\lambda E^{\frac 1{\ell_p} - \frac 12}]$, or equivalently $\sigma \in [0, \lambda]$. As in \sref{sec:interacDom}, the analogue of \eref{eq:resultptauxU} in terms of the rescaled variables and time is
\begin{equs}[eq:toshowpinningdom]
\P_{z_0}\left(\tilde A \cap \left\{\int_0^\lambda \sum_{b\in \CB} \gamma_b \tilde p_b^2(\sigma) d\sigma < \varepsilon \lambda\right\} \right)  \leq e^{-BE}~.
\end{equs}

We let now
\begin{equ}
\tilde H(p,q) = \sum_{v\in \CG}\frac{p_v^2}2 +  \sum_{v\in \CG} E^{-1} U_v(E^{\frac 1 {\ell_p}}q_v)+ \sum_{e\in \CE}E^{-1}V_e(E^{\frac 1 {\ell_p}}\delta q_e)~,
\end{equ}
and obtain
\begin{equs}[eq:tildesysintpin]
d \tilde q_v&= \tilde p_v d \sigma~,\\
d \tilde p_v &= -(\nabla_{q_v} \tilde H)(\tilde p, \tilde q)d\sigma- E^{\frac 1{\ell_p} - \frac 12 } \gamma_v \tilde p_v d \sigma + E^{\frac 1{2 \ell_p} - \frac 34} \sqrt{2 T_v \gamma_v} d \tilde W_v(\sigma) ~,\quad 
\end{equs}
where $
\tilde W_v(\sigma) = E^{-\frac 1{2 \ell_p} + \frac 14} W_v(E^{\frac 1{\ell_p} - \frac 12}\sigma)
$ is again an $n$-dimensional Brownian motion.
We define, as in \sref{sec:interacDom},
\begin{equs}
\tilde K_E = \{\tilde z: \tilde H(\tilde z) \leq 4\}~,
\end{equs}
and obtain that on the event $\tilde A$, we have $\tilde z_\sigma \in \tilde K_E$ for all $0\leq \sigma \leq \lambda$.

By \eref{eq:scalingvar1}, there is some $\tilde C$ such that if $E$ is large enough, then for all $\tilde z \in \tilde K_E$,
\begin{equ}[eq:scalingvartildepin]
\|\tilde q_v\| \leq   \tilde C~,\quad
\|\delta \tilde  q_e\| \leq \tilde C  E^{\frac 1 {\ell_i} - \frac 1 {\ell_p}} ~,\quad
\|\tilde p_v\| \leq  \tilde C ~.
\end{equ}

Note that unlike in \sref{sec:interacDom}, the collection of sets $(\tilde K_E)_{E>0}$ is uniformly bounded.
In fact, the maximum allowed value of $\delta \tilde q_e$ becomes very small at high energy. 

\begin{remark}
	The difficulty is that the dynamics \eref{eq:tildesysintpin} does not converge to a nice limit when $E$ is large. Indeed, we have for any edge $e=(v,v')$ that
\begin{equ}
	\nabla_{\tilde q_v} (E^{-1} V_e(E^{\frac 1 {\ell_p}}\delta \tilde q_e)) \sim E^{\frac {\ell_i}{\ell_p}-1} \|\tilde \delta q_e\|^{\ell_i-1}~,
\end{equ}
which diverges pointwise when $E\to \infty$ if $\delta \tilde q_e \neq 0$. The supremum of this quantity over $\tilde K_E$ diverges like 
$E^{1 /{\ell_p} - 1/ {\ell_i}}$ (as can be seen by the scaling in \eref{eq:scalingvartildepin}).
The interpretation is that at high energy and under \aref{ass:pindom}, while the rescaled system behaves like a  ``tight molecule'' 
with vanishing relative distance $\tilde \delta q_e$ between the masses, the dynamics is still dominated by
the fast oscillations of the internal degrees of freedom. The way around this is to consider the center of mass coordinates.
\end{remark}

The center of mass coordinates in \eref{eq:defCDMcoord} are expressed, after rescaling, as
\begin{equs}
\tilde P(\sigma) &= E^{-\frac 12}P(E^{\frac 1 {\ell_p} - \frac 12}\sigma) = E^{-\frac 12} \sum_{v\in \CG} p_v(E^{\frac 1 {\ell_p} - \frac 12}\sigma)~,\\
\tilde Q(\sigma) &= E^{-\frac 1 {\ell_p}}Q(E^{\frac 1 {\ell_p} - \frac 12}\sigma) = \frac 1 {|\CG|} E^{-\frac 1 {\ell_p}} \sum_{v\in \CG} q_v(E^{\frac 1 {\ell_p} - \frac 12}\sigma)~.
\end{equs}
We denote by $(\tilde P_0, \tilde Q_0)$ the rescaled initial condition.
As the interaction forces cancel out, the dynamics we obtain is 
\begin{equs}[eq:PtQtpinfirst]
d \tilde Q &=  \frac 1 {|\CG|} \tilde P d \sigma~,\\
d \tilde P & =-\sum_{v\in \CG}E^{\frac 1{\ell_p}-1}\nabla U_v(E^{\frac 1 {\ell_p}}\tilde q_v) d\sigma  \\
& \qquad  \qquad- E^{\frac 1 {\ell_p} -\frac 12 }\sum_{b\in \CB} \gamma_b \tilde p_b d \sigma +  E^{\frac 1{2 \ell_p} - \frac 34} \sum_{b\in \CB}\sqrt{2 T_b \gamma_b} d \tilde W_b(\sigma) ~.
\end{equs}

Moreover, since the graph $(\CG, \CE)$ is connected by \cref{cond:spanTTk}, we have for all $\tilde z \in \tilde K_E$ that
\begin{equ}[eq:qtildemqtilde]
\max_{v\in \CG }\|\tilde Q-\tilde q_v\|	\leq \max_{(v, v') \in \CG^2} \|\tilde q_v - \tilde q_{v'}\|\leq \sum_{e\in \CE} \|\delta \tilde q_e\| \leq |\CE|\tilde C E^{\frac 1 {\ell_i} - \frac 1 {\ell_p}}  ~.
\end{equ}

Defining now
\begin{equ}\label{eq:Uinfty}
U_\infty (\tilde Q) = \sum_{v\in \CG}U_{v, \infty}(\tilde Q)~,
\end{equ}
we have
\begin{lemma}\label{lem:convergencepotpindom}
For all $0\leq |\alpha| \leq 1$,
\begin{equ}
\lim_{E\to\infty}\sup_{\tilde z \in \tilde K_E}\left| \sum_{v\in \CG}E^{\frac {|\alpha|}{\ell_p}-1}(D^\alpha U_v)(E^{\frac 1 {\ell_p}}\tilde q_v)-(D^\alpha U_\infty) (\tilde Q)\right| = 0~.
\end{equ}
\end{lemma}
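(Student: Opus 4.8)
The plan is to replace the individual rescaled positions $\tilde q_v$ by the rescaled center of mass $\tilde Q$. The crucial input is \eref{eq:qtildemqtilde}, which gives $\max_{v\in\CG}\|\tilde Q-\tilde q_v\|\le|\CE|\tilde C E^{\frac1{\ell_i}-\frac1{\ell_p}}$ for every $\tilde z\in\tilde K_E$; since $\ell_i>\ell_p$ by \aref{ass:pindom}, this tends to $0$ as $E\to\infty$. Because $\CG$ is finite and $U_\infty=\sum_{v\in\CG}U_{v,\infty}$, it suffices to prove that for each fixed $v\in\CG$ and each multi-index $\alpha$ with $0\le|\alpha|\le1$,
\begin{equ}
\lim_{E\to\infty}\sup_{\tilde z\in\tilde K_E}\Bigl|E^{\frac{|\alpha|}{\ell_p}-1}(D^\alpha U_v)(E^{\frac1{\ell_p}}\tilde q_v)-(D^\alpha U_{v,\infty})(\tilde Q)\Bigr|=0~.
\end{equ}

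First I would bound this quantity, for each $\tilde z\in\tilde K_E$, by
\begin{equ}
\Bigl|E^{\frac{|\alpha|}{\ell_p}-1}(D^\alpha U_v)(E^{\frac1{\ell_p}}\tilde q_v)-(D^\alpha U_{v,\infty})(\tilde q_v)\Bigr|+\Bigl|(D^\alpha U_{v,\infty})(\tilde q_v)-(D^\alpha U_{v,\infty})(\tilde Q)\Bigr|~.
\end{equ}
By \eref{eq:scalingvartildepin}, for $E$ large enough both $\tilde q_v$ and $\tilde Q=|\CG|^{-1}\sum_{w\in\CG}\tilde q_w$ lie in the fixed compact ball $K=\{x:\|x\|\le\tilde C\}$. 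The first term is then at most $\sup_{x\in K}\bigl|E^{\frac{|\alpha|}{\ell_p}-1}(D^\alpha U_v)(E^{\frac1{\ell_p}}x)-(D^\alpha U_{v,\infty})(x)\bigr|$, which tends to $0$ by \rref{replace510}(iii) applied to $U_v$ (nearly homogeneous of degree $\ell_p$ with limiting function $U_{v,\infty}$, by \cref{cond:nearhom}), since $E^{1/\ell_p}\to\infty$. For the second term, $D^\alpha U_{v,\infty}$ is continuous on $K$ — for $|\alpha|=0$ because $U_{v,\infty}$ is differentiable, and for $|\alpha|=1$ because $\nabla U_{v,\infty}$ is locally Lipschitz, both part of the definition of nearly homogeneous — hence uniformly continuous on $K$, so this term is bounded by the modulus of continuity of $D^\alpha U_{v,\infty}$ on $K$ evaluated at $\|\tilde q_v-\tilde Q\|\le|\CE|\tilde C E^{\frac1{\ell_i}-\frac1{\ell_p}}\to0$, uniformly in $\tilde z\in\tilde K_E$.

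The argument is soft, and the one point that genuinely needs the detour through $U_{v,\infty}$ rather than a direct mean-value estimate is the case $|\alpha|=1$: bounding $(D^\alpha U_v)(E^{1/\ell_p}\tilde q_v)-(D^\alpha U_v)(E^{1/\ell_p}\tilde Q)$ via the mean value theorem would require control of the \emph{second} derivatives of $U_v$, which is not supplied by the near-homogeneity hypothesis (which bounds only derivatives up to order one). Comparing instead with $D^\alpha U_{v,\infty}$ at the un-rescaled points $\tilde q_v,\tilde Q\in K$ requires only uniform continuity on a fixed compact set, and channels all the large-scale behaviour into \rref{replace510}(iii). Summing the two estimates over the finitely many $v\in\CG$ and recalling $U_\infty=\sum_{v\in\CG}U_{v,\infty}$ completes the proof.
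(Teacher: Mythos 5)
Your proof is correct and follows essentially the same route as the paper: the same triangle-inequality split into (i) the convergence of the rescaled $D^\alpha U_v$ to $D^\alpha U_{v,\infty}$ at $\tilde q_v$ via \rref{replace510}(iii) and \eref{eq:scalingvartildepin}, and (ii) the comparison of $D^\alpha U_{v,\infty}$ at $\tilde q_v$ versus $\tilde Q$ via \eref{eq:qtildemqtilde}. The only cosmetic difference is that you invoke uniform continuity of $D^\alpha U_{v,\infty}$ on the fixed ball where the paper uses its Lipschitz property there; both are immediate from \cref{cond:nearhom} and yield the same conclusion.
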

\begin{proof}
By \cref{cond:nearhom}, \eref{eq:scalingvartildepin} and \rref{replace510} (iii) we have, for all $v\in \CG$,
\begin{equ}[eq:premierbornelem217]
\lim_{E\to\infty}\sup_{\tilde z \in \tilde K_E}\left|E^{\frac {|\alpha|}{\ell_p}-1}(D^\alpha U_v)(E^{\frac 1 {\ell_p}}\tilde q_v) - (D^\alpha U_{v,\infty})(\tilde q_v)\right| = 0~.
\end{equ}
Moreover, by \cref{cond:nearhom}, there exists $c>0$ such that $D^\alpha U_{v, \infty}$ is $c$-Lipschitz on the ball $B(0, \tilde C)\subset \real^n$, which  by \eref{eq:scalingvartildepin} contains $\tilde q_v$ and $\tilde Q$ for all $\tilde z\in \tilde K_E$. Thus,
\begin{equs}
\sup_{\tilde z \in \tilde K_E}\left|(D^\alpha U_{v,\infty})(\tilde q_v)-(D^\alpha U_{v,\infty})(\tilde Q)\right| \leq c\sup_{\tilde z \in \tilde K_E}\|\tilde Q - \tilde q_v\| ~.
\end{equs}
By \eref{eq:qtildemqtilde}, the right-hand side vanishes when $E\to \infty$. This and \eref{eq:premierbornelem217} imply that
\begin{equ}
\lim_{E\to\infty}\sup_{\tilde z \in \tilde K_E}\left|E^{\frac {|\alpha|}{\ell_p}-1}(D^\alpha U_v)(E^{\frac 1 {\ell_p}}\tilde q_v) - (D^\alpha U_{v,\infty})(\tilde Q)\right| = 0~.
\end{equ}
By the definition of $U_\infty$ and the triangle inequality, the proof is complete.
\end{proof}

It is then natural to consider the limiting system
\begin{equs}[eq:limitsystempinning]
d \hat Q &= \frac 1 {|\CG|}\hat P d \sigma~,\\
d \hat P & =-\nabla U_\infty(\hat Q)d \sigma~,
\end{equs}
which corresponds to the Hamiltonian
\begin{equ}
\hat H(\hat P,\hat Q) = \frac {\hat P^2} {2|\CG|}   + U_\infty (\hat Q)~.
\end{equ}

We can rewrite \eref{eq:PtQtpinfirst} as
\begin{equs}[eq:prelimitsystempinning]
d \tilde Q &=  \frac 1 {|\CG|}\tilde P d \sigma~,\\
d \tilde P & =-\nabla U_\infty(\tilde Q)d \sigma + \tilde R(\tilde z) d\sigma- E^{\frac 1 {\ell_p} -\frac 12 }\sum_{b\in \CB} \gamma_b \tilde p_b d \sigma \\
& \qquad  +  E^{\frac 1{2 \ell_p} - \frac 34} \sum_{b\in \CB}\sqrt{2 T_b \gamma_b} d \tilde W_b(\sigma)~,
\end{equs}
where
\begin{equ}
\tilde R(\tilde z) = \nabla  U_\infty(\tilde Q)- E^{\frac {1}{\ell_p}-1} \sum_{v\in \CG} (\nabla  U_v)(E^{\frac 1 {\ell_p}}\tilde q_v)~,
\end{equ}
which by \lref{lem:convergencepotpindom} satisfies
\begin{equ}[eq:tildeRtozero pinning]
\lim_{E\to\infty}	\sup_{\tilde z\in \tilde K_E}\| \tilde R(\tilde z) \| \to 0~.
\end{equ}

Note that the dynamics \eref{eq:prelimitsystempinning} does {\em not} converge to \eref{eq:limitsystempinning} when $\ell_p = 2$, as the dissipative terms in \eref{eq:prelimitsystempinning} remain in the limit. This will complicate the argument slightly (see the proof of Lemma~\ref{lem:BvarepsilonX}).

As a consequence of \lref{lem:convergencepotpindom}, and since $H_c(z_0) > H(z_0)/2$, we have when $E$ is large enough that
\begin{equ}
\hat H(\tilde P_0, \tilde Q_0) \in [1/4, 2]~.
\end{equ}

\begin{proposition}\label{prop:limitDomBoundedBelow}
There is a constant $C>0$ such that for every initial condition $(\hat P_0,\hat Q_0)$ with $\hat H(\hat P_0, \hat Q_0) \in [1/4, 2]$, the solution of \eref{eq:limitsystempinning} satisfies
\begin{equs}[eq:int0tauhatpin]
\sup_{\sigma \in [0, \lambda]} \|\hat Q(\sigma)-\hat Q(0)\|\geq C~.
\end{equs}
\end{proposition}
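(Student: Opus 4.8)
The plan is to use the same compactness-plus-continuous-dependence strategy as in the proof of \pref{prop:dissipGCInteracDom}, but the situation is now simpler: since the limiting pinning potential $U_\infty$ of \eref{eq:Uinfty} is coercive, the relevant energy shell is genuinely compact and no quotient ``modulo translations'' is needed.

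First I would collect the elementary properties of the flow \eref{eq:limitsystempinning}. The energy $\hat H(\hat P,\hat Q) = \hat P^2/(2|\CG|) + U_\infty(\hat Q)$ is conserved along solutions. By \cref{cond:nearhom} each $U_{v,\infty}$ is homogeneous of degree $\ell_p \ge 2$ and strictly positive on the unit sphere, so $U_\infty = \sum_{v\in\CG} U_{v,\infty}$ satisfies $U_\infty(x) \ge c_0 \|x\|^{\ell_p}$ with $c_0 \eqdef \inf_{\|x\|=1} U_\infty(x) > 0$. Hence $\{\hat z : \hat H(\hat z) \le 2\}$ is compact; together with the fact that $\nabla U_\infty$ is locally Lipschitz (again by \cref{cond:nearhom}, being a finite sum of locally Lipschitz gradients), this yields global existence and uniqueness of the flow, and continuous dependence of $\sigma\mapsto(\hat P(\sigma),\hat Q(\sigma))$ on the initial condition, uniformly on the compact interval $[0,\lambda]$. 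In particular the energy shell $\mathcal S \eqdef \{(\hat P,\hat Q) : \hat H(\hat P,\hat Q) \in [1/4,2]\}$ is compact, and the functional $\Phi(\hat P_0,\hat Q_0) \eqdef \sup_{\sigma\in[0,\lambda]} \|\hat Q(\sigma) - \hat Q(0)\|$ is continuous on $\mathcal S$.

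The crux is to show that $\Phi > 0$ on $\mathcal S$; this is the pinning-dominated analogue of \eref{eq:int0tauhatavant}. Arguing by contradiction, if $\Phi(\hat P_0,\hat Q_0) = 0$ then $\hat Q(\sigma) \equiv \hat Q_0$ on $[0,\lambda]$, so $\hat P \equiv |\CG|\,\dot{\hat Q} \equiv 0$ and therefore $\nabla U_\infty(\hat Q_0) = -\dot{\hat P} \equiv 0$. Euler's identity $\hat Q_0 \cdot \nabla U_\infty(\hat Q_0) = \ell_p\, U_\infty(\hat Q_0)$ (valid for differentiable homogeneous functions of any real degree) then forces $U_\infty(\hat Q_0) = 0$, whence $\hat Q_0 = 0$ by the coercivity bound above; combined with $\hat P_0 = 0$ this gives $\hat H(\hat P_0,\hat Q_0) = U_\infty(0) = 0$, contradicting $(\hat P_0,\hat Q_0) \in \mathcal S$. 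Finally, a continuous strictly positive function on the compact set $\mathcal S$ attains a positive minimum, and setting $C \eqdef \inf_{\mathcal S} \Phi > 0$ gives \eref{eq:int0tauhatpin}.

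The only delicate step is the non-vanishing of $\Phi$, and within it the use of the coercivity of $U_\infty$ to exclude a non-zero critical point of the pinning; everything else is standard ODE theory (existence, uniqueness, continuous dependence) and compactness. It is worth noting that, in contrast to the interaction-dominated case, the full strength of \cref{cond:spanTTk} is not used here: only connectedness of the graph is needed, and that was already invoked when passing to the centre-of-mass description in \eref{eq:qtildemqtilde}.
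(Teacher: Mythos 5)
Your proof is correct and follows essentially the same route as the paper's: compactness of the energy shell, (lower semi-)continuity of $(\hat P_0,\hat Q_0)\mapsto\sup_{\sigma\in[0,\lambda]}\|\hat Q(\sigma)-\hat Q(0)\|$, and positivity of that functional on the shell. You merely expand the paper's ``obviously strictly positive'' step via Euler's identity for the homogeneous potential $U_\infty$, which is a welcome clarification rather than a different strategy.
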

\begin{proof}
The left-hand side of \eref{eq:int0tauhatpin} is obviously strictly positive provided that $\hat H(\hat P_0, \hat Q_0) > 0$. Moreover, the map $(\hat P_0, \hat Q_0)\mapsto \sup_{\sigma \in [0, \lambda]} \|\hat Q(\sigma)-\hat Q(0)\|$ is lower semicontinuous (as the supremum of a family of continuous functions). Thus, since the set $\{(\hat P_0, \hat Q_0) \in \real^{2n} :\hat H(\hat P_0, \hat Q_0) \in [1/4, 2]\}$ is compact, the proof is complete.
\end{proof}

We introduce the random variable
\begin{equs}
X = \sup_{b\in \CB} \int_0^\lambda \tilde p_b^2 d \sigma~.
\end{equs}

\begin{lemma}\label{lem:BvarepsilonX}
There exist constants $B,\varepsilon > 0$ such that if $E$ is large enough,
\begin{equs}
\P_{z_0}\left(\tilde A \cap \{X  < \varepsilon \}\right) \leq e^{-BE}~.
\end{equs}	
\end{lemma}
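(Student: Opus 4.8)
The statement to prove is Lemma~\ref{lem:BvarepsilonX}: on the event $\tilde A$, the quantity $X = \sup_{b\in\CB}\int_0^\lambda \tilde p_b^2\,d\sigma$ is bounded below by a constant $\varepsilon$ with probability $1 - O(e^{-BE})$. The plan is to run the same scheme as in \sref{sec:interacDom}: compare the rescaled center-of-mass dynamics \eqref{eq:prelimitsystempinning} with the deterministic limit \eqref{eq:limitsystempinning} via \lref{lem:lemmecontinuite}, use \pref{prop:limitDomBoundedBelow} to get a lower bound on the displacement of $\hat Q$, and then transfer this into a lower bound on the dissipation.

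First I would apply \lref{lem:lemmecontinuite} to the pair $(\tilde P,\tilde Q)$ solving \eqref{eq:prelimitsystempinning} and $(\hat P,\hat Q)$ solving \eqref{eq:limitsystempinning}, with the same initial condition. The "driving noise" $V(\sigma)$ in the notation of that lemma collects three perturbation terms: the error $\tilde R(\tilde z)$, which is $o(1)$ uniformly on $\tilde K_E$ by \eqref{eq:tildeRtozero pinning}; the Brownian term, of size $E^{1/(2\ell_p)-3/4}$ times $\sup_{\sigma\le\lambda}\|\tilde W(\sigma)\|$; and — this is the new wrinkle compared to \sref{sec:interacDom} — the dissipative term $E^{1/\ell_p-1/2}\sum_{b\in\CB}\gamma_b\tilde p_b$, which does \emph{not} vanish when $\ell_p = 2$. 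To handle it, note that $\int_0^\sigma \sum_b \gamma_b \tilde p_b\,ds$ has norm at most $\int_0^\lambda \sum_b\gamma_b|\tilde p_b|\,ds \le C\sqrt{\lambda}\,\sqrt{\sum_b\gamma_b\int_0^\lambda \tilde p_b^2\,ds} \le C'\sqrt{X}$ by Cauchy--Schwarz (using $\tau(z_0)=\lambda E^{1/\ell_p-1/2}$ and the $\ell_p=2$ constraint $\lambda\le t_*/2$, so the prefactor $E^{1/\ell_p-1/2}=1$ exactly in that case, and is vanishing when $\ell_p>2$). So on $\tilde A$ one gets a bound of the shape
\begin{equ}
\sup_{\sigma\in[0,\lambda]}\|\tilde Q(\sigma)-\hat Q(\sigma)\| \le G_E + c\,E^{\frac{1}{2\ell_p}-\frac34}\eta_E + c''\sqrt{X}\,,
\end{equ}
with $G_E\to 0$, $\eta_E$ sub-Gaussian as in \eqref{eq:Rtails}, and $c''$ small (it carries a factor that is $O(1)$ when $\ell_p=2$ and $o(1)$ otherwise; the Lipschitz constant $k_*$ of $\nabla U_\infty$ on the relevant bounded set enters here and is controlled since $\tilde z_\sigma\in\tilde K_E$, whence $\tilde Q$ stays in a fixed ball by \eqref{eq:scalingvartildepin}, and $\hat H(\hat Q)\le 2$ keeps $\hat Q$ in a fixed ball too). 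One has to be a little careful that $c''$ is strictly less than the geometric constant produced next; that is arranged by taking $\lambda$ small in the $\ell_p=2$ case (the Lipschitz/Gronwall factor $e^{k_*\lambda}$ then stays close to $1$, and the dissipative contribution enters with a small multiplicative constant), and is automatic for $\ell_p>2$.

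Next, by \pref{prop:limitDomBoundedBelow}, since $\hat H(\tilde P_0,\tilde Q_0)\in[1/4,2]$ for $E$ large, we have $\sup_{\sigma\le\lambda}\|\hat Q(\sigma)-\hat Q(0)\|\ge C$. Combined with the displacement comparison and $\tilde Q(0)=\hat Q(0)$, on $\tilde A$ we get $\sup_{\sigma\le\lambda}\|\tilde Q(\sigma)-\tilde Q(0)\| \ge C - G_E - cE^{\frac{1}{2\ell_p}-\frac34}\eta_E - c''\sqrt X$. On the other hand, from the first line of \eqref{eq:prelimitsystempinning}, $\|\tilde Q(\sigma)-\tilde Q(0)\| \le \frac1{|\CG|}\int_0^\lambda \|\tilde P\|\,d\sigma \le \frac{\lambda}{|\CG|}\sup_{\sigma\le\lambda}\|\tilde P(\sigma)\|$, and $\|\tilde P\|\le\sum_v\|\tilde p_v\|\le |\CG|\tilde C$ on $\tilde K_E$ — but that only gives a crude bound. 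The cleaner route is: on $\tilde A$, $\|\tilde P(\sigma)\|^2 \le 2|\CG|\,\tilde H(\tilde z_\sigma)\le 8|\CG|$, but more importantly, since $H_c$ is controlled we can relate $\|\tilde P\|$ to the dissipation. Actually the right comparison is directly with $X$: write $\tilde P(\sigma)-\tilde P(0)$ via \eqref{eq:prelimitsystempinning}; the point of \pref{prop:limitDomBoundedBelow} is that a nonzero \emph{displacement} forces a nonzero velocity somewhere, which by the structure of the limiting Hamiltonian forces energy exchange with the baths. Concretely, if $X<\varepsilon$ then all $\tilde p_b$ are small in $L^2$, hence (after the comparison) $\hat p_b$ are small in $L^2$ on $[0,\lambda]$; but the limiting dynamics \eqref{eq:limitsystempinning} with $\hat p_b\equiv 0$ on all baths and $\CB$ controlling $(\CG,\CE)$ — here I would re-run the propagation argument from \pref{prop:dissipGCInteracDom}: with no bath velocities, the forces on each bath vertex are constant, coercivity of $U_{v,\infty}$ plus connectivity force all velocities to vanish, so $\hat Q$ is constant, contradicting $\|\hat Q(\lambda)-\hat Q(0)\|\ge C$. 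Quantifying this contradiction via the continuous dependence of the solution of \eqref{eq:limitsystempinning} on its data and a compactness argument over $\{\hat H\in[1/4,2]\}$ (compact since we are now in center-of-mass variables, $\real^{2n}$) yields a constant $\varepsilon_0>0$ such that $\int_0^\lambda\sum_b\gamma_b\hat p_b^2\,d\sigma\ge\varepsilon_0$, hence $X\ge\varepsilon_0/(|\CB|\max_b\gamma_b)=:2\varepsilon$ for the limiting system. Transferring back through the displacement comparison: on $\tilde A\cap\{X<\varepsilon\}$ the perturbation $G_E+cE^{\frac1{2\ell_p}-\frac34}\eta_E+c''\sqrt\varepsilon$ must exceed a fixed positive gap, which for $E$ large and $\varepsilon$ small forces $\eta_E\ge E^{3/4-1/(2\ell_p)}\cdot(\text{const})$, an event of probability $\le e^{-\frac12 E^{3/2-1/\ell_p}\cdot(\text{const})}\le e^{-BE}$ since $3/2-1/\ell_p\ge 1$.

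The main obstacle is the $\ell_p=2$ case, where the dissipative term in \eqref{eq:prelimitsystempinning} survives in the limit and the limiting displacement estimate \pref{prop:limitDomBoundedBelow} is stated for the \emph{conservative} system \eqref{eq:limitsystempinning}. One must either absorb the surviving dissipative term into the comparison as I sketched (keeping $\lambda\le t_*/2$ small enough that the Gronwall factor and the $\sqrt X$ coefficient stay subordinate to the geometric gap $C$), or — cleaner — argue directly that $\int_0^\lambda\sum_b\gamma_b\hat p_b^2\,d\sigma$ is bounded below for the limiting system \emph{including} the dissipative terms, which is exactly the $\ell_p=2$ analogue of \pref{prop:dissipGCInteracDom}; I would in fact prove a version of \pref{prop:limitDomBoundedBelow} adapted to the dissipative limiting dynamics and feed that in. Everything else is a routine repetition of the machinery of \sref{sec:interacDom}: the same sub-Gaussian tail \eqref{eq:Rtails}, the same $\eta_E$ built from $\sup_{\sigma\le\lambda}\|\tilde W(\sigma)\|$ (independent of $z_0$ at fixed $E$), and the same final conversion of "$\eta_E$ large" into $e^{-BE}$.
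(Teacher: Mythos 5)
There is a genuine gap. You correctly identify the Cauchy--Schwarz bound $\int_0^\lambda\|\tilde p_b\|\,d\sigma\le c\sqrt X$, the Gronwall comparison of $(\tilde P,\tilde Q)$ with the limiting CoM system, and the conversion of ``$\eta_E$ large'' into $e^{-BE}$. What you miss is the link between the CoM displacement lower bound of \pref{prop:limitDomBoundedBelow} and the quantity $X$. You start in the right direction (``$\|\tilde Q(\sigma)-\tilde Q(0)\|\le \ldots$, but that only gives a crude bound'') and then abandon it --- but this \emph{is} the right route, and the missing ingredient is the tight-molecule estimate \eref{eq:qtildemqtilde}: on $\tilde K_E$ one has $\|\tilde q_b-\tilde Q\|\le cE^{1/\ell_i-1/\ell_p}\to0$, and since $\tilde q_b$ moves at most $c\sqrt X$ over $[0,\lambda]$ (by integrating $d\tilde q_b=\tilde p_b\,d\sigma$ against Cauchy--Schwarz), one gets $\sup_\sigma\|\tilde Q(\sigma)-\tilde Q(0)\|\le cE^{1/\ell_i-1/\ell_p}+c\sqrt X$. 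Combined with the Gronwall comparison and \pref{prop:limitDomBoundedBelow}, this forces $\eta_E$ to be large on $\tilde A\cap\{X<\varepsilon\}$.

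Your alternative ``Route B'' --- rerunning the propagation argument of \pref{prop:dissipGCInteracDom} to get $\int_0^\lambda\sum_b\gamma_b\hat p_b^2\,d\sigma\ge\varepsilon_0$ for the limiting dynamics --- is a dead end: the limiting system \eref{eq:limitsystempinning} lives on the center-of-mass variables $(\hat P,\hat Q)\in\real^{2n}$ only; there are no microscopic $\hat p_b$ there. Nor can you go back to a full-system limit at the pinning time scale, because the rescaled interaction forces diverge (that is precisely why the paper passes to CoM coordinates, where they cancel). Finally, your handling of the surviving dissipation at $\ell_p=2$ by ``taking $\lambda$ small'' is not obviously safe, since the constant $C$ in \pref{prop:limitDomBoundedBelow} degrades with $\lambda$; the paper instead keeps $\lambda$ fixed and notes that the offending term enters as $c\sqrt X(1+E^{1/\ell_p-1/2})$ with a bounded coefficient, so it is made small by choosing $\varepsilon$ small rather than $\lambda$.
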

\begin{proof}
In this proof, the constant $c>0$ may be different each time it appears, and is not allowed to depend on $E$, provided $E$ is large enough.
First, observe that for all $b\in \CB$, H\"older's inequality implies that
\begin{equs}[eq:byholder]
	 \int_0^\lambda \|\tilde p_b\|d\sigma \leq c\sqrt{X}~.
\end{equs}

Next, by \eref{eq:scalingvartildepin} and the fact that $\hat H$ is conserved by \eref{eq:limitsystempinning}, we have that on the event $\tilde A$, there is some $R>0$ such that $\|\hat Q(\sigma)\|, \|\tilde Q(\sigma)\|, \|\hat P(\sigma)\|$ and $ \|\tilde P(\sigma)\|$ are bounded by $R$ for all $0\leq \sigma \leq \lambda$. As $\nabla \hat H$ is locally Lipschitz by \cref{cond:nearhom}, there exists $k>0$ such that on the event $\tilde A$, we have for all $0\leq \sigma \leq \lambda$ that
\begin{equ}
\|(\nabla  \hat H)(\hat P(\sigma), \hat Q(\sigma)) - (\nabla  \hat H)( \tilde P(\sigma), \tilde Q(\sigma)) \| \leq k \|(\hat P(\sigma), \hat Q(\sigma))- (\tilde P(\sigma), \tilde Q(\sigma))\|	~.
\end{equ}
As a consequence, we can apply \lref{lem:lemmecontinuite} to  $(\hat P, \hat Q)$ and $(\tilde P, \tilde Q)$ to obtain that
\begin{equs}[eq:resultatdiffpetitere]
\ind_{\tilde A}\sup_{\sigma\in[0, \lambda]} \|\tilde  Q(\sigma) - \hat Q(\sigma) \| \leq c G_E + cE^{\frac 1 {\ell_p} -\frac 12 }\sqrt{X}+ c E^{\frac 1{2 \ell_p} - \frac 34} \eta_E~,
\end{equs}
where $\lim_{E\to\infty}G_E = 0$ and where $\eta_E$ is a non-negative random variable satisfying \eref{eq:Rtails}. 

Pick now any $b\in \CB$. By \eref{eq:tildesysintpin} and \eref{eq:byholder}, we have
\begin{equs}[eq:supsqtv]
	\sup_{\sigma\in [0,\lambda]}\|\tilde  q_b(\sigma) - \tilde q_b(0) \| \leq c \int_0^\lambda \|\tilde p_b\|d\sigma \leq c\sqrt{X}~.
\end{equs}
Moreover, by \eref{eq:qtildemqtilde}, we also find that on $\tilde A$,
\begin{equs}[eq:supqtildeQtildeA1]
	\sup_{\sigma\in [0,\lambda]}\|\tilde  q_b(\sigma) - \tilde Q(\sigma) \| \leq c E^{\frac 1 {\ell_i}-\frac 1 {\ell_p}} ~.
\end{equs}
From \eref{eq:supsqtv} and \eref{eq:supqtildeQtildeA1} we deduce that
\begin{equs}[eq:surQtilde0sigma]
	\sup_{\sigma\in [0,\lambda]}\|\tilde  Q(\sigma) - \tilde Q(0) \| \leq c E^{\frac 1 {\ell_i}-\frac 1 {\ell_p}} + c \sqrt{X} ~.
\end{equs}
This together with \eref{eq:resultatdiffpetitere} implies that on $\tilde A$,
\begin{equs}
\sup_{\sigma\in [0, \lambda]} \|\hat Q(\sigma)-\hat Q(0)\|\leq c E^{\frac 1 {\ell_i}-\frac 1 {\ell_p}} +c G_E + c\sqrt{X}(1+E^{\frac 1 {\ell_p} -\frac 12 })+ c E^{\frac 1{2 \ell_p} - \frac 34} \eta_E~.
\end{equs}
But by \pref{prop:limitDomBoundedBelow}, the left-hand side is bounded below by $ C>0$. Thus,
\begin{equs}
cE^{\frac 1{2 \ell_p} - \frac 34} \eta_E \geq C 	-c E^{\frac 1 {\ell_i}-\frac 1 {\ell_p}} -c G_E -c\sqrt{X}(1+E^{\frac 1 {\ell_p} -\frac 12 })~.
\end{equs}
We next choose $\varepsilon > 0$ small enough so that for all $E$ large enough,
\begin{equs}
c E^{\frac 1 {\ell_i}-\frac 1 {\ell_p}} +c G_E + c\sqrt{\varepsilon}(1+E^{\frac 1 {\ell_p} -\frac 12 }) \leq \frac C 2~.
\end{equs}
Then, on $\tilde A$ and for $E$ large enough,  $X<\varepsilon$ implies $cE^{\frac 1{2 \ell_p} - \frac 34} \eta_E > C/2$, so that
\begin{equs}
\P_{z_0}(\tilde A \cap \{X < \varepsilon \}) \leq \P_{z_0}\left\{cE^{\frac 1{2 \ell_p} - \frac 34} \eta_E > C/2\right\} \leq e^{-cE^{\frac 32 - \frac 1{\ell_p}}}~.
\end{equs}
	Since ${\frac 32 - \frac 1{\ell_p}} \geq 1$, the proof is complete.
\end{proof}

By \lref{lem:BvarepsilonX}, and since
\begin{equs}
 \sum_{b\in \CB} \int_0^\lambda \gamma_b  \tilde p_b^2 d \sigma \geq X\inf_{b\in \CB}\gamma_b ~,
\end{equs}
we obtain \eref{eq:toshowpinningdom} (for some $\varepsilon>0$ possibly smaller than that of \lref{lem:BvarepsilonX}). Thus, the proof of \pref{p:tauxU} is complete.

\begin{acknowledge}This work was initiated in the pleasant atmosphere of the Fields
Institute, Toronto.
Financial support was kindly provided by the ERC Advanced grant  290843:BRIDGES (JPE),
the Swiss National Science Foundation Grant 165057 (NC),  the National Science Foundation (NSF)  grant DMS-1515712 (LRB), 
the Leverhulme Trust (MH), as well as the ERC Consolidator grant 615897:CRITICAL (MH).

\end{acknowledge}

\bibliographystyle{Martin}

\bibliography{refs}

\end{document}